\let\csname equation*\endcsname\relax
\let\csname endequation*\endcsname\relax
\theoremstyle{plain}
\newtheorem{theorem}{Theorem}
\newtheorem{lem}[theorem]{Lemma}
\newtheorem*{corollary}{Corollary}
\definecolor{myurlcolor}{rgb}{0,0,0.7}
\definecolor{myrefcolor}{rgb}{0.8,0,0}
\newcommand{\mainmatter}{%
  \setcounter{footnote}{0}%
  \patchcmd{\@makefntext}{\fnsymbol}{\arabic}{}{}%
  \patchcmd{\@thefnmark}{\fnsymbol}{\arabic}{}{}%
  \def\@makefnmark{\textsuperscript{\arabic{footnote}}}%
}
\DeclarePairedDelimiterX\expectedadap[2]{\langle}{\rangle}{#1}
\DeclarePairedDelimiterX\braketadap[2]{\langle}{\rangle}{#1 \delimsize\vert #2}
\newcommand*\oline[1]{%
   \vbox{%
     \hrule height 0.5pt
     \kern0.25ex
     \hbox{%
       \kern-0.2em
       \ifmmode#1\else\ensuremath{#1}\fi
       \kern0em
     }}}
\newcommand{\tinyspace}{\mspace{1mu}}
\renewcommand{\vec}{\operatorname{vec}}
\newcommand{\trace}{\mathrm{Tr}}
\newcommand{\I}{\mathds{1}}
\newcommand{\E}{\mathds{E}}
\newcommand{\D}{\mathcal{D}}
\newcommand{\HOp}{\mathcal{H}}
\newcommand{\Trans}{\mathrm{T}}
\newcommand{\Unitary}{\mathcal{U}}
\renewcommand{\S}{\mathcal{S}}
\newcommand{\trm}[1]{\textrm{#1}}
\newcommand{\mrm}[1]{\mathrm{#1}}
\renewcommand{\tr}[1]{\mathrm{Tr}\!\left\{#1\right\}}
\newcommand*\bigcdot{\mathpalette\bigcdot@{.5}}
\newcommand*\bigcdot@[2]{\mathbin{\vcenter{\hbox{\scalebox{#2}{$\m@th#1\bullet$}}}}}
\newcommand{\eqrefs}[2]{(\ref{#1}-\ref{#2})}
\renewcommand{\vec}[1]{\boldsymbol{#1}} 
\newcommand{\vecvar}[1]{\mathbf{#1}} 
\renewcommand{\mat}[1]{\mathbf{#1}} 
\newcommand{\cc}{\textrm{(c)}} 
\newcommand{\Jqvec}[1]{\hat{\vec{#1\phantom{i}}}\hspace{-.2cm}} 
\newcommand{\meanA}[1]{\E\!\left[#1\right]} 
\newcommand{\meanB}[2]{\E_{#1}\!\left[#2\right]} 
\newcommand*{\mean}[1]{%
  \@ifnextchar\bgroup
    {\meanB{#1}}
    {\meanA{#1}}%
}
\newcommand{\var}[1]{\mathrm{Var}\!\left[#1\right]}
\renewcommand{\MSE}[1]{\Delta^2#1}
\newcommand{\est}[1]{\tilde{#1}}
\newcommand{\F}{F}
\newcommand{\gmr}{\upgamma} 
\newcommand{\larmorfreq}{\omega_\text{L}} 
\newcommand{\Tcoh}{T_\text{coh}} 
\begin{document}

\title{Noisy atomic magnetometry in real time}

\author{J\'{u}lia Amor\'{o}s-Binefa and Jan Ko\l{}ody\'{n}ski}
\address{Centre for Quantum Optical Technologies, Centre of New Technologies, University of Warsaw, Banacha 2c, 02-097 Warsaw, Poland}
\ead{j.amoros-binefa@cent.uw.edu.pl \textrm{and} j.kolodynski@cent.uw.edu.pl}
\vspace{10pt}
\begin{indented}
\item[]\today
\end{indented}

\begin{abstract}
Continuously monitored atomic spin-ensembles allow, in principle, for real-time sensing of external magnetic fields beyond classical limits. Within the linear-Gaussian regime, thanks to the phenomenon of measurement-induced spin-squeezing, they attain a quantum-enhanced scaling of sensitivity both as a function of time, $t$, and the number of atoms involved, $N$. In our work, we rigorously study how such conclusions based on Kalman filtering methods change when inevitable imperfections are taken into account:~in the form of collective noise, as well as stochastic fluctuations of the field in time. We prove that even an infinitesimal amount of noise disallows the error to be arbitrarily diminished by simply increasing $N$, and forces it to eventually follow a classical-like behaviour in $t$. However, we also demonstrate that, ``thanks'' to the presence of noise, in most regimes the model based on a homodyne-like continuous measurement actually achieves the ultimate sensitivity allowed by the decoherence, yielding then the optimal quantum-enhancement. We are able to do so by constructing a noise-induced lower bound on the error that stems from a general method of classically simulating a noisy quantum evolution, during which the stochastic parameter to be estimated---here, the magnetic field---is encoded. The method naturally extends to schemes beyond the linear-Gaussian regime, in particular, also to ones involving feedback or active control.
\end{abstract}

%
%
%
\maketitle
%
%

{\pagestyle{plain}
\tableofcontents
\cleardoublepage}

\mainmatter 

\section{Introduction}
\label{sec:intro}
Optical magnetometers based on atomic spin-ensembles~\cite{budker_optical_2007} are considered today as state-of-the-art magnetic-field sensors competing head to head in sensitivity with SQUID-based devices~\cite{clarke2004squid} without need of cryogenic cooling, while being already miniaturised to chip scales~\cite{kitching_chip-scale_2018}. On one hand, they have been demonstrated to be capable of revolutionising medical applications~\cite{jensen_magnetocardiography_2018,boto_moving_2018,limes_portable_2020,zhang_recording_2020}, on the other, when interconnected into global networks, they are used in searches of new exotic physics~\cite{Pospelov2013,pustelny_global_2013}.

Despite constituting a prominent example of quantum sensors~\cite{Degen2017}, the nominal sensitivity of atomic magnetometers is commonly described within their ``classical'' regime of operation as follows~\cite{budker_optical_2007}---also referred to as the ``Equation One''~\cite{budker_sensing_2020}:
\begin{equation}
\Delta^2 B
\;\approx\;
\frac{1}{(\gmr\,\Tcoh)^2} \cdot \frac{\Tcoh}{T} \cdot \frac{1}{N}
\label{eq:Equation_One}
\end{equation}
where the (squared) error in sensing the value of the magnetic field, $B$, simply decreases (quadratically) with respect to the time over which the sensor gathers information about the field while undergoing Larmor precession---with the constant of proportionality given then by the effective gyromagnetic ratio $\gmr$. The maximal value of such probing time, $\Tcoh$, is dictated by the time-scale within which the sensor can preserve its coherence, with $\Tcoh$ being determined by dominant noise-mechanisms exhibited by the atomic ensemble, e.g.~spin-relaxation $\Tcoh=T_1^*$ and/or spin-decoherence $\Tcoh=T_2^*$\footnote{The $*$-notation is typically used to emphasise that these phenomenological quantities refer to the whole ensemble rather than each individual atom~\cite{Wang2005}.}. Moreover, as ``Equation One'' applies (only) in the limit when a sufficient number of measurements are performed over the total sensing time $T$, it contains a division over the number of repetitions, $T/\Tcoh$, in accordance with the central limit theorem. For the same reason, $\Delta^2 B$ is further divided by the number of atoms $N$, as in the ``classical'' regime of operation the atoms within the ensemble can be treated as independent probes.

Although the derivation of ``Equation One'' may be considered ``hand-waving'', it can be formalised by following the so-called \emph{frequentist approach} to estimation theory~\cite{Kay1993}, which, indeed, applies in the limit of \emph{large number of repetitions} (asymptotic statistics), here $T\gg\Tcoh$. In particular, the l.h.s.~of equation \eqref{eq:Equation_One} formally corresponds to the \emph{mean squared error} (MSE) of an (unbiased) estimator of $B$ built basing on all the collected measurement data~\cite{Degen2017}, while its minimum can then be generally associated with the \emph{Cram\'{e}r-Rao Bound} (CRB)~\cite{Kay1993}, which effectively represents the r.h.s.~of equation \eqref{eq:Equation_One}. As a result, one can then answer fundamental questions using techniques of quantum metrology~\cite{Toth2014,Demkowicz2015}, in particular, by how much can the sensitivity \eqref{eq:Equation_One} be improved by allowing for arbitrary quantum states of the atomic ensemble and measurements more general than the natural light-probing scheme based on the Faraday effect~\cite{Budker2002RMP}. This has lead to the seminal observation that ``Equation One'' can be breached---in particular, its $1/N$-behaviour commonly referred to as the \emph{Standard Quantum Limit} (SQL)---by preparing the atomic ensemble in an entangled state~\cite{Pezze2018RMP}, so that the MSE can in principle attain the ultimate \emph{Heisenberg Limit} $\sim1/N^2$~\cite{Giovannetti2004}.

Unfortunately, such claims about the attainable scaling of precision with $N$ have been proven to be overoptimistic in the asymptotic $N$ limit, whenever one accounts for decoherence---noise---whose strength does not effectively decrease with the number of atoms\footnote{For complementary considerations of $N$-dependent decoherence models see e.g.~\cite{Mitchell_2017}.}~\cite{Maccone2011,Demkowicz2015}. In particular, the noise when affecting each atom in an uncorrelated manner constrains the quantum improvement to a constant factor beyond the SQL~\cite{Escher2011,Demkowicz2012}, while when disturbing the whole ensemble in a collective way enforces a positive lower bound on $\Delta^2B$ that cannot be overcome also when letting $N\to\infty$~\cite{Escher2012,Jeske2014,Demkowicz2015}. Nonetheless, for finite but very large atomic ensembles ($N\gtrsim10^5$), multiple optical-magnetometry experiments have spectacularly demonstrated that sensitivities beyond SQL can be reached~\cite{Wasilewski2010,Koschorreck2010,Sewell2012}. In such experiments, the spin-ensemble is prepared in an (entangled) \emph{spin-squeezed state}~\cite{Ma2011} every time before using it to sense the external magnetic field, $B$, which importantly cannot vary over the process of performing the necessary number of experimental repetitions.

The above paradigm, however, must be abandoned if the sensing task considered requires tracking of the magnetic field in \emph{real time}, and one cannot assure the same magnitude of the field to be probed sufficiently many times. This may occur whenever the field follows a predetermined time-varying waveform subject to stochastic fluctuations, or its behaviour in time is simply not known at all~\cite{bar2004estimation}. Still, the \emph{continuous quantum measurement theory}~\cite{jacobs_straightforward_2006} allows one, in principle, to describe then the dynamics of an optical magnetometer conditioned on the data collected in real time~\cite{Thomsen2002,thomsen_continuous_2002}---also beyond the ``classical'' regime---while the ability to perform quantum non-demolition measurements~\cite{Takahashi1999,Kuzmich1999} of the atomic ensemble continuously in time~\cite{Smith2004,Kuzmich2000,Shah2010} appears to be natural. Indeed,
first experiments in this direction have been conducted demonstrating the capability to preserve quantum enhancement when waveforms of a known shape are being probed~\cite{MartinCiurana2017}, or a fluctuating signal is sensed by a magnetometer operating in the ``classical'' regime~\cite{Jimenez2018}. Moreover, the \emph{Bayesian approach} to estimation theory~\cite{Kay1993} provides analogous tools to constrain the attainable ``single-shot'' precision---via the \emph{Bayesian Cram\'{e}r-Rao Bound} (BCRB)~\cite{van2007bayesian} and its variations~\cite{Fritsche2014}---that have been also generalised to the quantum setting~\cite{Tsang2011}.

In case of atomic magnetometry schemes in which spin-squeezing is realised continuously in time by light-probing based on the Faraday effect~\cite{deutsch_quantum_2010}, the Bayesian inference techniques strikingly predict the \emph{average mean squared error} (aMSE)\footnote{Averaged in the Bayesian single-shot scenario over the prior knowledge about the $B$-field dynamics.} to follow at short timescales the Heisenberg limit in absence of decoherence~\cite{Geremia2003,Molmer2004}:
\begin{equation}
\Delta^2 \est{B} \propto \frac{1}{\gmr^2}\cdot \frac{1}{t^3}\cdot\frac{1}{N^2},
\label{eq:ideal_behaviour}
\end{equation}
where in comparison to the (classical) ``Equation One'' \eqref{eq:Equation_One} also the scaling in time is improved from $1/T$ to $1/t^3$---we use above small $t$ instead to emphasise that it now represents the time along a single experimental trajectory. Equation~\eqref{eq:ideal_behaviour} has been subsequently adapted to account for stochastic fluctuations of the field~\cite{Stockton2004,Petersen2006}. However, it has never been rigorously generalised to take into account the impact of decoherence beyond numerical considerations~\cite{Molmer2004} and models that do not incorporate measurement back-action in real time~\cite{Auzinsh2004}, also when allowing for arbitrary quantum measurements to be performed at the end of the sensing procedure~\cite{Albarelli2017,rossi_noisy_2020}.

In our work we tackle this problem by including the collective noise---general anisotropic decoherence affecting the atomic spin-ensemble as a whole---into the analysis, while similarly considering the short time-scales, i.e., the \emph{linear-Gaussian regime}, for which equation \eqref{eq:ideal_behaviour} was derived~\cite{Geremia2003,Molmer2004}. In particular, we focus on the canonical optical magnetometry setup, in which all the atoms are initially polarised along one direction before being continuously spin-squeezed in time via the mechanism of light-probing in a perpendicular direction~\cite{Thomsen2002,thomsen_continuous_2002,Geremia2003}. We explicitly account for the presence of collective noise, as well as stochastic fluctuations of the field, and construct the optimal Bayesian estimator, i.e.~the \emph{Kalman filter} (KF)~\cite{kalman_new_1960,kalman_new_1961}, whose minimal error we are able to resolve in time. Furthermore, we demonstrate how to generalise and adapt the theoretical techniques previously developed to deal with noise within the frequentist approach to quantum metrology~\cite{Demkowicz2012,kolodynski_efficient_2013}, in particular, the \emph{classical simulation} (CS) \emph{method}~\cite{matsumoto_metric_2010}, which allows us to establish fundamental upper bounds on the attainable precision dictated by the decoherence. As a result, we are able to prove that, although at short timescales, when the collective decoherence can be effectively ignored, the error $\Delta^2 \est{B}$ follows the $1/t^3$-behaviour of equation \eqref{eq:ideal_behaviour}, it is subsequently degraded to $1/t$ once the impact of decoherence ``kicks-in''. Moreover, at large times at which the KF attains its optimal performance in tracking the fluctuating field---its \emph{steady state}---the error reaches the minimal value that is determined by both the decoherence rate and the strength of field fluctuations. Focussing instead on the dependence of error on the ensemble size, we show that the Heisenberg-like behaviour $1/N^2$ is similarly lost as $N$ grows and the impact of collective noise becomes significant. In particular, the error always approaches a constant value dictated by the decoherence rate. However, our method allows us to crucially  demonstrate that the precision achieved by the magnetometry setup here considered saturates the ultimate bound for long times and large ensembles, and hence should be considered as the \emph{optimal} realistic scheme to track fluctuating magnetic fields in presence of collective noise.

The manuscript is organised as follows. In section \ref{sec:atomic_sensor_model} we describe the atomic magnetometer of interest, in particular, the corresponding experimental setup as well as the resulting dynamical model of the spin-ensemble being probed continuously in time by light. This allows us to explain in detail in subsection \ref{sec:LG_regime} the linear-Gaussian regime being considered, and discuss the impact of noise on the evolution of the spin-squeezing parameter in subsection \ref{sec:spin_squeez}. The section \ref{sec:Kalman_filter} is then devoted to the explanation and derivation of the Kalman filter as the optimal estimator, as well as the precision it achieves in real time. In section \ref{sec:impact_noise} we present how the classical simulation method can be utilised to derive the ultimate limit on precision dictated by the decoherence, stemming from the Bayesian Cram\'{e}r-Rao Bound. Consequently, in subsection \ref{sec:regimes} we discuss the different regimes the error follows both in time and $N$, and give an intuitive explanation of their origin. Moreover, we demonstrate that the continuous estimation scheme saturates the ultimate limit derived in section \ref{sec:impact_noise} and, hence, may be regarded as optimal. Finally, we conclude our findings in section \ref{sec:conclusions}.

\section{Atomic magnetometer model}
\label{sec:atomic_sensor_model}

\subsection{Setup}
The optical magnetometry scheme we consider, see \fref{fig:atomic_ensemble}a, consists of an ensemble of $N$ spin-1/2 atoms prepared in a \emph{coherent spin state} (CSS) polarized along the $x$-direction~\cite{Ma2011}, 
so that the initial mean and variance of the ensemble angular momentum operators, denoted by the vector $\Jqvec{J}(t)=(\hat{J}_x(t),\hat{J}_y(t),\hat{J}_z(t))^\Trans$ in the Heisenberg picture, read $\braket{\Jqvec{J}(0)}_\trm{CSS} = (J,0,0)^\Trans$ and $\braket{\Delta^2\Jqvec{J}(0)}_\trm{CSS} = (0,J/2,J/2)^\Trans$, respectively, where $J = N/2$.
As shown in \fref{fig:atomic_ensemble}b, one may then naturally visualise the distribution of the angular momentum with help of the Bloch sphere representation. The aim of the scheme is to measure and estimate in real-time the magnetic field $B_t$ being directed along $y$, which fluctuates according to the \emph{Ornstein-Uhlenbeck} (OU) stochastic process~\cite{Gardiner1985}:
\begin{equation}
    dB_t = - \chi B_t\,dt + dW_B,
    \label{eq:OU_process}
\end{equation}
where by $dW_B$ we denote the Wiener noise of zero mean, $\mean{dW_B} = 0$, and variance $\mean{dW_B^2} = q_B \, dt$. The magnetic field $B_t$ induces a Larmor precession around the $y$-axis at the frequency $\larmorfreq(t) = \gmr B_t$, with $\gmr$ being the effective (constant) gyromagnetic ratio. However, as discussed in more detail below, in our analysis we restrict to dynamics at small times in the presence of small magnetic fields, under which the angular momentum operator $\Jqvec{J}(t)$ deviates only slightly from pointing along the $x$-direction---the direction along which the atoms are initially pumped. In principle, this regime could also be achieved over larger time-scales in experiments involving stroboscopic probing~\cite{Vasilakis2011,vasilakis_generation_2015,bao_spin_2020,bao_retrodiction_2020}. However, such implementations would then require the atoms to be initially pumped in the direction perpendicular to the magnetic field, as depicted in \fref{fig:atomic_ensemble}.

\begin{figure}[t]
	\begin{center}
    \includegraphics[width=0.95\textwidth]{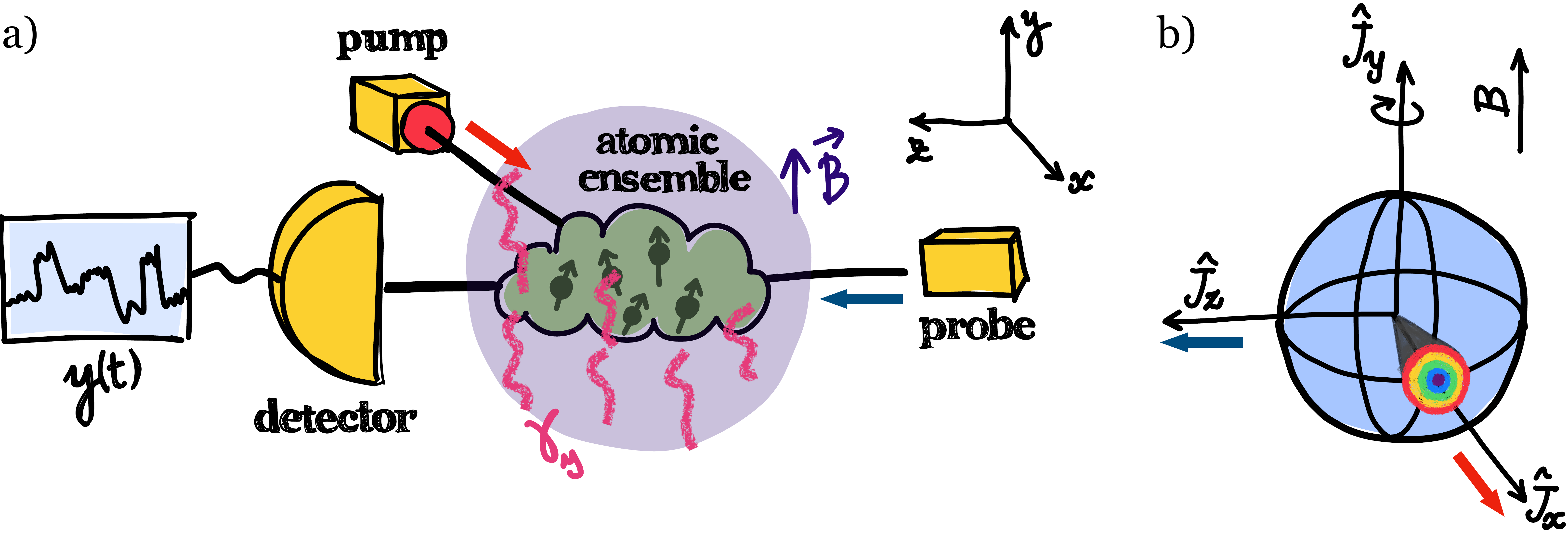}
    \end{center}
    \caption{
    \textbf{Geometry of the atomic magnetometer.} (a) The \emph{magnetometry scheme} involves an atomic ensemble optically pumped along the $x$-direction (\emph{red arrow}) into a coherent spin-state (CSS). The magnetic field being sensed is directed along $y$, while the Faraday--rotation-based continuous measurement is performed by using the light-probe propagating along $z$ (\emph{blue arrow}), which yields a photocurrent signal $y(t)$ being recorded. (b) Bloch sphere representation of the \emph{angular momentum of the ensemble} prepared in a CSS along $x$ (with \emph{red} and \emph{blue arrows} indicating the pumping and probing directions, respectively).
    }
\label{fig:atomic_ensemble}
\end{figure}

\subsection{System and measurement dynamics}
The atomic ensemble is continuously monitored by exploiting the paramagnetic Faraday rotation effect~\cite{Budker2002RMP}, which twists the linear polarisation of the (off-resonant) light propagating through the ensemble in the $z$-direction---the light probe, see \fref{fig:atomic_ensemble}a. As a result, a quantum non-demolition measurement is realised~\cite{Takahashi1999,Kuzmich1999} that is undertaken continuously in time~\cite{Smith2004,Kuzmich2000,Shah2010}, which can be effectively described by means of the homodyne-like continuous measurement~\cite{Wiseman1993} with the shot-noise in the measured signal taking the form of a Wiener process~\cite{deutsch_quantum_2010,handel_modelling_2005}. Note that, even if higher-spin atoms are considered, the following measurement model still applies as long as the contribution from the atomic polarizability tensor component can be suppressed~\cite{deutsch_quantum_2010,Geremia2006,de_echaniz_hamiltonian_2008}. In particular, the photocurrent being measured at time $t$ is proportional to the mean value of the collective atomic spin-component along the probe, $\braket{\hat{J}_z(t)}_\cc$, i.e.:
\begin{align}
    y(t) dt = 2\eta \sqrt{M} \braket{\hat{J}_z(t)}_\cc dt + \sqrt{\eta} \ dW,
    \label{eq:cont_meas}
\end{align}
where $\eta$ is the detection efficiency, $M$ is the measurement strength, and $dW$ is the Wiener differential fulfilling $\mean{dW^2}=dt$ according to the It\^{o} lemma~\cite{Gardiner1985}.

Importantly, due to the active influence the continuous measurement \eqref{eq:cont_meas} exerts on the atoms, the ensemble dynamics becomes conditional---denoted by the subscript $\cc$---and the atoms evolve differently depending on a particular trajectory of the measurement outcomes registered up to (and including) time $t$, $\vec{y}_{\le t}=\{y(\tau)\}_{0\le \tau \le t}$~\cite{handel_modelling_2005}. In particular, the ensemble is characterized at time $t$ by the  quantum state conditioned on the past measurement record, $\rho_\cc(t)\equiv\rho(t|\vec{y}_{\le t})$, which undergoes further \emph{conditional dynamics} described by a stochastic master equation:
\begin{align}
    d\rho_\cc\!(t) = & -i \gmr B_t [\hat{J}_y,\rho_\cc\!(t)]\,dt + \sum_{\alpha = x,y,z} \gamma_\alpha \D{[\hat{J}_\alpha]}\rho_\cc\!(t)\,dt
    \label{eq:cond_dyn} \\
    & + M \D{[\hat{J}_z]} \rho_\cc\!(t)\, dt + \sqrt{M \eta} \HOp{[\hat{J}_z]} \rho_\cc\!(t)\, dW,
    \label{eq:cond_dyn_meas}
\end{align}
where the superoperators $\D$ and $\HOp$ are defined as $\D[\mathcal{O}]\rho = \mathcal{O} \rho \mathcal{O}^\dagger - \frac{1}{2} (\mathcal{O}^\dagger \mathcal{O} \rho + \rho \mathcal{O}^\dagger \mathcal{O} )$ and $\HOp{[\mathcal{O}]}\rho = \mathcal{O} \rho + \rho \mathcal{O}^\dagger - \trace[(\mathcal{O} + \mathcal{O}^\dagger)\rho] \rho$ for any (also non-Hermitian) operator $\mathcal{O}$~\cite{jacobs_straightforward_2006}.

The first term in \eqref{eq:cond_dyn} arises simply from the free Hamiltonian, $\hat{H}=\gmr B_t \hat{J}_y$, responsible for the Larmor precession of the ensemble spin. In contrast, both terms in \eqref{eq:cond_dyn_meas} describe the continuous measurement of $\hat{J}_z$ introduced above~\cite{Thomsen2002,thomsen_continuous_2002}, of which the former is responsible for measurement-induced decoherence, or the backaction, while the latter constitutes the non-linear information gain provided when observing a particular measurement signal in \eqref{eq:cont_meas}~\cite{jacobs_straightforward_2006}. Furthermore, in our analysis we model all possible, e.g.~environment-induced, decoherence mechanisms affecting the ensemble (as a whole) by introducing the second term in \eqref{eq:cond_dyn} with three dissipative components being parametrised by distinct effective rates, $\gamma_\alpha$, in the three directions $\alpha\in\{x,y,z\}$. Let us also emphasise that the Wiener differential, $dW$, appearing in \eqref{eq:cond_dyn_meas} is the same as in the measurement dynamics \eqref{eq:cont_meas}---a particular fluctuation of the photocurrent in \eqref{eq:cont_meas} after being registered drives the conditional state of the ensemble according to \eqref{eq:cond_dyn_meas}.

Finally, let us highlight the difference between the fluctuations of the estimated magnetic field $B_t$, as dictated by the OU process \eref{eq:OU_process}, and the collective decoherence introduced above in \eref{eq:cond_dyn}, of which \emph{only} the latter should be interpreted as a form of ``noise''---in the statistical sense, forcing the quantum state of ensemble to lose its purity over time. In particular, due to its presence, the expectation value of $\hat{J}_\alpha$ ($\alpha = x, y, z$) exponentially decreases with a rate of $\gamma_\alpha$, whose inverse can be identified as the phenomenological (ensemble) spin-decoherence time $T_2^*$~\cite{Wang2005}. Within the theory of open quantum systems~\cite{Breuer2002}, this may be associated with tracing out some environmental degrees of freedom or the environment itself monitoring continuously each component $\hat{J}_\alpha$, whose stochastic trajectory of outcomes is inaccessible and, hence, must be averaged out. On the contrary, the above dynamical model (\ref{eq:OU_process}-\ref{eq:cond_dyn_meas}) describes the evolution along a \emph{single} trajectory of the fluctuating magnetic field. As a result, the impact of field fluctuations on the performance in magnetometry should not be treated on equal grounds with the decoherence of the atomic ensemble, but rather associated with the inability to perfectly estimate the field in real time due to its stochastic nature. In fact, if one on purpose decided to ignore the field fluctuations and rather consider the effective dynamics averaged over all possible field trajectories (see \ref{sec:field_avg}), one would recover the above model of collective decoherence in the direction of the field (here, $\hat{J}_y$), but with a decoherence rate that accumulates as $\gamma_y\sim t^2$ over time.

\subsection{Linear-Gaussian regime}
\label{sec:LG_regime}
In order to define the regime of interest in which the atomic sensor operates, let us first consider the \emph{unconditional dynamics} of the atomic ensemble,
\begin{equation}
    d\rho(t) = -i \gmr B_t [\hat{J}_y,\rho(t)]\, dt + \sum_{\alpha = x,y,z} \gamma_\alpha \D{[\hat{J}_\alpha]}\rho(t)\, dt + M \D{[\hat{J}_z]} \rho(t)\, dt,
    \label{eq:uncond_dyn}
\end{equation}
which can be simply obtained by dropping the stochastic term in \eqrefs{eq:cond_dyn}{eq:cond_dyn_meas}, so that it now describes the evolution of the atomic state at time $t$, averaged over all the stochastic trajectories of the measured outcomes, i.e.~\cite{Gardiner1985}:
\begin{equation}
	\rho(t)=\mean{p(\vec{y}_{\le t})}{\rho_{\cc}(t)}=\int\!d\vec{y}_{\le t}\;p(\vec{y}_{\le t})\;\rho(t|\vec{y}_{\le t}),
\end{equation}
where by $\int\!d\vec{y}_{\le t}$ we denote the integral over all possible measurement records $\vec{y}_{\le t}=\{y(\tau)\}_{0\le \tau \le t}$. Using equation \eqref{eq:uncond_dyn} to define the dynamics of any observable $\hat{O}(t)$ in the Heisenberg picture, whose mean must then obey $\braket{\hat{O}(t)}=\tr{\hat{O}(t) \rho}=\tr{\hat{O} \rho(t)}$, we observe that the (unconditional) evolution of the mean spin-component along the direction of pumping, $\braket{\hat{J}_x(t)}$, is governed by the solution to the following set of coupled differential equations:
\begin{align}
    \label{eq:uncondJx} d\braket{\hat{J}_x(t)} &  = \gmr B_t \braket{\hat{J}_z(t)} dt - \frac{1}{2}(M + \gamma_y + \gamma_z) \braket{\hat{J}_x(t)} dt, \\
    \label{eq:uncondJz} d\braket{\hat{J}_z(t)} &  = -\gmr B_t \braket{\hat{J}_x(t)} dt - \frac{1}{2}(\gamma_x + \gamma_y) \braket{\hat{J}_z(t)} dt, \\
    dB_t & = - \chi B_t \,dt + dW_B \label{eq:OU_process2},
\end{align}
of which the last (stochastic) one just corresponds to the OU process with $\mean{dW_B^2} = q_B$, describing the magnetic-field fluctuations in \eqref{eq:OU_process}.

Although this implies that the dynamics of $\braket{\hat{J}_x(t)}$ is stochastic, we show in \ref{ap:UncondJx} that, if one focuses on short timescales such that $\overline{\larmorfreq(t)}\, t\ll1$, where $\overline{\larmorfreq(t)}=\gmr |\,\oline{B}_t|$ is the time-averaged Larmor frequency, then the mean spin-component along the direction of the pump must decay exponentially as follows:
\begin{align}
    \braket{\hat{J}_x(t)} \approx J e^{- (M + \gamma_y + \gamma_z)\,t/2} = J e^{- r\,t/2}
    \label{eq:Jx_approx}
\end{align}
with an effective \emph{decay rate}:~$r=M + \gamma_y + \gamma_z$. For this to be true, not only the time-average value of the magnetic field, $\oline{B}_t = \frac{1}{t} \int_0^t d\tau B_\tau$, must be small ($|\,\oline{B}_t|\ll\frac{1}{\gmr t}$), but also the distribution of $\oline{B}_t$ should be narrow enough, which we ensure by verifying that $\left|\mean{\oline{B}_t}\pm2\sqrt{\var{\,\oline{B}_t}}\right| \lesssim\frac{1}{\gmr t}$ according to the 68-95-99.7 rule for normal distributions (see also \fref{fig:signal_analysis}a). We explicitly prove in \ref{ap:UncondJx} that for approximation \eqref{eq:Jx_approx} to hold it is sufficient to consider short timescales such that $r\,t\lesssim 1$, while also requiring the parameters of the OU process in \eqref{eq:OU_process2} (or \eqref{eq:OU_process}) to obey:
\begin{equation}
\chi \ll \frac{4\,r}{3}
\qquad\text{and}\qquad
q_B \lesssim \frac{3\,r^3}{4\gmr^2}.	
\label{eq:OU_process_constraints}
\end{equation}

Furthermore, by restricting to short timescales at which the approximation \eqref{eq:Jx_approx} is valid, we effectively deal with spin dynamics in the regime in which $\Jqvec{J}(t)$ only slightly deviates from its initial $x$-polarisation. This allows us then to perform the \emph{Gaussian approximation}~\cite{Molmer2004,Madsen2004} and introduce the effective (time-dependent) quadrature operators:
\begin{align} \label{eq:X&P_quadratures}
    \hat{X}(t) = \frac{\hat{J}_y}{\sqrt{|\braket{\hat{J}_x(t)}|}} \approx \frac{\hat{J}_y}{\sqrt{Je^{-rt/2}}}
    \qquad \text{and} \qquad
    \hat{P}(t) = \frac{\hat{J}_z}{\sqrt{|\braket{\hat{J}_x(t)}|}}\approx \frac{\hat{J}_z}{\sqrt{Je^{-rt/2}}},
\end{align}
which satisfy the canonical commutation relationship $[\hat{X}(t),\hat{P}(t)] = i \frac{\hat{J}_x}{\left|\braket{\hat{J}_x(t)}\right|}\approx i$, as long as $|\braket{\hat{J}_x(t)}| \gg 1$~\cite{Molmer2004,Madsen2004}, which is assured within the approximation \eqref{eq:Jx_approx} whenever $J\gg1$---recall $J=N/2\gtrsim10^5$ in optical-magnetometry experiments~\cite{Wasilewski2010,Koschorreck2010,Sewell2012}. As a consequence, it is enough to consider the evolution of the first and second moments of the quadratures, so that when focussing on the probed spin-component in the $z$-direction that fulfils $\hat{J}_z\approx\sqrt{Je^{-rt/2}}\hat{P}(t)$, its conditional dynamics \eqrefs{eq:cond_dyn}{eq:cond_dyn_meas} is completely described by the following set of equations:
\begin{align}
    y(t) dt & = 2\eta \sqrt{M} \braket{\hat{J}_z(t)}_\cc dt + \sqrt{\eta} \ dW, \label{eq:measure_before} \\
    dB_t & = - \chi B_t \,dt + dW_B \qquad\qquad(\text{with}\;\;\mean{dW_B^2} = q_B)  \label{eq:OU_process_before}\\
    d\braket{\hat{J}_z(t)}_\cc & = - \gmr B_t J e^{- r t / 2} dt + 2 \sqrt{\eta M} \braket{\Delta^2\hat{J}_z(t)}_\cc dW, \label{eq:mean_eq_Jz_before}\\
    d\braket{\Delta^2 \hat{J}_z (t)}_\cc & = -4 M \eta \braket{\Delta^2 \hat{J}_z(t)}^2_\cc dt + \gamma_y J^2 e^{-r t} dt, \label{eq:var_eq_Jz_before}
\end{align}
which assume $[\hat{X}(t),\hat{P}(t)]\approx i$ to hold---satisfied whenever $J \gg 1$, $r t \lesssim 1$ and the conditions \eqref{eq:OU_process_constraints} are valid. Note that equations \eqrefs{eq:measure_before}{eq:mean_eq_Jz_before} are linear with respect to one another, while equation \eqref{eq:var_eq_Jz_before} can be solved independently. Moreover, they involve only Gaussian (in particular, Wiener) stochastic-noise terms, and are derived based on the Gaussian approximation \eqref{eq:X&P_quadratures}. Hence, we refer in short to the evolution described by \eqrefs{eq:measure_before}{eq:var_eq_Jz_before} as the (conditional) dynamics of $\hat{J}_z$ in the \emph{linear-Gaussian regime}.

\subsection{Continuous spin-squeezing}
\label{sec:spin_squeez}
Firstly, let us note that within the linear-Gaussian regime, the equations of motion \eqrefs{eq:measure_before}{eq:mean_eq_Jz_before} are independent of the decoherence rate $\gamma_x$, which is thus redundant. This is a consequence of the approximation \eqref{eq:Jx_approx} (see also \ref{ap:UncondJx}) and can be intuitively explained---deviations of $\Jqvec{J}(t)$ from the $x$-direction are then too small for the collective noise generated by the $\mathcal{D}[\hat{J}_x]$-term in equation \eqref{eq:cond_dyn} to have any effect on the quadratures \eqref{eq:X&P_quadratures}. Furthermore, the decoherence rate $\gamma_z$ is also unnecessary, since by transforming the parameters of the continuous measurement \eqref{eq:cont_meas} as follows:~$M\to M - \gamma_z$, $\eta\to\eta M / (M - \gamma_z)$ and $y\to y\sqrt{M/(M-\gamma_z)}$;~we retrieve the conditional dynamics \eqrefs{eq:measure_before}{eq:var_eq_Jz_before} with $\gamma_z = 0$. Hence, the impact of the collective noise generated by the $\mathcal{D}[\hat{J}_z]$-term in equation \eqref{eq:cond_dyn} instead, can always be interpreted and incorporated into a modified form of the continuous measurement \eqref{eq:cont_meas}.

As a result, without loss of generality, we restrict the effective decay rate of $\braket{\hat{J}_x(t)}$ introduced in \eqref{eq:Jx_approx} to take the form $r=M+\gamma_y$, when considering the (conditional) dynamics of $\hat{J}_z$ in \eqrefs{eq:measure_before}{eq:mean_eq_Jz_before}. We then solve for the dynamics of $\braket{\Delta^2 \hat{J}_z (t)}_\cc$ in \eqref{eq:var_eq_Jz_before}, which constitutes a fully decoupled differential equation. For the complete analytical solution we refer the reader to \ref{ap:VarSol}, where we also show that the time-dependence of the (conditional) $\hat{J}_z$-variance (satisfying $\braket{\Delta^2 \hat{J}_z (0)}_\cc=\braket{\Delta^2 \hat{J}_z (0)}_\trm{CSS}=J/2$ at $t=0$) can be described by two distinct short-time and long-time behaviours:
\begin{subequations}
\begin{numcases}{\braket{\Delta^2 \hat{J}_z (t)}_\cc =}
     V_{<t^*}(t) =  \frac{J}{2} \, \frac{1+2 J t \gamma_y}{1+2 J t M \eta} e^{-(M+\gamma_y)t/2}, & \text{if\quad $0 \le t \ll t^{*}$}
    \label{eq:varJz_t<<t*}\\
     V_{>t^*}(t) =  \frac{J}{2} \, \sqrt{\frac{\gamma_y}{\eta M}} \,e^{-(M + \gamma_y)t/2}, & \text{if\quad $t \gg t^{*}$}
    \label{eq:varJz_t>>t*}
\end{numcases}
\label{eq:varJz_t}
\end{subequations}
where $t^{*} = (2J\sqrt{M\gamma_y \eta})^{-1}$ is the transition time between the two regimes.

Importantly, note that $t \ll t^*$ implies $2 J t \gamma_y \ll \sqrt{\gamma_y/(\eta M)}$. Then, if also $\gamma_y < \eta M$, we may infer $2 J t \gamma_y \ll 1$ and approximate $1 + 2 J t \gamma_y \approx 1$ in \eqref{eq:varJz_t<<t*}. As a result, we then recover the noiseless ($\gamma_y=0$) solution for the variance within the short-time regime, previously found by~\citet{Geremia2003}, i.e.:
\begin{equation}\label{eq:Geremia_VarJz}
  \braket{\Delta^2 \hat{J}_z (t)}_\cc =
   \frac{J}{2} \, \frac{1+2 J t \gamma_y}{1+2 J t M \eta} e^{-(M+\gamma_y)t/2}
   \approx
   \frac{J}{2+4 J t M \eta},
\end{equation}
despite the actual presence of the noise ($\gamma_y>0$). In fact, we prove also in \ref{ap:VarSol} that $\braket{\Delta^2 \hat{J}_z (t)}_\cc$ is a non-decreasing function at $t\approx0$ if $\gamma_y \ge \eta M$. Hence, the noise may be considered negligible at small times \emph{only if} $\gamma_y < \eta M$.

The dynamics of $\braket{\Delta^2 \hat{J}_z (t)}_\cc$ enables us to study the phenomenon of spin-squeezing of the atomic ensemble~\cite{Ma2011} by  directly computing the \emph{(conditional) squeezing parameter} introduced by~\citet{Wineland1994} along the pumping $x$-direction:
\begin{equation}
    \xi^2(t)=\frac{\braket{\Delta^2 \hat{J}_z (t)}_\cc}{\braket{\hat{J}_x(t)}^2}\left[\frac{\braket{\Delta^2 \hat{J}_z(0)}_\trm{CSS}}{\braket{\hat{J}_x(0)}^2_\trm{CSS}}\right]^{-1}=\frac{2J \braket{\Delta^2 \hat{J}_z (t)}_\cc}{\braket{\hat{J}_x(t)}^2}
    \label{eq:xi2}
\end{equation}
which when satisfying $\xi^2(t)<1$ indicates a gain in interferometric precision over the CSS~\cite{Wineland1994}. In particular, just like we decomposed $\braket{\Delta^2 \hat{J}_z (t)}_\cc$ according to \eqref{eq:varJz_t}, we can similarly split the behaviour of the squeezing parameter in the linear-Gaussian regime, in which $\braket{\hat{J}_x(t)} \approx J e^{- r\,t/2}$, as follows
\begin{subequations}
\label{eq:squeezing_cases}
\begin{numcases}{\xi^2(t)
    \approx\frac{2\braket{\Delta^2 \hat{J}_z (t)}_\cc}{Je^{-(M+\gamma_y)t}}
   	=}
    \xi^2_{<t^*}(t) = \frac{1+2 J t \gamma_y}{1+2 J t M \eta}e^{(M+\gamma_y)t/2}, & \text{if\quad $0 \le t \ll t^{*}$,\quad}\\
    \xi^2_{>t^*}(t) = \sqrt{\frac{\gamma_y}{\eta M}}e^{(M+\gamma_y)t/2}, & \text{if\quad $t\gg t^{*}$.}
\end{numcases}
\end{subequations}%
From now on in all plots, in order to confirm that apart from \eqref{eq:OU_process_constraints} the condition $rt \lesssim 1$ (with now $r=M + \gamma_y$) is satisfied and, hence, the linear-Gaussian approximation \eqref{eq:Jx_approx} is valid, we introduce the \emph{rescaled time} $t_S = (M + \gamma_y) t$ and limit its range to $t_S \lesssim 1$, what assures the timescale of the linear-Gaussian regime to always be consistently considered.

\begin{figure}[t!]
  \centering
  \includegraphics[width=\textwidth]{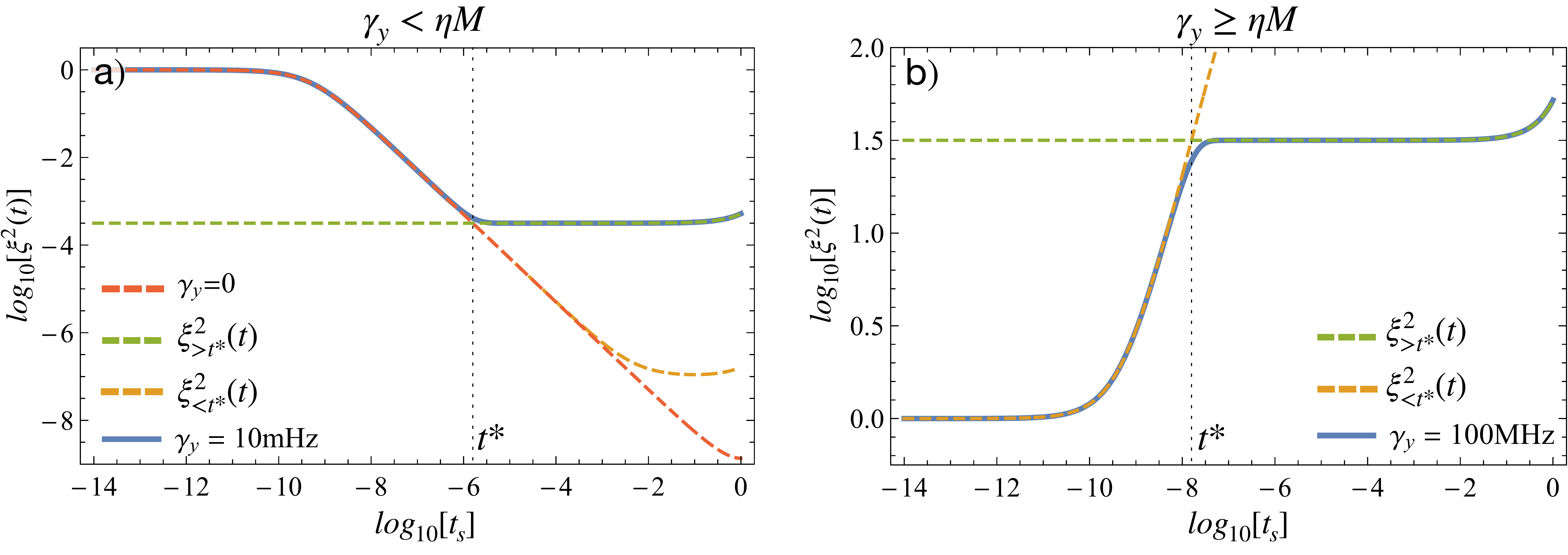}
  \caption{
  \textbf{Evolution of spin-squeezing in time.} The \emph{squeezing parameter} $\xi^2(t)$ defined in equation \eqref{eq:xi2} is plotted as a function of rescaled time $t_S = (M + \gamma_y)t$ for the case of $\gamma_y = 10mHz < M = 100kHz$ (subplot (a)) and $\gamma_y = 100MHz > M = 100kHz$ (subplot (b)). The exact function $\xi^2(t)$ (\emph{solid blue}) is compared with its two different regimes $\xi^2_{<t^*}(t)$ and $\xi^2_{>t^*}$ (\emph{dashed yellow} and \emph{green}, respectively), as well as the noiseless solution when $\gamma_y < M$ (\emph{dashed red}). The other parameters used to generate the plots are:~$J = 10^9$, $\gmr = 1\mrm{kHz/mG}$, and $\eta = 1$.
  }
  \label{fig:squeezing}
\end{figure}

In \fref{fig:squeezing} we present explicitly the exact dynamics of the squeezing parameter \eqref{eq:xi2} for two important cases:~(a) -- when $\gamma_y<\eta M$ and the spin-squeezing ($\xi^2(t)<1$) occurs within a finite-time window (see \fref{fig:squeezing}a);~and (b) -- when $\gamma_y \geq \eta M$ for which spin-squeezing is forbidden (see \fref{fig:squeezing}b). In both cases, it is evident that the exact solution for $\xi^2(t)$ very closely follows the two-regime behaviour in \eqref{eq:squeezing_cases}, with a clear transition at $t\approx t^*$. Moreover, as seen explicitly from the two-regime solution \eqref{eq:squeezing_cases}---and the exact solution that can be straightforwardly derived from the exact evolution of the $\hat{J}_z$-variance \eqref{eq:varJz_t} described in \ref{ap:VarSol}---the dynamics of the squeezing parameter is specified solely by the properties of the continuous measurement ($\eta$ and $M$), collective decoherence ($\gamma_y$), and the number of atoms ($J=N/2$).

\begin{figure}[t!]
    \centering
    \includegraphics[width=.75\textwidth]{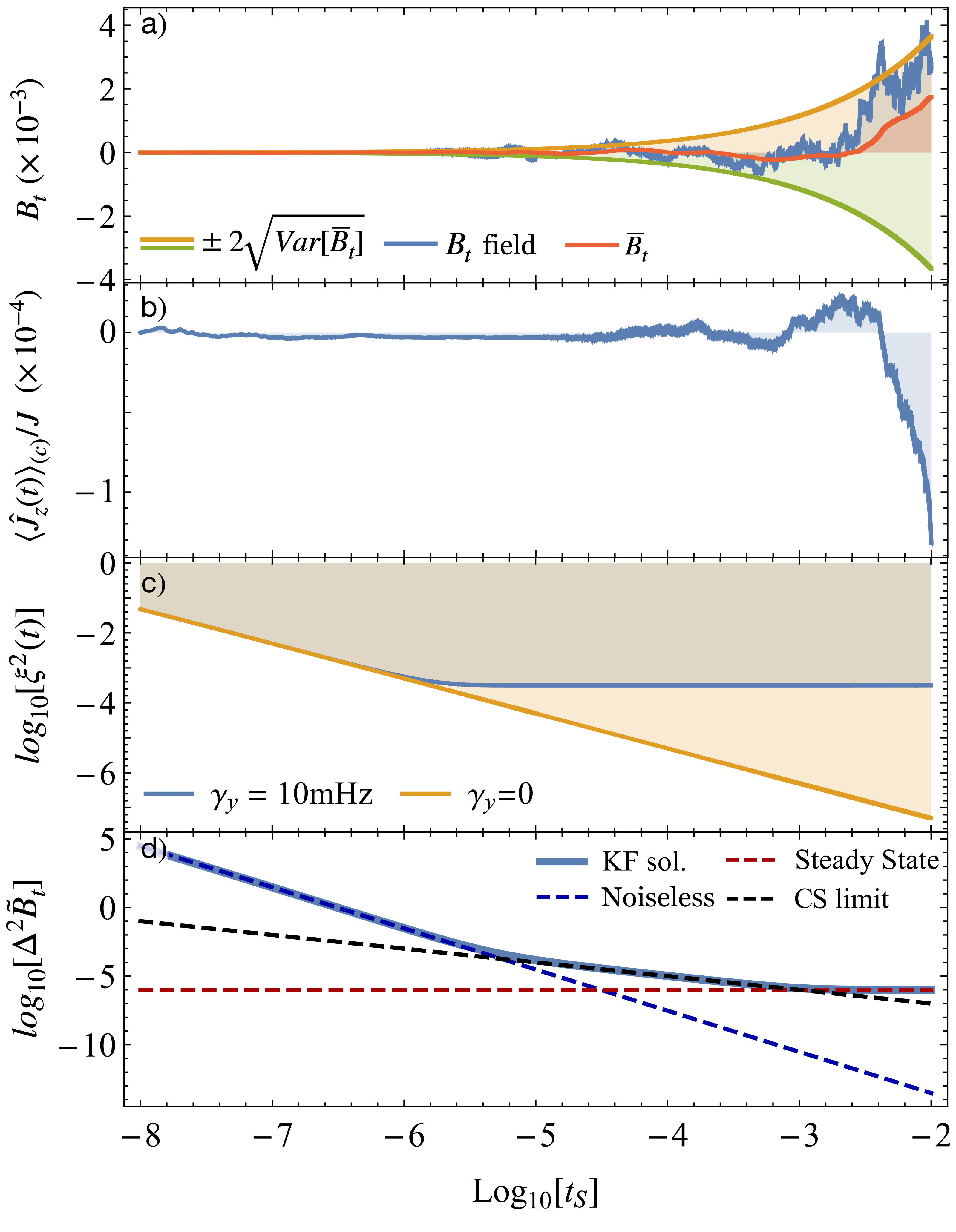}
    \caption{
    \textbf{Dynamics along a single trajectory}.
    Subplot (a) shows the simulated magnetic field along its time average (\emph{solid red}) and confidence intervals (\emph{orange} and \emph{green}) assuring validity of the linear-Gaussian approximation. Subplot (b) presents the normalized conditional evolution of $\hat{J}_z$ driven by the particular B-field trajectory depicted above. Subplot (c) compares the squeezing parameter with (\emph{blue}) and without (\emph{orange}) decoherence present. Finally, subplot (d) juxtaposes the minimal estimation error of the field (\emph{solid blue line}) with its different scalings in time:~noiseless-like ($\propto 1/(t^3 J^2)$, \emph{dashed blue}), ``classical''-like ($\propto 1/t$ predicted by the classical simulation (CS) limit, \emph{dashed black}), and the one dictated by the steady state ($\propto 1$, \emph{dashed red}). All the plots are made with respect to the rescaled time $t_S = (M + \gamma_y) t$, and the other parameters used are:~$\chi = 0$, $q_B = 100G^2/s$, $\eta = 1$,  $\gamma_y = 10mHz$,  $M = 100kHz$, $J = 10^9$, and $\gmr = 1kHz/mG$.}
    \label{fig:signal_analysis}
\end{figure}

In what follows, we turn to the problem of sensing the magnetic field, $B_t$, in real time by constructing the optimal estimator, $\est{B}_t$, of its fluctuating value in the form of a \emph{Kalman filter}~\cite{kalman_new_1960,kalman_new_1961}. However, we have already included in \fref{fig:signal_analysis} the evolution of the corresponding minimal estimation error, $\Delta^2\est{B}_t$, in order to emphasise the fact that the stochasticity of the $B_t$-signal affects the estimation procedure, but \emph{not} the spin-squeezing phenomenon. In particular, for a given magnetic-field trajectory generated according to the OU process \eqref{eq:OU_process}---of a magnitude that does not invalidate the linear-Gaussian regime, see \fref{fig:signal_analysis}a---we obtain a measurement signal $y(t)$ from which we can infer via \eqref{eq:cont_meas} the conditional dynamics of the mean spin-component $\braket{\hat{J}_z(t)}_\cc$, see \fref{fig:signal_analysis}b. Now, the evolution of its variance, $\braket{\Delta^2 \hat{J}_z (t)}_\cc$, allows us to compute the squeezing parameter $\xi^2(t)$, which, as described above, evolves in the linear-Gaussian regime in the same manner for any particular stochastic trajectory of the magnetic field---see \fref{fig:signal_analysis}c. In contrast, the stochastic fluctuations of $B_t$ affect the estimation procedure, so that the Kalman filter reaches a constant error (the \emph{steady state}) after a certain time despite the atomic ensemble still being spin-squeezed ($\xi^2(t)<1$).  This can be directly seen by comparing the evolution of $\Delta^2\est{B}_t$ in \fref{fig:signal_analysis}d against the one of $\xi^2(t)$ in \fref{fig:signal_analysis}c for $t_S\gtrsim10^{-3}$. The transition time when this occurs depends on the parameters of the OU process \eqref{eq:OU_process}---as later shown in \sref{sec:Kalman_steady_state}.

\section{Field estimation with Kalman filtering}
\label{sec:Kalman_filter}
The goal within the magnetometry scheme considered here is to most accurately infer the true value of the magnetic field at a time $t$, given a particular measurement signal $\vec{y}_{\le t}$ recorded up to this time instance. In order to do so, we seek the optimal estimator $\est{B}_t(\vec{y}_{\le t})$ of $B_t$ that minimises the \emph{average (Bayesian) mean squared error} (aMSE)~\cite{Kay1993}:
\begin{equation}
	\label{eq:aMSE_Bt}
    \Delta^2 \est{B}_t = \mean{p(B_t,\vec{y}_{\le t})}{(B_t - \est{B}_t(\vec{y}_{\le t}))^2} = \int \!dB_t\; p(B_t)
    \;\mean{p(\vec{y}_{\le t}|B_t)}{(B_t - \est{B}_t(\vec{y}_{\le t}))^2},
\end{equation}
where by $\int \!dB_t$ we denote the integral over the values that the magnetic field may take at time $t$ (\emph{only}), i.e.~the random variable $B_t$ with probability distribution:
\begin{equation}
	p(B_t)=\int \!dB_0\; p(B_t|B_0)\; p(B_0),
	\label{eq:prior_update}
\end{equation}
which for the OU process \eqref{eq:OU_process} can be determined once the \emph{a priori} distribution $p(B_0)$ is specified, e.g.~a normal distribution of variance $\sigma_0^2$ (see \ref{ap:prior_distribution}). From the Bayesian perspective, $p(B_0)$ should most accurately describe the knowledge an experimentalist possesses about the field at $t=0$, prior to taking any measurements. Note that, as in our work we are interested in estimation of $B$ in real time, we will not consider the setting in which one seeks the optimal estimator for some time $t'<t$ and accounts also for ``future'' measurement-data collected in the time-window $[t',t]$. In that case, one should resort to the inference methods of Bayesian smoothing~\cite{Tsang2010,Zhang2017,Huang2018}, rather than to the filtering ones that are relevant for our purpose.

The optimal estimator minimising the aMSE is always given by the mean of the posterior distribution~\cite{van2007bayesian}, i.e.:
\begin{equation}\label{eq:mMSEest}
  \est{B}_t(\vec{y}_{\le t}) = \mean{p(B_t|\vec{y}_{\le t})}{B_t} = \int \!dB_t\; p(B_t|\vec{y}_{\le t})\,B_t = \mean{p(\vec{B}_{\le t}|\vec{y}_{\le t})}{B_t},
\end{equation}
where in the last step we have used $p(B_t|\vec{y}_{\le t})= \int\!d\vec{B}_{<t}\;p(\vec{B}_{\le t}|\vec{y}_{\le t})$---the fact that the probability of the magnetic field taking at time $t$ the value $B_t$, given a particular measurement trajectory $\vec{y}_{\le t}$, can be obtained by averaging over all past $B$-field stochastic trajectories up to (but not including) time $t$. However, since the problem of interest is described by a set of equations \eqrefs{eq:measure_before}{eq:mean_eq_Jz_before} that are linear and Gaussian, the posterior distribution $p(B_t|\vec{y}_{\le t})$ does not have to be explicitly reconstructed. Instead, the optimal estimator \eqref{eq:mMSEest} can be determined by solving the so-called Kalman-Bucy equation~\cite{van2007bayesian} and is commonly referred to as the \emph{Kalman filter} (KF)~\cite{kalman_new_1960,kalman_new_1961}. The name ``filter'' originates from the fact that the optimal estimator \eqref{eq:mMSEest} at time $t$ can be constructed basing solely on the previous-step estimate and the current measurement outcome $y_t$, so there is, in principle, no need to store all the measurement data.

In order to formulate the problem within the KF-framework~\cite{crassidis2011optimal,sarkka2013bayesian}, we first define the \emph{state} vector of the variables $\braket{\hat{J}_z(t)}_\cc$ and $B_t$ to be estimated\footnote{The KF-framework then naturally provides also the optimal estimate $\braket{\tilde{\hat{J}}_z(t)}_\cc$ of the mean $\braket{\hat{J}_z(t)}_\cc$.} $\vecvar{x}_t = ( \tinyspace \braket{\hat{J}_z(t)}_\cc , B_t \tinyspace )^\Trans$, as well as the \emph{measurement} vector $\vecvar{y}_t=\int_0^t y(\tau) d\tau$, which takes, however, a scalar form with only one d.o.f.~being continuously measured. Consequently, we then identify the corresponding noise vectors as $d\vecvar{w}_t = ( \tinyspace dW,dW_B \tinyspace )^\Trans$ and $d\vecvar{v}_t = \sqrt{\eta} \, dW$, respectively, so that the system of equations \eqrefs{eq:measure_before}{eq:mean_eq_Jz_before} can be compactly rewritten as:
\begin{align}
  d\vecvar{x}_t & = \mat{F}_t \tinyspace \vecvar{x}_t \tinyspace dt + \mat{B}_t  \tinyspace d\vecvar{w}_t,\label{eq:state_system1} \\
  d\vecvar{y}_t & = \mat{H}_t \tinyspace \vecvar{x}_t  \tinyspace dt + d\vecvar{v}_t, \label{eq:state_system2}
\end{align}
where
\begin{equation}\label{eq:matrices_F_K_B}
  \mat{F}_t = \begin{pmatrix} 0 & -\gmr J \tinyspace e^{-r \tinyspace t / 2} \\ 0 & -\chi \end{pmatrix},
  \quad
  \mat{B}_t = \begin{pmatrix}  2 \sqrt{\eta M} \braket{\Delta^2J_Z(t)}_\cc & 0 \\ 0 & 1 \end{pmatrix},
  \quad
  \mat{H}_t =  2\sqrt{\eta M} \begin{pmatrix} 1 & 0 \end{pmatrix}.
\end{equation}
The self- and cross-correlations between stochastic noise-terms are then given by
\begin{align}
	\mean{d\vecvar{w}_t d\vecvar{w}_s^\Trans} & = \mat{Q}_t \tinyspace \delta(t-s) \tinyspace dt, \\
	\mean{d\vecvar{v}_t d\vecvar{v}_s^\Trans} & = \mat{R}_t \tinyspace \delta(t-s) \tinyspace dt, \\
	\mean{d\vecvar{w}_t d\vecvar{v}_s^\Trans} & = \mat{S}_t \tinyspace \delta(t-s) \tinyspace dt,
	\label{eq:crosscorr}
\end{align}
where
\begin{equation}
  \mat{Q}_t = \begin{pmatrix} 1 & 0 \\ 0 & q_B \end{pmatrix},
  \quad
  \mat{R}_t = \eta
  \quad\text{and}\quad
  \mat{S}_t = \begin{pmatrix} \sqrt{\eta} \\ 0 \end{pmatrix},
  \label{eq:noise_correlations}
\end{equation}
are the covariance and cross-correlation matrices (or scalars) of the process and measurement noises, respectively. Let us note that we have used $\mean{dW_B dW} = \mean{dW dW_B} = 0$ in \eqref{eq:crosscorr}, as no cross-correlations are present between the $B$-field fluctuations and the measurement noise; whereas, because $q_B \ge 0$ and $\eta\ge0$, the covariances of process and measurement noises consistently satisfy $\mat{Q}_t,\mat{R}_t\ge0$.

As discussed in more detail in \ref{ap:Corr_KF}, the optimal estimator $\Tilde{\vecvar{x}}_t = (\tinyspace \braket{\tilde{\hat{J}}_z(t)}_\cc, \est{B}_t \tinyspace)^\Trans$ that minimises the overall aMSE---formally corresponding to the trace of the covariance matrix $\mat{\Sigma}_t = \mean{(\vecvar{x}_t - \Tilde{\vecvar{x}}_t)(\vecvar{x}_t - \Tilde{\vecvar{x}}_t)^\Trans}$, i.e.~$\tr{\mat{\Sigma}_t} = \mean{\left\Vert\vecvar{x}_t - \Tilde{\vecvar{x}}_t\right\Vert^2}$---is given by the \emph{correlated} KF~\cite{crassidis2011optimal}. In particular, focussing on the optimal covariance matrix that the filter yields, its dynamics is then determined by a non-linear differential equation of the Riccati form~\cite{crassidis2011optimal}:
\begin{align}
    \frac{d\mat{\Sigma}_t}{dt}\equiv\dot{\mat{\Sigma}}_t \; & = \Big(\mat{F}_t - \mat{B}_t \mat{S}_t \mat{R}^{-1}_t \mat{H}_t \Big) \mat{\Sigma}_t   + \mat{\Sigma}_t \Big(\mat{F}_t - \mat{B}_t \mat{S}_t \mat{R}^{-1}_t \mat{H}_t \Big)^\Trans \nonumber \\
    & \qquad - \mat{\Sigma}_t \mat{H}^\Trans_t \mat{R}^{-1}_t \mat{H}_t \mat{\Sigma}_t   + \mat{B}_t \Big( \mat{Q}_t - \mat{S}_t \ \mat{R}^{-1}_t \mat{S}^\Trans_t \Big) \mat{B}^\Trans_t,
    \label{eq:Riccati_equation}
\end{align}
where the initial condition reads $\mat{\Sigma}_0 = \mathrm{diag}(0,\sigma_0^2)$ with $\sigma_0^2$ being the variance of the Gaussian prior distribution $p(B_0)$ in equation \eqref{eq:prior_update}. In order to allow for analytic solutions in some parameter regimes, we redefine the covariance matrix as $\mat{\Sigma}_t = \mat{Y}_t \mat{X}_t^{-1}$, so that we may rewrite \eqref{eq:Riccati_equation} as a system of coupled but linear differential equations,
\begin{align}
    \label{eq:X2main} \dot{\mat{X}}_t = \ &  - \Big( \mat{F}_t -  \mat{B}_t \mat{S}_t \mat{R}^{-1}_t \mat{H}_t \Big)^\Trans \mat{X}_t  +  \mat{H}^\Trans_t \mat{R}^{-1}_t \mat{H}_t \mat{Y}_t, \\
    \label{eq:Y2main} \dot{\mat{Y}}_t = \ &  \Big( \mat{F}_t - \mat{B}_t \mat{S}_t \mat{R}^{-1}_t \mat{H}_t\Big) \mat{Y}_t + \mat{B}_t\Big( \mat{Q}_t - \mat{S}_t \mat{R}^{-1}_t \mat{S}^\Trans_t \Big) \mat{B}^\Trans_t \mat{X}_t,
\end{align}
with the initial condition now corresponding to $\mat{X}_0 = \I$ and $\mat{Y}_0 = \mat{\Sigma}_0$.

\begin{figure}[t]
    \centering
    \includegraphics[width = 0.65\textwidth]{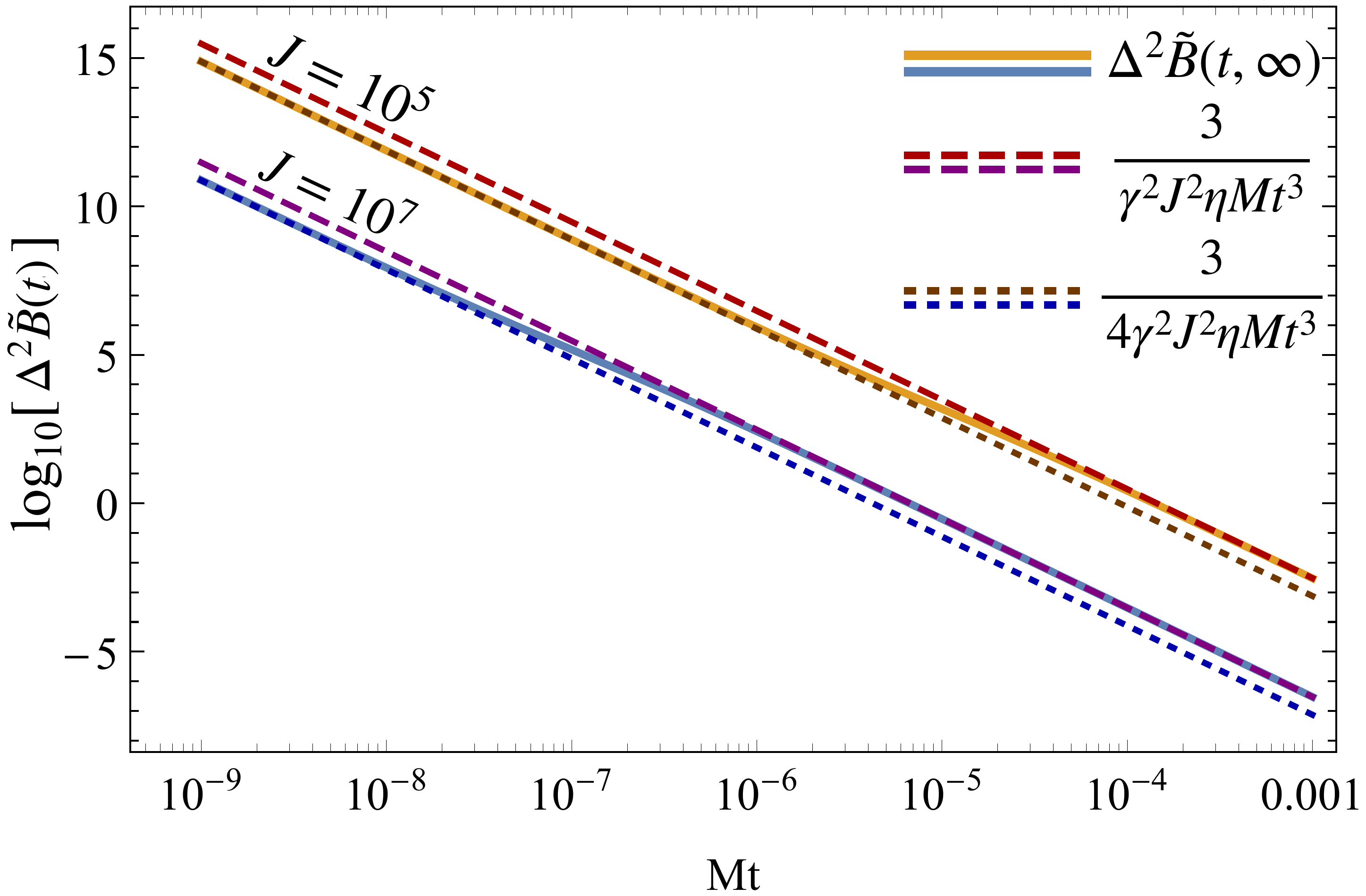}
    \caption{\textbf{Average mean squared error of the Kalman filter (the optimal estimator) in the noiseless scenario}, i.e.~$\Delta^2 \est{B}^\text{HL}_t(\sigma_0)$ of equation \eqref{eq:Geremia_sol_full} for an infinitely wide prior ($\sigma_0\to\infty$) compared with its small- and long-time approximations \eqref{eq:HS_approx}, for parameters:~$M = 100$kHz, $\eta = 1$, and $\gmr = 1$kHz/mG.} 
    \label{fig:Heisenberg_Scaling_fig}
\end{figure}

\subsection{Solution in the absence of decoherence and field fluctuations}
In the absence of collective noise and field fluctuations ($\gamma_y = 0$ and $q_B = \chi = 0$), the Riccati equation \eqref{eq:Riccati_equation} can be explicitly solved~\cite{Geremia2003}. Since $\gamma_y = 0$, the conditional dynamics of the $\hat{J}_z$-variance is then given by equation \eqref{eq:Geremia_VarJz}. Moreover, since no fluctuations of the field are being considered, the noise correlation matrix just reads $\mat{Q}_t = \mathrm{diag}(1,0)$. As a result, the decoupled system of differential equations introduced in \eqrefs{eq:X2main}{eq:Y2main} can be analytically solved, providing the solution for the minimal aMSE in estimating the $B$-field \eqref{eq:aMSE_Bt} as
\begin{align} \label{eq:Geremia_sol_full}
    \Delta^2 \est{B}^\text{HL}_t(\sigma_0) = [\mat{\Sigma}_t]_{22} = \frac{M^2}{16\eta \gmr^2 J^2} \; \frac{(1 + 2JMt\eta)}{a(t) \, e^{-M t} + 4 \, (1+4J\eta) \,  e^{-Mt/2} + b(t)},
\end{align}
where we have emphasised its dependence on the width of the prior, $\sigma_0$, and defined:
\begin{align}
  a(t) & = -(1 + 2\eta J(4 + Mt)), \\
  b(t) & = \frac{M^2}{16 \eta \gmr^2 J^2 \sigma^2_0} + \frac{M^3 t}{8 \gmr^2 J \sigma^2_0} + c(t).
\end{align}
with $c(t)=(Mt-3) + 2 \eta J (Mt-4)$. Note that for an infinitely wide prior, $\sigma_0 \rightarrow \infty$, $b(t) = c(t)$ and the aMSE \eqref{eq:Geremia_sol_full} matches consistently the solution obtained in \cite{Geremia2003} with $\Delta^2 \est{B}^\text{HL}_t(\infty)$ exhibiting the non-classical scaling in both $t$ and $J$---following the Heisenberg limit (HL) $\sim1/N^2$ with $N=J/2$---as mentioned already in equation \eqref{eq:ideal_behaviour}. This becomes clear after making the approximation summarised in \fref{fig:Heisenberg_Scaling_fig}, i.e.:
\begin{subequations}
\label{eq:HS_approx}
	\begin{numcases}{\Delta^2 \est{B}_t^\text{HL}(\infty) \approx}
    	\frac{3}{\eta M t^3} \; \frac{1}{4 \gmr^2 J^2}, & \text{for  $t\ll (JM)^{-1}$,} \label{eq:HS_approxA}\\
        \frac{3}{\eta M t^3} \; \frac{1}{\gmr^2 J^2}, & \text{for $(JM)^{-1}\ll t \quad(< M^{-1})$,} \label{eq:HS_approxB}
	\end{numcases}
\end{subequations}%
where the terms \eqref{eq:HS_approxA} and \eqref{eq:HS_approxB} are obtained by Taylor-expanding $\Delta^2 \est{B}^\text{HL}_t(\infty)$ in equation \eqref{eq:Geremia_sol_full} to leading order in time $t$ and $(J M t)^{-1}$, respectively.

\section{Impact of Noise}
\label{sec:impact_noise}
In \fref{fig:signal_analysis}d we have already presented (solid blue line) the behaviour of the minimal aMSE, $\Delta^2 \est{B}_t$, in presence of noise ($\gamma_y>0$), which we have determined by numerically solving the Riccati equation \eqref{eq:Riccati_equation} after assuming no initial knowledge about the field ($\sigma_0 \to \infty$), and setting $\gamma_y<\eta M$. The latter condition is necessary for the minimal aMSE to initially follow the noiseless solution introduced in \eqref{eq:HS_approx} and, hence, exhibit a ``supraclassical'' scaling in time at short timescales, $\sim 1/t^3$, as only then the spin-squeezing induced by the continuous measurement can occur---the conditional variance of $\hat{J}_z$ decreases with time, see \eqref{eq:Geremia_VarJz} and \ref{ap:VarSol}, so that $\xi^2(t)<1$ in equation \eqref{eq:squeezing_cases}, as depicted in \fref{fig:squeezing}a.

Importantly, although the spin-squeezing is maintained at larger timescales (see \fref{fig:signal_analysis}c), the noiseless behaviour of aMSE in \fref{fig:signal_analysis}d is quickly lost due to noise. In particular, the minimal aMSE firstly follows a classical-like scaling, $\sim1/t$, before attaining a constant value at timescales $t_S \sim 1$, at which still $\xi^2(t)<1$. In the upcoming sections we explicitly show how the presence of collective decoherence and field fluctuations causes the ``supraclassical'' scaling to be lost both in time $t$ and the atom number $N$ ($J=N/2$). We achieve this by giving analytical solutions to what in \fref{fig:signal_analysis}d we refer to as the \emph{classical simulation limit} and the \emph{steady state}.

\subsection{No-go theorem for the Heisenberg limit:~the classical simulation method}
Within the Bayesian approach to estimation theory there exist various general lower bounds on the minimal aMSE, which are commonly referred to as \emph{Bayesian} (or global) \emph{Cram\'{e}r-Rao Bounds} (BCRBs)~\cite{bobrovsky1987}. Still, in the case of linear and Gaussian stochastic processes many of these simplify to a single form~\cite{van2007bayesian,Fritsche2014}. When interested in estimating the process at its last step---here, the magnetic field $B_t$ at the final time $t$--- the applicable bound is the so-called \emph{marginal unconditional} BCRB \cite{Fritsche2014}:
\begin{align} \label{eq:BCRB}
    \MSE{\est{B}_t} \geq (J_B)^{-1},
\end{align}
where $J_B$ is the \emph{Bayesian information} (BI)~\cite{van2007bayesian} evaluated at time $t$:
\begin{align} \label{eq:BI}
    J_B = \mean{p(B_t,\vec{y}_{\le t})}{\left(\partial_{B_t} \log p(B_t,\vec{y}_{\le t}) \right)^2}.
\end{align}
The BI can be conveniently split into two terms:
\begin{align}
    J_B = J_P + J_M,
\end{align}
where $J_P$ represents our knowledge about $B_t$ prior to any estimation, and $J_M$ accounts for the contribution of the measurement record $\vec{y}_{\le t}$. Formally, the BI associated with our prior knowledge, $J_P$, corresponds to the \emph{Fisher information} (FI)~\cite{Kay1993} evaluated with respect to the estimated field value $B_t$ for the distribution $p(B_t)$ defined in \eqref{eq:prior_update}, i.e.:
\begin{align} \label{eq:J_P}
    J_P = \F[p(B_t)] = \mean{p(B_t)}{\left(\partial_{B_t} \log p(B_t) \right)^2}.
\end{align}
In contrast, $J_M$ reads
\begin{align} \label{eq:J_M}
    J_M = \E_{p(B_t,\vec{y}_{\le t})} \left[ \left( \partial_{B_t} \log p(\vec{y}_{\le t},B_t) \right)^2\right] = \int dB_t \; p(B_t) \, \F[p(\vec{y}_{\le t}|B_t)],
\end{align}
where
\begin{equation}
    \F[p(\vec{y}_{\le t}|B_t)]=\mean{p(\vec{y}_{\le t}|B_t)}{\left( \partial_{B_t} \log p(\vec{y}_{\le t}|B_t) \right)^2}
    \label{eq:FI_p(y_t|B_t)}
\end{equation}
is now the FI of the distribution $p(\vec{y}_{\le t}|B_t)$ evaluated again with respect to $B_t$.

The prior contribution to the BI \eqref{eq:J_P} can always be ignored by letting the prior distribution describing our knowledge about the field at $t=0$, $p(B_0)$, be of infinite width~\cite{van2007bayesian}---see \ref{ap:prior_distribution} where we demonstrate that $J_P = 0$ for $\sigma_0^2 \to\infty$. In contrast, the contribution of the measurement record to the BI \eqref{eq:J_M} in general requires one to explicitly evaluate the FI of $p(\vec{y}_{\le t}|B_t)$ for every value $B_t$ may take. In this work, we avoid doing so by constructing an upper-bound on $\F[p(\vec{y}_{\le t}|B_t)]$ that is determined by the noise in the atomic magnetometry scenario, which turns out to be independent of the form of the continuous measurement and the initial state of the atoms.

In order to do so, we discretize the time evolution such that $t = k \, \delta t$---this does not prevent us from obtaining the results in the continuous-time limit, which can be recovered by finally letting $\delta t \to 0$~\cite{jacobs_straightforward_2006}. As a result, the continuous-measurement record $\vec{y}_{\le t}$ can be described by a finite set of (time-ordered) outcomes, $\vec{y}_{k} = \{y_0,y_1, \dots , y_k\}$, collected at the end of each time interval indexed by $j=0,1,\dots,k$. Similarly, the evolution of the magnetic field corresponds now to a discrete-time stochastic process, $\vec{B}_{k} = \{B_0,B_1,\dots,B_k\}$, with $B_j$ representing the value taken by the $B$-field throughout the $j$th time interval. Moreover, the conditional quantum state of the ensemble at time $t=k\delta t$, previously described by $\rho_\cc(t)\equiv\rho(t|\vec{y}_{\le t})$ as the solution to the conditional dynamics \eqref{eq:cond_dyn_meas}, can now be explicitly written as
\begin{align}
    \rho_k\equiv\rho(t|\vec{y}_{k})
    =
    \frac{E_{y_{k}}\,\Lambda_{B_k}\!\left[\dots E_{y_{1}}\,\Lambda_{B_1}\!\left[E_{y_{0}}\,\Lambda_{B_0}\!\left[\rho_{0}\right]E_{y_{0}}^{\dagger}\right]E_{y_{1}}^{\dagger}\dots\right]E_{y_{k}}^{\dagger}}{\tr{E_{y_{k}}\,\Lambda_{B_k}\!\left[\dots E_{y_{1}}\,\Lambda_{B_1}\!\left[ E_{y_{0}}\,\Lambda_{B_0}\!\left[\rho_{0}\right]E_{y_{0}}^{\dagger}\right]E_{y_{1}}^{\dagger}\dots\right]E_{y_{k}}^{\dagger}} },
    \label{eq:con_state_discr}
\end{align}
where the continuous measurement is formally described by a set of measurement operators\footnote{However, the following construction is also valid for schemes in which the form of the continuous measurement (and, hence, operators $E_{y_j}$) changes between time intervals, i.e.~$E_{y_j}^{(j)}\ne E_{y_j}^{(j')}$ for $j\ne j'$.} $E_{y_j}$ that form a positive-operator valued measure (POVM), i.e.~$\{E_{y_j}^\dagger E_{y_j}\}_{y_j}$ with $\sum_{y_j} E_{y_j}^\dagger E_{y_j} =\I$ for every $j$. In the expression \eqref{eq:con_state_discr}, every $\Lambda_{B_j}$ denotes the \emph{quantum channel} (a completely positive trace-preserving map) describing the evolution of the atomic ensemble during the $j$th interval in between measurements. Such evolution will then solely incorporate the Larmor precession under the magnetic field $B_j$ (constant in that time interval), and the collective decoherence. A scheme describing this sequence of channels interspersed with measurements is depicted in \fref{fig:scheme}a (top).

The denominator of the expression \eqref{eq:con_state_discr} should be interpreted as the probability of registering the measurement record $\vec{y}_{k}$ given a particular (discrete) trajectory of the $B$-field $\vec{B}_{k}$, i.e.:
\begin{equation}
    p(\vec{y}_{k}|\vec{B}_{k}) =
    \tr{E_{y_{k}}\,\Lambda_{B_k}\!\left[\dots E_{y_{1}}\,\Lambda_{B_1}\!\left[ E_{y_{0}}\,\Lambda_{B_0}\!\left[\rho_{0}\right]E_{y_{0}}^{\dagger}\right]E_{y_{1}}^{\dagger}\dots\right]E_{y_{k}}^{\dagger}}.
    \label{eq:in_B_out_y}
\end{equation}
Moreover, its \emph{marginal} distribution can then be determined using the Bayes' rule as
\begin{equation}
    p(\vec{y}_{k}|B_{k}) = \frac{p(\vec{y}_{k},B_{k})}{p(B_{k})} = \frac{1}{p(B_{k})}\int d\vec{B}_{k-1}\;p(\vec{B}_{k})\;p(\vec{y}_{k}|\vec{B}_{k}),
    \label{eq:p(y_k|B_k)}
\end{equation}
constituting the discrete-time equivalent of $p(\vec{y}_{\le t}|B_t)$ appearing in equation \eqref{eq:J_M}.

The crucial step in our construction is the observation that the dynamics of the atomic ensemble in between measurements can be \emph{classically simulated} (CS)~\cite{matsumoto_metric_2010,Demkowicz2012,kolodynski_efficient_2013}. In particular, each quantum channel $\Lambda_{B_j}$ in equation \eqref{eq:con_state_discr} can be equivalently replaced by a probabilistic mixture of unitary evolutions such that only the \emph{classical} mixing probability depends on the instantaneous $B$-field value $B_j$---see \fref{fig:scheme}b. In other words, we ``shift'' the encoding of the parameter $B_t$ from a quantum channel to a classical probability, which we show to be Gaussian. Namely, for any $j=0,1,\dots,k$:
\begin{equation}
    \Lambda_{B_j}[\;\bigcdot\;] = \int\! d\omega_j \; q(\omega_j|B_j) \; \Unitary_{\omega_j,\delta t}[\;\bigcdot\;],
\end{equation}
where $\omega_j$ is an auxiliary frequency-like random variable distributed according to the Gaussian distribution:
\begin{equation} \label{eq:in_omega_out_B}
    \omega_j
    \; \sim \;
    q(\omega_j|B_j) = \mathcal{N}\!\left(\gmr B_j,\frac{\gamma_y}{\delta t}\right)\!,
\end{equation}
which parametrises now a \emph{noiseless} Larmor precession of the ensemble within the $j$th time interval described by the unitary transformation:
\begin{equation}
    \Unitary_{\omega_j,\delta t} [\;\bigcdot\;] = e^{-i\omega_j \hat{H} \delta t} \;\bigcdot\; e^{i \omega_j \hat{H} \delta t}
\end{equation}
with the Hamiltonian $\hat{H}=\hat{J}_y$ consistent with the dynamics \eqref{eq:cond_dyn}. A step by step demonstration of this statement is presented in \ref{sec:CS_decomposition}. Moreover, within the setting here considered, the random variable $B_j$ follows a (discrete-time) OU process \eqref{eq:OU_process}, while the initial $B_0$ is drawn from a Gaussian prior distribution of variance $\sigma_0^2$.

As a consequence, by now defining the set of particular frequencies $\omega_j$ chosen in each time interval, $\vec{\omega}_k= \{\omega_0,\omega_1,\dots,\omega_k\}$, we can interpret the conditional distribution of the measurement record \eqref{eq:in_B_out_y} as a probabilistic average over the frequency-like parameters:
\begin{align} \label{eq:equiv}
    p(\vec{y}_{k}|\vec{B}_{k})
    = \mean{q(\vec{\omega}_{k}|\vec{B}_{k})}{p(\vec{y}_{k}|\vec{\omega}_{k})}
    = \int d\vec{\omega}_{k} \; q(\vec{\omega}_{k}|\vec{B}_{k}) \; p(\vec{y}_{k}|\vec{\omega}_{k}),
\end{align}
where $q(\vec{\omega}_{k}|\vec{B}_{k}) = \prod_{j=0}^k q(\omega_j|B_j)$ is just a product distribution and
\begin{align}
    p(\vec{y}_{k}|\vec{\omega}_{k}) =  \trace{\!\left\{ E_{y_{k}}\Unitary_{\omega_{k}\delta t}\left[\dots E_{y_{1}}\Unitary_{\omega_{1}\delta t}\left[E_{y_{0}}\Unitary_{\omega_{0}\delta t}\left[\rho_{0}\right]E_{y_{0}}^{\dagger}\right]E_{y_{1}}^{\dagger}\dots\right]E_{y_{k}}^{\dagger}\right\}}
\end{align}
is now the conditional probability of detecting a set of measurement records $\vec{y}_{k}$, given a particular set of frequencies $\vec{\omega}_{k}$ dictating subsequent unitary evolutions of the atomic ensemble within each time interval.

The decomposition \eqref{eq:equiv} proves the equivalence between a circuit of non-unitary quantum channels describing the noisy dynamics of the atomic ensemble, and the average ($\mean{q(\vec{\omega}_k|\vec{B}_k)}{\dots}$ denoted in \fref{fig:scheme} with $\pmb{\langle}\dots\pmb{\rangle}$) of circuits involving noiseless unitary evolutions, whose frequencies $\vec{\omega}_{k}$ are drawn from a distribution $q(\vec{\omega}_{k}|\vec{B}_{k})$ containing all the information about the field trajectory $\vec{B}_k$. We represent this equivalence schematically in \fref{fig:scheme}a.

\begin{figure}[t]
    \centering
    \includegraphics[width=0.8\textwidth]{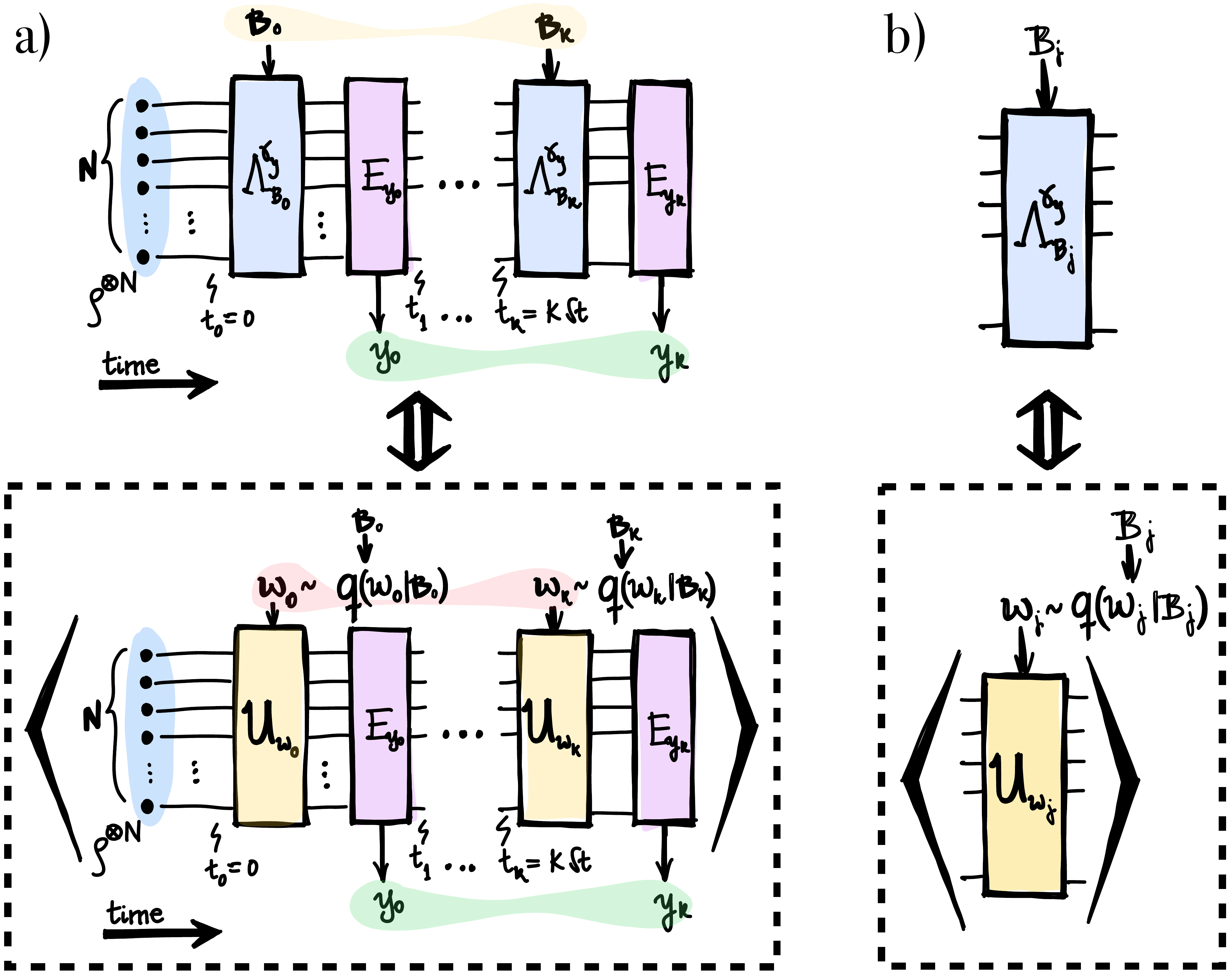}
    \caption{\textbf{Concept of the classical simulation method}. The sensing scheme (a, \emph{top}) involving a continuous measurement can be interpreted as a sequence of quantum channels $\Lambda_{B_i}$ (b, \emph{top})---each dependent on the value of the magnetic field $B_j$---interspersed by instantaneous measurements forming a POVM, $\{E_{y_j}^\dagger E_{y_j}\}_{y_j}$, for every $j=0,1,\dots,k$. As each $\Lambda_{B_j}$ is equivalent to a mixture of unitary channels $U_{\omega_j}$ (b, \emph{bottom}), where the mixing probability $q(\omega_j|B_j)$ has a $B_j$-dependence, the whole scheme can effectively be interpreted as if the consecutive field-values $B_j$ (shaded in \emph{yellow}) are first encoded into the mixing probabilities (a, \emph{bottom}), which only then determine the unitary channels to be applied in the sequence. As a result, any inference strategy with access to all the random variables $\omega_{j}$ (shaded in \emph{red}) can only perform better than the one based on the actual measurement outcomes $y_j$ (shaded in \emph{green}).}
    \label{fig:scheme}
\end{figure}

Focussing on the marginal distribution of interest \eqref{eq:p(y_k|B_k)}, we substitute the decomposition \eqref{eq:equiv} into it, in order to obtain
\begin{align}\label{eq:bigP}
  p(\vec{y}_{k}|B_k) & = \int\! d\vec{\omega}_{k} \;  p(\vec{y}_{k}|\,\vec{\omega}_{k}) \left[\frac{1}{p(B_k)}\int\! d\vec{B}_{k-1} \; p(\vec{B}_{k}) \, q(\vec{\omega}_{k}|\vec{B}_{k})\right]
  = \S_{\vec{\omega}_{k}\to\vec{y}_{k}}\!\left[\mathbb{P}_{B_k}(\vec{\omega}_{k})\right],
\end{align}
where we can now identify $\S_{\vec{\omega}_{k}\to\vec{y}_{k}}[\; \bigcdot \;]=\int\! d\vec{\omega}_{k} \,  p(\vec{y}_{k}|\,\vec{\omega}_{k}) \, \bigcdot $ as a stochastic map independent of the estimated $B_k$, and define the probability distribution function:
\begin{equation}\label{eq:bigP_def}
  \mathbb{P}_{B_k}(\vec{\omega}_{k}) = \frac{1}{p(B_k)} \int d\vec{B}_{k-1} \, p(\vec{B}_{k}) \, q(\vec{\omega}_{k}|\vec{B}_{k}),
\end{equation}
which effectively describes the information about $B_k$ contained within $q(\vec{\omega}_{k}|\vec{B}_{k})$.

Crucially, as the FI is generally contractive (monotonic) under the action of stochastic maps~\cite{Kay1993,Jarzyna2020}, we can now upper-bound $\F[p(\vec{y}_{\le t}|B_t)]$ in \eqref{eq:FI_p(y_t|B_t)} as follows
\begin{align}
    F[p(\vec{y}_{k}|B_k)] & = \F\!\left[\S_{\vec{\omega}_{k}\to\vec{y}_{k}}\!\left[\mathbb{P}_{B_k}(\vec{\omega}_{k})\right]\right] \leq \F \left[  \mathbb{P}_{B_k}(\vec{\omega}_{k}) \right],
\end{align}
and evaluate the FI of $\mathbb{P}_{B_k}(\vec{\omega}_{k})$, which we denote as $\mathbb{P}_{B_t}(\vec{\omega}_{<t})$ upon letting $\delta t \rightarrow 0$ (with $t = k\delta t$) to recover the continuous-time limit. In \ref{sec:bayesian_info_M}, we explicitly show that for the case of infinitely wide ($\sigma_0 \to\infty$) Gaussian prior distribution $p(B_0)$, the FI reads
\begin{equation}
    \F \left[\mathbb{P}_{B_t}(\vec{\omega}_{<t}) \right] = \sqrt{\frac{\gmr^2}{\gamma_y q_B}}\, \tanh\!\left( t \sqrt{\frac{q_B \gmr^2}{\gamma_y}} \right),
    \label{eq:CS_bound_tanh}
\end{equation}
and being independent of the estimated $B_t$ directly sets an upper bound in $J_M$, i.e.:
\begin{align}
  J_M & = \int  dB_t \, p(B_t) \F[p(\vec{y}_{\le t}|B_t)] \\
  & \leq \int dB_t \, p(B_t) \F \left[  \mathbb{P}_{B_t}(\vec{\omega}_{<t}) \right] = \sqrt{\frac{\gmr^2}{\gamma_y q_B}} \, \tanh\!\left( t \sqrt{\frac{q_B \gmr^2}{\gamma_y}} \right).
  \label{eq:finalJM}
\end{align}
Hence, by using also the fact that $J_P = 0$ when $\sigma_0 \to \infty$, we conclude that
\begin{align} \label{eq:LowerBound_BCRB}
    \Delta^2 \est{B}_t \geq \left(J_P + J_M\right)^{-1}
    &= J_M^{-1} \\
    &\geq \sqrt{\frac{\gamma_y \, q_B}{\gmr^2}} \coth{\left( t \sqrt{\frac{q_B \gmr^2}{\gamma_y}}\right)} \equiv \Delta^2 \est{B}_t^\text{CS}(q_B)
    \label{eq:CSlimit_qB} \\
    &\geq \frac{1}{\gmr^2}\frac{\gamma_y}{t} = \Delta^2 \est{B}_t^\text{CS}(0),
    \label{eq:CSlimit_qB_0}
\end{align}
where the last inequality follows from $x\coth( x)\ge1$ for any $x>0$ with the \emph{CS limit}, $\Delta^2 \est{B}_t^\text{CS}(q_B)$, monotonically increasing with $q_B$---as expected, the error is predicted to deteriorate with an increase in the strength of the field fluctuations.

Although we differ the full proof of the upper bound \eqref{eq:finalJM} to \ref{sec:bayesian_info_M}, let us note that in the simplest scenario with the magnetic field being time-invariant, the CS limit can be straightforwardly derived and takes the form $\Delta^2 \est{B}_t^\text{CS}(0)$ as in equation \eqref{eq:CSlimit_qB_0}. For a constant $B$-field, the probability distribution \eqref{eq:bigP_def} simplifies to the \emph{product} $\mathbb{P}_{B}(\vec{\omega}_{k}) = q(\vec{\omega}_{k}|B) = \prod_{j=0}^k q(\omega_{j}|B)$, whose FI is now evaluated with respect to $B$ and reads:
\begin{align}\label{eq:constantB_Fisher}
  \F\left[\mathbb{P}_{B}(\vec{\omega}_{k})\right] & = \F \left[ \prod_{j=0}^k q(\omega_{j}|B) \right] = \sum_{j=0}^{k} \F\left[ q(\omega_{j}|B)\right] =  k\,\F\left[ q(\omega_{j}|B)\right] = \gmr^2 \frac{k \delta t}{\gamma_y} = \gmr^2 \frac{t}{\gamma_y},
\end{align}
so that the CS limit \eqref{eq:CSlimit_qB_0} then directly follows.

In general, the CS limit $\Delta^2 \est{B}_t^\text{CS}(q_B)$ defined in \eqref{eq:CSlimit_qB} can always be approximated independently of the value of $q_B$ by splitting the timescales into two regimes, as follows:
\begin{subequations}
\label{eq:CSlimit}
\begin{numcases}{\Delta^2 \est{B}_t^\text{CS}(q_B) \approx}
     \Delta^2 \est{B}_t^\text{CS}(0)=\frac{1}{\gmr^2}\frac{\gamma_y}{t}, & \text{if\quad$t \lesssim \frac{1}{\gmr}\sqrt{\frac{\gamma_y}{q_B}}$,} \label{eq:CSlimit_lowt} \\
     \Delta^2 \est{B}_{\infty}^\text{CS}(q_B)=\frac{\sqrt{\gamma_y q_B}}{\gmr}, & \text{if\quad$t\gtrsim\frac{1}{\gmr}\sqrt{\frac{\gamma_y}{q_B}} \quad(\lesssim (M+\gamma_y)^{-1})$}, \label{eq:CSlimit_hight}
\end{numcases}
\end{subequations}
where the expression \eqref{eq:CSlimit_lowt} implies that the CS limit \eqref{eq:CSlimit_qB} always simplifies at short times to its form applicable in absence of field fluctuations, i.e.~$\Delta^2 \est{B}_t^\text{CS}(0)$ in \eqref{eq:CSlimit_qB_0}.

Finally, let us highlight that the derivation of the CS limit is independent of the measurement dynamics and the initial state, and depends only on the form of the noisy quantum channel describing the evolution of the system in each discretised step between subsequent measurements.

\subsection{Steady-state solution of the Kalman filter}
\label{sec:Kalman_steady_state}
Returning to the particular measurement model and the estimation strategy considered, we determine the \emph{steady state} (SS) solution of the KF derived in section \ref{sec:Kalman_filter}, as it may then be directly compared with the fundamental CS limit determined above. Still, for simplicity, we focus here only on the SS solution when the magnetic field fluctuates according to the Wiener process~\cite{Gardiner1985}, i.e.~the special case of the OU process \eqref{eq:OU_process} with $\chi = 0$. The complete solution for $\chi > 0$ is presented in \ref{ap:SSS}.

In general, the SS is attained by the KF when its covariance matrix $\mat{\Sigma}_t$ no longer changes in time, so that $\frac{d\mat{\Sigma}_t}{dt}=0$ in equation \ref{eq:Riccati_equation}. This corresponds to the solution of equating the r.h.s.~of \eqref{eq:Riccati_equation} to zero---solving the corresponding (non-linear) algebraic Riccati equation~\cite{crassidis2011optimal}. In our case, finding the SS solution becomes much easier after noting that for $t \gg t^*$ (guaranteed as $t\to\infty$ for SS) $\braket{\Delta^2 \hat{J}_z (t)}_\cc \approx V_{>t^*}(t)$, as in \eqref{eq:varJz_t>>t*}. Then, the SS solution for $\Delta^2\est{B}_t=[\mat{\Sigma}_t]_{22}$ can be shown (see \ref{ap:SSS}) to read
\begin{align} \label{eq:ss_field_sol}
    \Delta^2\est{B}_t^\text{SS} =
     \left( \frac{q_B \, \gamma_y}{\gmr^2} + \frac{1}{\gmr J} \sqrt{\frac{q_B^3}{M \eta}} e^{(M+\gamma_y)t/2} \right)^{1/2},
\end{align}
whose second term in the parenthesis---the one surviving in the absence of noise ($\gamma_y=0$)---has been derived previously in~\cite{Stockton2004}. In contrast, it is the first term arising due to the decoherence considered here ($\gamma_y>0$) that always dominates for large $J$.

In particular, at the relevant timescales $t\lesssim(M+\gamma_y)^{-1}$ whenever $J \gtrsim \frac{\gmr}{\gamma_y}\sqrt{q_B/(\eta M)}$, the second term within the brackets in \eqref{eq:ss_field_sol} becomes negligible and the SS solution \eqref{eq:ss_field_sol} coincides with the CS limit \eqref{eq:CSlimit_hight} valid at long timescales, i.e.:
\begin{equation}\label{eq:ss_to_CS}
 \Delta^2\est{B}_t^\text{SS}
 \;\underset{J\gg1}{=}\;
 \Delta^2 \est{B}_\infty^\text{CS}(q_B)
  = \frac{\sqrt{q_B \gamma_y}}{\gmr},
\end{equation}
which moreover implies that---see also the main plot of \fref{fig:dBvsts}---$\Delta^2\est{B}_t^\text{SS} \approx \Delta^2 \est{B}_t^\text{CS}(q_B)$, as long as further $t \gtrsim \frac{1}{\gmr}\sqrt{\gamma_y/q_B}$ for which the CS limit takes the form \eqref{eq:CSlimit_hight}.

Note that this proves that in presence of decoherence ($\gamma_y>0$) for large enough ensemble size ($J=N/2\,\gg\,1$) the chosen type of continuous measurement (homodyne-like model) and the initial state of the atomic ensemble (CSS) \emph{always} give the best possible precision in estimating the fluctuating magnetic field ($q_B>0$) in the SS regime---bearing in mind that the linear-Gaussian approximation made throughout our work must hold. Crucially, this conclusion holds for large atomic ensembles also at short timescales at which the SS solution does not apply, as we will now explicitly show.

\subsection{Different regimes exhibited by the estimation error}
\label{sec:regimes}
Once the CS limit and the SS solution of the KF have been explicitly derived, we finally have all the information we need to fully understand the evolution of the estimation error in \fref{fig:signal_analysis}d, which we supplement now with additional plots of the aMSE with respect to time and ensemble size ($J=N/2$) in figures \ref{fig:dBvsts} and \ref{fig:dBvsJ}, respectively.

\subsubsection{Estimation error as a function of time}~\\
We depict the time evolution of the estimation error, i.e.~the aMSE of the KF, in \fref{fig:dBvsts}. Importantly, as shown in the main plot, for \emph{``large'' ensembles} (we quantify ``large'' below) the aMSE attains the fundamental CS limit \eqref{eq:CSlimit_qB} that holds for any potential measurement and estimation strategy, hence, proving that the magnetometry scheme achieves then the ultimate precision dictated by the decoherence.

\begin{figure}[t]
    \centering
    \includegraphics[width = .85\textwidth]{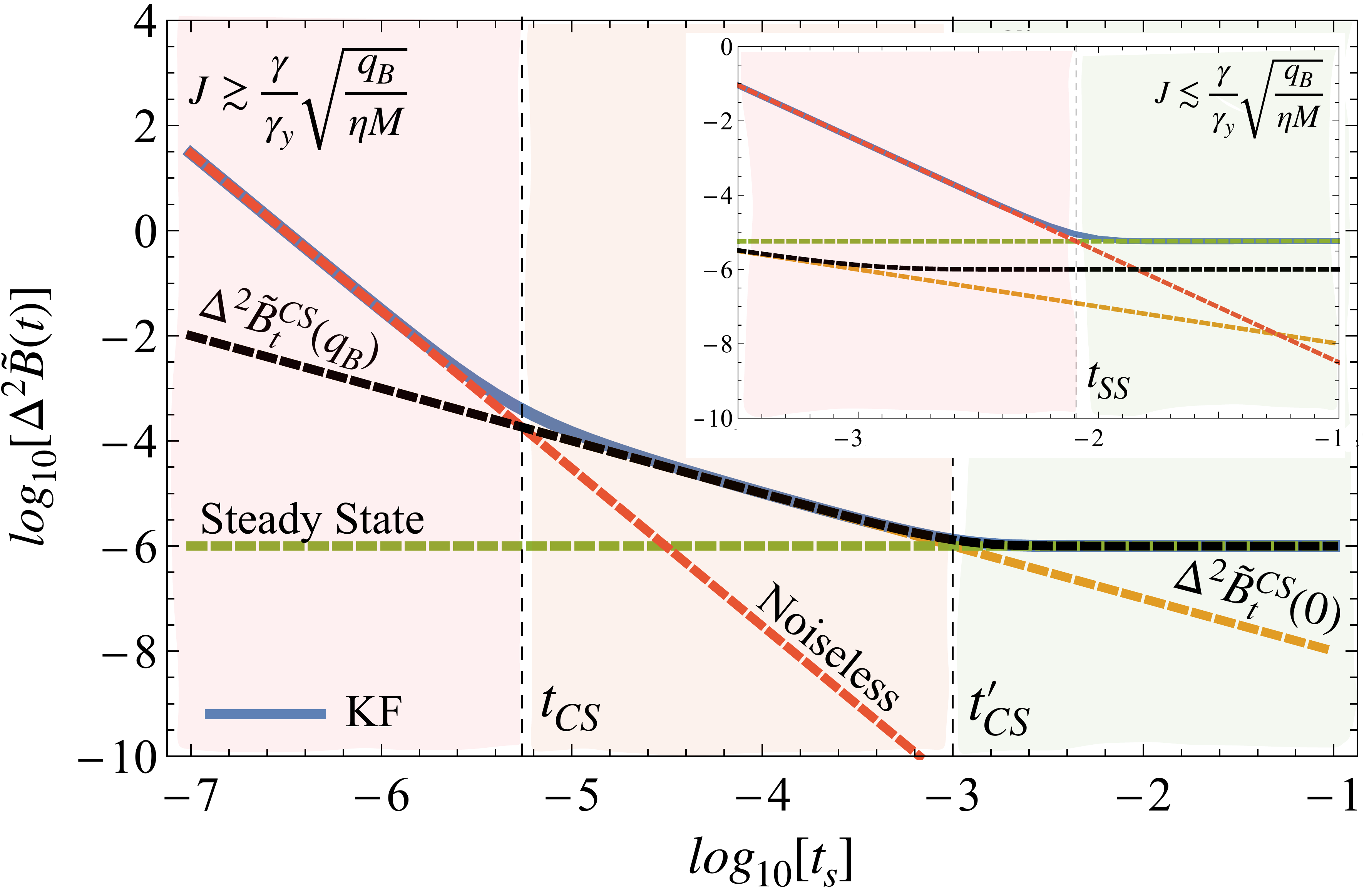}
    \caption{
    \textbf{Estimation error as a function of time for ``large'' (main plot) and ``small'' (inset) atomic ensembles}:~here for $J = 10^9$ and $J = 10^5$, respectively.
    \emph{Solid blue} lines represent the average mean squared error (aMSE) of the Kalman filter (KF), while the dashed lines of different colours denote:~the noiseless solution (\emph{red}), the exact classical simulation (CS) limit (\emph{black}), the CS limit in the absence of field fluctuations (\emph{orange}), and the steady-state (SS) solution of the KF (\emph{green}). Within the main plot the \emph{dashed vertical black} lines mark the transition times, $t_\text{CS}$ and $t_\text{CS}^\prime$  defined in \eqref{eq:transition_ts}, between the noiseless-like ($\propto 1/t^3$, \emph{ros\'{e}}), ``classical''-like ($\propto 1/t$, \emph{orange}) and steady-state ($=\!\sqrt{q_B \gamma_y}/\gmr$, \emph{green}) regimes. Within the latter two, the CS limit \eqref{eq:CSlimit_qB} is importantly saturated. In contrast, for ``small'' ensembles of $J\lesssim \frac{\gmr}{\gamma_y}\sqrt{q_B/(\eta M)}$ (see the inset) the aMSE of the KF saturates, directly after exhibiting the noiseless-like behaviour, the SS solution \eqref{eq:ss_field_sol} at $t_\text{SS}$ defined in \eqref{eq:transition_ts}, without attaining the CS limit at all. Other parameters are set to:~$M = 100$kHz, $\gmr = 1$kHz/mG, $q_B = 100$G$^2$/s, $\gamma_y = 100$mHz, $\chi = 0$ and $\eta = 1$, while the rescaled time is defined as $t_S = (M + \gamma_y)t$.
    }
    \label{fig:dBvsts}
\end{figure}

In such a setting, the behaviour of error as a function of time can be intuitively explained. Firstly, at very short timescales for which the atomic decoherence can be ignored, the aMSE follows the \emph{noiseless-like} (or supra-classical) scaling of $1/t^3$. Once the impact of decoherence ``kicks in'', its behaviour transitions to a \emph{``classical''-like regime} exhibiting $1/t$ scaling, as determined by the CS limit in \eqref{eq:CSlimit_lowt}. At long times, at which the field fluctuations are optimally compensated by the KF, it reaches its \emph{steady state} (SS) and the error saturates at a constant value $\sqrt{q_B \gamma_y}/\gmr$. However, let us emphasise again that this value arises due to the presence of decoherence ($\gamma_y>0$), which in this case dominates the SS solution of the KF in equation \eref{eq:ss_to_CS} and, in fact, assures the error to attain the ultimate CS limit, in particular, its long-time behaviour \eqref{eq:CSlimit_hight}.

In contrast, for \emph{``small'' ensembles}, as presented in the inset, the aMSE does not reach the CS limit. It is so, as the impact of the atomic decoherence can then be actually neglected in comparison to the field fluctuations. As a consequence, the KF reaches quickly with time its SS solution, which is now effectively independent of the decoherence, i.e.~$\Delta^2\est{B}_t^\text{SS}\approx\sqrt[4]{q_B^3/(M \eta \gmr^2 J^2)}$ in equation \eref{eq:ss_field_sol}. One can then interpret the magnetometer to operate in a ``perfect'' manner, with its precision being dictated fully by just the characteristics of the field.

Note that the above two settings are formally distinguished by whether the SS solution in \eqref{eq:ss_field_sol} attains or not the long-time CS limit \eqref{eq:CSlimit_hight}---or, in other words, whether the green line can cross the black line in figure \ref{fig:dBvsts}. Hence, this condition defines naturally what constitutes a ``large'' or ``small'' ensemble above, i.e., $J\gtrsim J'_\text{CS}$ or $J\lesssim J'_\text{CS}$, respectively, where
\begin{equation}
  J_\text{CS}^\prime = \frac{\gmr}{\gamma_y} \sqrt{\frac{q_B}{\eta M}}
  \label{eq:J'_CS}
\end{equation}
follows from the derivation of equation \eqref{eq:ss_to_CS}.

Moreover, by evaluating the times at which the dominant behaviours of the error appearing in figure \ref{fig:dBvsts} cross one another, we explicitly determine the transition times between all the different regimes:
\begin{equation}
  t_\text{CS} = \frac{1}{J}\sqrt{\frac{3}{\eta M \gamma_y}}, \qquad t_\text{CS}^\prime = \frac{1}{\gmr} \sqrt{\frac{\gamma_y}{q_B}}, \qquad  t_\text{SS}  =  \frac{3^{1/3}}{\gmr^2 J^2 \eta M q_B},
  \label{eq:transition_ts}
\end{equation}
which are valid as long as $t\lesssim(M+\gamma_y)^{-1}$ (and the linear-Gaussian approximation holds). In particular, $t_\text{CS}$ marks the transition time of the aMSE from the noiseless-like to the ``classical''-like regime, whose later transition to the steady-state regime occurs then at $t_\text{CS}^\prime$. Similarly, for ``small'' ensembles $t_\text{SS}$ indicates when the aMSE reaches its steady state.

\subsubsection{Estimation error as a function of the number of atoms}~\\
The aMSE as a function of the ensemble size ($J=N/2$) is presented in \fref{fig:dBvsJ}, where we show two distinct scenarios differentiated by the timescales considered. Still, we observe that the ultimate CS limit dictated by the decoherence is attained in both cases as long as a sufficiently large ensemble is considered---proving then the measurement scheme and the estimation procedure to be optimal.

\begin{figure}
    \centering
    \includegraphics[width = .85\textwidth]{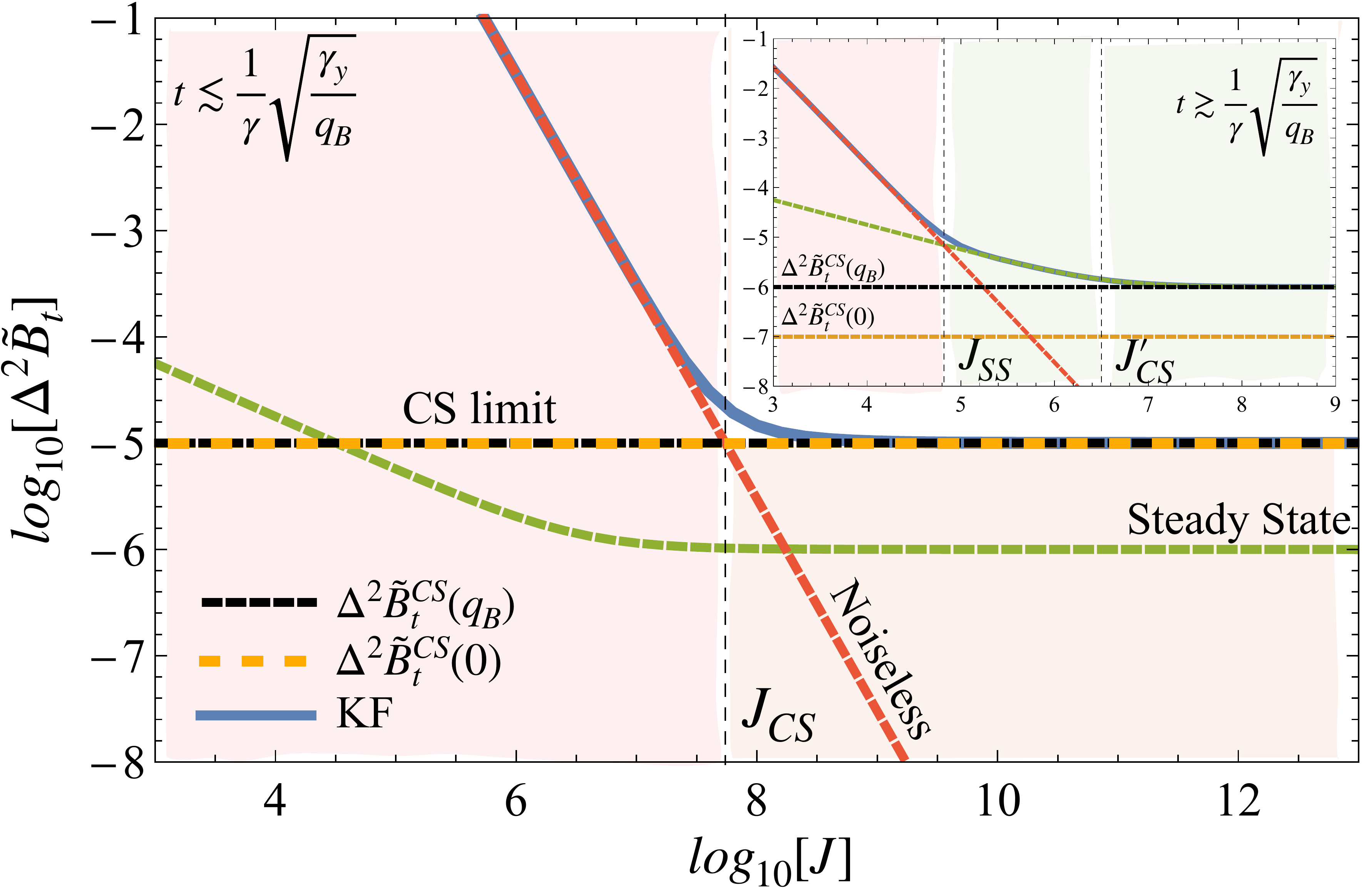}
    \caption{
    \textbf{Estimation error as a function of the ensemble size at ``short'' (main plot) and ``long'' (inset) timescales}: here at $t_S = 10^{-4}$ and $t_S = 10^{-2}$ rescaled times, respectively. The \emph{Solid blue} lines represent the average mean squared error (aMSE) of the Kalman filter (KF), while the dashed lines of different colours denote:~the noiseless solution (\emph{red}), the classical simulation (CS) limit (\emph{black}), the CS limit in the absence of field fluctuations (\emph{orange}), and the steady-state (SS) solution of the KF (\emph{green}). At ``short'' timescales of $t\lesssim \sqrt{\gamma_y/(\gmr^2q_B)}$ (main plot) the SS solution does not apply even for $J\to\infty$. The aMSE, after initially following with $J$ the Heisenberg-like behaviour  ($\propto1/J^2$, shaded in \emph{ros\'{e}}), saturates around $J_\text{CS}$ (defined in \eqref{eq:transition_Js} and marked with a \emph{black dashed vertical} line) at a constant value---given by the CS limit $\gamma_y/(\gmr^2\,t)$ derived in \eqref{eq:CSlimit_lowt}. In contrast, at ``long'' timescales (inset) the aMSE of the KF is explicitly given by its SS solution \eqref{eq:ss_field_sol} for all $J$ above $J_\text{SS}$ defined in \eqref{eq:transition_Js} (\emph{green}-shaded area). As a result, the aMSE, after following with $J$ initially the Heisenberg-like behaviour up to $J_\text{SS}$, firstly scales as $1/\sqrt{J}$ before saturating around $J_\text{CS}^\prime$ defined in \eqref{eq:J'_CS} again at the CS limit, which is now given by the expression \eqref{eq:CSlimit_hight} instead. As in \fref{fig:dBvsts}, we set the other parameters to:~$M = 100$kHz, $\gmr = 1$kHz/mG, $q_B = 100$G$^2$/s, $\gamma_y = 100$mHz, $\chi = 0$ and $\eta = 1$, while the rescaled time is defined as $t_S = (M + \gamma_y)t$.
    }
    \label{fig:dBvsJ}
\end{figure}

In particular, for \emph{``short'' timescales} (main plot), the error as a function of $J$ initially follows the \emph{Heisenberg-like} scaling $1/J^2$ before attaining the short-time CS limit \eqref{eq:CSlimit_lowt} once the effect of the collective noise becomes significant. For \emph{``long'' timescales} (inset), the KF optimally compensates for the field fluctuations and its SS solution applies as long as $J$ is sufficiently large. Still, the long-time CS limit \eqref{eq:CSlimit_hight} is achieved as $J\to\infty$, as it dictates in this case the SS solution \eqref{eq:ss_to_CS} affected by the collective noise ($\gamma_y>0$). The time threshold differentiating between these two cases, $t'_\text{CS}$, is then given by \eqref{eq:CSlimit}---coinciding consistently with the definition in \eqref{eq:transition_ts}.

Importantly, both when the aMSE (solid blue line) attains the CS limit at ``short'' timescales $t\lesssim t'_\text{CS}$ (the main plot of \fref{fig:dBvsJ}), or at ``long'' timescales $t\gtrsim t'_\text{CS}$ (the inset of \fref{fig:dBvsJ}), it saturates at a constant value as $J$ is increased. In each case, that occurs at different ensemble sizes:~$J_\text{CS}$ and $J^\prime_\text{CS}$, respectively. In the latter case, one can also identify a threshold value $J_\text{SS}$, above which the aMSE exhibits a $1/\sqrt{J}$-scaling before saturating at $J^\prime_\text{CS}$. All these can be determined by comparing the dominant behaviours of the error within each regime, and read:
\begin{equation}
  J_\text{CS} = \frac{1}{t} \sqrt{\frac{3}{\eta M \gamma_y}},
  \qquad
  J_\text{SS} = \frac{3^{2/3}}{\gmr t^2 \sqrt{\eta M q_B}},
  \label{eq:transition_Js}
\end{equation}
while $J'_\text{CS}$ coincides consistently with the definition \eqref{eq:J'_CS}.

\section{Conclusions}
\label{sec:conclusions}
We have studied the problem of sensing a magnetic field in real time within the canonical atomic magnetometry setting---a polarised spin-ensemble is being continuously probed in the perpendicular direction to induce spin-squeezing of the atoms, so that a quantum-enhanced precision in estimating the field can be maintained. Within our model we have importantly incorporated both the stochastic fluctuations of the field (in the form of an Ornstein-Uhlenbeck process) as well as collective decoherence (affecting the ensemble as a whole) into the magnetometer conditional dynamics, depending on a particular measurement record collected continuously in time.

As a result, while considering the magnetometer evolution at short timescales within the linear-Gaussian regime, we have computed explicitly the optimal estimator---the Kalman filter---and studied the behaviour of its error both in time, $t$, as well as the effective size of the atomic ensemble, $J=N/2$ with $N$ being the total number of atoms. Moreover, we have developed a \emph{classical simulation method} thanks to which we have established a fundamental limit on the precision that is induced solely by the decoherence. Crucially, as the so-obtained limit applies to any type of state-preparation and continuous-measurement scheme---as long as the conditional dynamics of the magnetometer in between subsequent measurements is not changed---it has allowed us to prove that the continuous measurement model of our interest may often be considered to be optimal in presence of any, even infinitesimal, noise.

In particular, the corresponding average mean squared error (aMSE) of the Kalman filter, which at short timescales always follows the quantum-enhanced behaviour in both time and the number of atoms, i.e.~$1/t^3$ and $1/N^2$, respectively, is bound to saturate eventually the limit dictated by the noise. From the perspective of the time dependence, the noise constrains the aMSE to follow a ``classical''-like $1/t$ scaling before the Kalman filter reaches its steady-state solution---which, in fact, coincides with the ultimate long-time precision induced by the decoherence for large atomic ensembles.
If instead we focus on the dependence of the aMSE with respect to the number of atoms $N$, we observe that the impact of the collective noise is even more drastic, as the aMSE (after following the Heisenberg-like behaviour $1/N^2$) saturates eventually at a constant value---whether or not the Kalman filter operates within the steady-state regime. Furthermore, our analysis allows to straightforwardly estimate both the times and the numbers of atoms at which the transitions between such regimes occur.

Our work paves the way for finding new methods of incorporating effects of decoherence in real-time sensing protocols, while stemming from Bayesian inference techniques combined with tools previously developed within noisy quantum metrology. In particular, as the classical simulation method we have invoked relies on properties of the effective quantum channel describing the dynamics, and not the design of the continuous-measurement scheme under study, e.g.~any adaptive control operations that it incorporates, it should also be directly applicable to sensing protocols involving quantum feedback~\cite{zhang_quantum_2017}. On the other hand, although within our work we have focussed on the estimation task in which only past measurement data may be used for inference (filtering), we believe that our results can be naturally extended to smoothing protocols~\cite{Tsang2010,Zhang2017,Huang2018} that include also retrodiction of data, and have been recently implemented experimentally~\cite{bao_retrodiction_2020,bao_spin_2020}. Moreover, although we have dealt here with the setting of atomic magnetometry, let us stress that the techniques we have presented can also be applied to other real-time sensing platforms, e.g.:~cavity-based experiments with cold atoms incorporating feedback~\cite{hosten_measurement_2016,Cox2016}, requiring similar continuous-measurement theory~\cite{Shankar2019}; or optomechanical devices ~\cite{wieczorek_optimal_2015,rossi_observing_2019,Iwasawa2013} and levitated nanoparticles~\cite{Setter2018,magrini_optimal_2020} that naturally evolve respecting Gaussian dynamics, and hence directly require Kalman-filtering and alike techniques.

\ack{}{}
\addcontentsline{toc}{section}{Acknowledgements}
This research was supported by the Foundation for Polish Science within the “Quantum Optical Technologies” project carried out within the International Research Agendas programme cofinanced by the European Union under the European Regional Development Fund, as well as the QuantERA ERA-NET Cofund in Quantum Technologies implemented within the EU's Horizon 2020 Programme (C’MON-QSENS! project) via the National Science Centre Poland.

\appendix
%
\addtocontents{toc}{\setlength{\cftsecnumwidth}{16ex}}
\addtocontents{toc}{\setlength{\cftsubsecnumwidth}{16ex}}

\section{Unconditional dynamics of $\braket{\hat{J}_x(t)}$ with field fluctuations}
\label{ap:UncondJx}
The unconditional evolution of $\braket{\hat{J}_x(t)}$ can be computed from the stochastic set of differential equations \eqrefs{eq:uncondJx}{eq:OU_process2}, where $B_t$ follows the OU process described in \eqref{eq:OU_process}. Although the full system of differential equations can in principle be solved numerically, we would like to find approximate analytical solutions valid in particular parameter regimes.

In particular, we focus on timescales short enough, such that we can assure $\larmorfreq(t)\, t\ll1$, where $\larmorfreq(t)=\gmr B_t$ is the instantaneous Larmor frequency following the fluctuations of $B_t$. In order to identify and ensure such timescales, we enforce $\overline{\larmorfreq(t)}\, t\ll1$, where $\overline{\larmorfreq(t)}=\gmr \,\oline{B}_t$ is now the time-average over the duration $t$ with
\begin{equation}
    \oline{B}_t = \frac{1}{t} \int_0^t d\tau B_\tau.
    \label{eq:AvB_t}
\end{equation}
Replacing $B_t$ by $\,\oline{B}_t$ within the system of differential equations \eqrefs{eq:uncondJx}{eq:OU_process2} and treating it as a constant, we solve them for $\braket{\hat{J}_x(t)}$ to get
\begin{align}
    \braket{\hat{J}_x(t)} = \frac{J}{2\Theta} \, e^{- (M + \gamma_x + 2\gamma_y + \gamma_z + \Theta)\,t/4} \left(M - \gamma_x + \gamma_z + \Theta - e^{t\Theta/2} (M - \gamma_x + \gamma_z - \Theta ) \right),
\end{align}
where $\Theta = \sqrt{(M-\gamma_x+\gamma_z)^2 - \left(4 \gmr \, \oline{B}_t\right)^2}$. Then, by expanding the above expression to leading order in $\oline{B}_t$, we obtain
\begin{align}
    \braket{\hat{J}_x(t)}
    & \approx
    J e^{- (M + \gamma_y + \gamma_z)\,t/2}\left(1 + 2 \, \gmr^2 \, \oline{B}_t^2 \, \frac{2-2e^{(M-\gamma_x+\gamma_z)t/2}+t(M-\gamma_x+\gamma_z)}{(M-\gamma_x+\gamma_z)^2}\right), \label{eq:ap_UncondJx_Taylor} \\
    & \approx
    J e^{- (M + \gamma_y + \gamma_z)\,t/2}\left(1 +  \frac{\gmr^2 \, \oline{B}_t^2 \, t^2}{2}\right),
    \label{eq:ap_UncondJx_appr1}
\end{align}
where in \eqref{eq:ap_UncondJx_appr1} we have further assumed $t\lesssim(M-\gamma_x+\gamma_z)^{-1}$, so that $e^{(M-\gamma_x+\gamma_z)t/2} \approx 1 + \frac{1}{2}(M-\gamma_x+\gamma_z)t + \frac{1}{8}(M-\gamma_x+\gamma_z)^2t^2$ holds up to the second order in $t$.

Hence, it seems that we may approximate the dynamics of the mean value of the spin-component $\hat{J}_x$ as
\begin{align}
    \braket{\hat{J}_x(t)} \approx J e^{- (M + \gamma_y + \gamma_z)\,t/2},
    \label{eq:ap_UncondJx_appr}
\end{align}
which constitutes the basis for the linear-Gaussian approximation introduced in equation \eqref{eq:Jx_approx}, as long as: (i) $t\lesssim(M-\gamma_x+\gamma_z)^{-1}$ and (ii) $t \lesssim \frac{\sqrt{2}}{\gmr \, |\,\oline{B}_t|}$; of which the latter condition allows us to neglect the term in the parenthesis in \eqref{eq:ap_UncondJx_appr1}. Moreover, as the noise-parameter $\gamma_x$ does not enter the dynamics any more, after letting $\gamma_x\approx0$ and reparametrising the measurement strength as in section \ref{sec:spin_squeez}, $M\to M-\gamma_z$, the condition (i) simplifies to $t\lesssim1/M$ and is naturally satisfied by the more stringent requirement on the rescaled time $t_S=(M+\gamma_y)t$ to always obey $t_S\lesssim1$, which we ensure throughout the main text.

However, we must be more careful when ensuring the validity of condition (ii). As $B_t$ follows a stochastic process, the time-average $\,\oline{B}_t$ defined in \eqref{eq:AvB_t} is a random variable in itself. Therefore, we must further assure that the condition (ii) holds for almost all stochastic trajectories of $B_t$. We achieve this by applying the 68-95-99.7 rule, which allows us to state that if
\begin{equation} \label{eq:t_UB_AvB_t}
	t \lesssim \frac{\sqrt{2}}{\gmr \, \left|\mean{\oline{B}_t} \pm 2\sqrt{\text{Var}[\, \oline{B}_t]}\right|},
\end{equation}
then the approximation \eqref{eq:ap_UncondJx_appr} of \eqref{eq:ap_UncondJx_appr1} is valid with 95\% probability.

Evaluating the mean of $\,\oline{B}_t$ defined in \eqref{eq:AvB_t}, we obtain
\begin{equation}
    \mean{\,\oline{B}_t} = \frac{1}{t} \int_0^t \mean{B_\tau} d\tau
    = \frac{1}{t} \int_0^t B_0 \, e^{-\chi t} d\tau
    = \frac{B_0}{\chi t}\left( 1-e^{-\chi t}\right),
\end{equation}
where $\mean{B_\tau}=B_0 e^{-\chi t}$ for the OU process \eqref{eq:OU_process}~\cite{Gardiner1985}. Similarly,
we calculate the variance of the time-averaged value of the magnetic field, $\,\oline{B}_t$, as
\begin{align}
    \text{Var}[\, \oline{B}_t]
    & = \mean{(\,\oline{B}_t)^2}-\mean{\,\oline{B}_t}^2 \\
    & = \frac{q_B}{2\chi^3 t^2} (4e^{-\chi t} + 2\chi t -e^{-2\chi t} - 3) - \frac{B_0^2}{\chi^2 t^2}\left( 1-e^{-\chi t}\right)^2,	
\end{align}
using the expression for the two-time correlation function of the OU process \eqref{eq:OU_process}, i.e.~\cite{Gardiner1985}:
\begin{align}
    \mean{B_s B_t} = \frac{q_B}{2\chi} \left(e^{-\chi|t-s|} - e^{-\chi(t+s)} \right),
\end{align}
and evaluating
\begin{align}
    \mean{(\,\oline{B}_t)^2}
    &= \frac{1}{t^2} \int_0^t \! d\tau_1 \int_0^t \! d\tau_2 \; \mean{B_{\tau_1}B_{\tau_2}}\\
    &= \frac{1}{t^2} \int_0^t \! d\tau_1 \int_0^t \! d\tau_2 \; \frac{q_B}{2\chi} \left(e^{-\chi |\tau_1 - \tau_2|} - e^{-\chi \tau_1}e^{-\chi \tau_2}\right)\\
    &= \frac{q_B}{2\chi t^2} \left[ \int_0^t d\tau_1 \left( \int_0^{\tau_1} d\tau_2 \, e^{-\chi(\tau_1 - \tau_2)} + \int_{\tau_1}^t d\tau_2 \,  e^{\chi(\tau_1 - \tau_2)}\right) - \left( \frac{1-e^{-\chi t}}{\chi}\right)^2 \right] = \\
    &= \frac{q_B}{2\chi t^2} \left[ \frac{1}{\chi} \int_0^t d\tau_1 \, \left(1 - e^{-\chi \tau_1} - e^{\chi (\tau_1-t)} + 1\right) - \left( \frac{1-e^{-\chi t}}{\chi}\right)^2 \right] = \\
    &=\frac{q_B}{2\chi^3 t^2} (4e^{-\chi t} + 2\chi t -e^{-2\chi t} - 3).
\end{align}

\begin{figure}[t]
  \centering
  \includegraphics[width=0.95\linewidth]{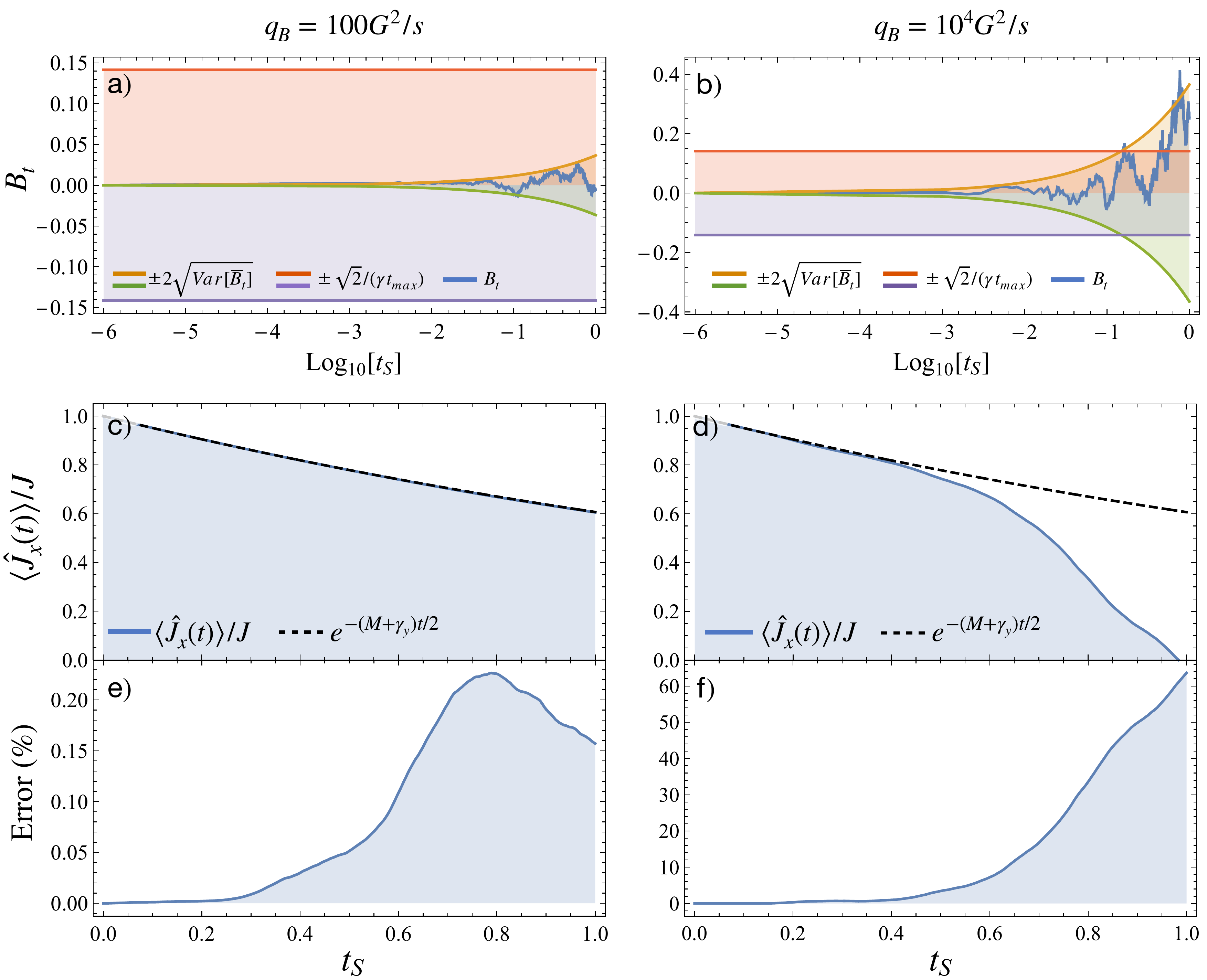}
  \caption{The parameters used to generate the plots are $\gmr = 1kHz/mG$, $M = 100kHz$, $\gamma_y = 1Hz$, $J = 10^7$, $\eta = 1$, and $\chi = 0$, with $t_S = (M + \gamma_y)t$ being the rescaled time such that $t_S = 1$ when $t = t_\trm{max} \equiv (M + \gamma_y)^{-1}$. Plots (a), (c), and (d) (left column)  have been generated with a field fluctuation strength of $q_B = 100G^2/s$, and plots (b), (d), (f) (right column) with $q_B = 10^4G^2/s$. The first row (plots (a) and (b)) show the fluctuating field in solid blue, juxtaposed with the confidence interval of $\overline{B}_t$, $\pm 2\sqrt{\text{Var}[\, \oline{B}_t]}$, as well as the upper bound on $|\overline{B}_t|\lesssim\sqrt{2}(M + \gamma_y)/\gmr$ for which equation \eqref{eq:ap_UncondJx_appr1} can be approximated by \eqref{eq:ap_UncondJx_appr}. The plots in the second row (subfigures (c) and (d)), compare the exact solution of $\braket{\hat{J}_x(t)}$ with its approximation \eqref{eq:ap_UncondJx_appr}, $J e^{-(M + \gamma_y)t/2}$. Finally, in the bottom row, plots (e) and (f) show the error percentage for this approximation of $\braket{\hat{J}_x(t)}$.
  }
  \label{fig:AppendixA_Figure}
\end{figure}

Although when constructing the optimal estimator (KF) of $B_t$ at finite time $t>0$ we do not want to assume any prior knowledge about the value of the initial field $B_0$, for the purpose of this calculation we take the magnetic field to always be initialised in $B_0=0$, so that it is solely the presence of the OU process \eqref{eq:OU_process} that invalidates the condition \eqref{eq:t_UB_AvB_t} over time. In such a case, we have
\begin{align}
\left|\mean{\oline{B}_t} \pm 2\sqrt{\text{Var}[\, \oline{B}_t]}\right| = \left|2\sqrt{\mean{(\,\oline{B}_t)^2}}\right|
& = \sqrt{\frac{2q_B}{\chi}}\sqrt{\frac{4e^{-\chi t} + 2\chi t -e^{-2\chi t} - 3}{\chi^2 t^2}} \\
& \approx \sqrt{\frac{2q_B}{\chi}}\sqrt{\frac{2 \chi t}{3}}=\sqrt{\frac{4q_Bt}{3}}, \label{eq:AvB_t_max}
\end{align}
where the above approximation holds as long as $t \ll \frac{4}{3\chi}$. Hence, substituting \eqref{eq:AvB_t_max} into \eqref{eq:t_UB_AvB_t}, we obtain the desired condition setting an upper limit on valid timescales as a function of the fluctuations strength $q_B$, i.e:~$t \lesssim \frac{1}{2\gmr}\sqrt{\frac{3}{q_Bt}} \implies t \lesssim \left(\frac{3}{4\gmr^2 q_B}\right)^{1/3}$.

In summary, we conclude that the linear-Gaussian approximation \eqref{eq:ap_UncondJx_appr} is valid at timescales short enough such that all the three conditions $t\lesssim(M+\gamma_y)^{-1}$, $t \ll \frac{4}{3\chi}$ and $t \lesssim \left(\frac{3}{4\gmr^2 q_B}\right)^{1/3}$ hold. Throughout our work we assure these by focussing on the first one and considering the rescaled time $t_S=(M+\gamma_y)t$ to always fulfil $t_S\lesssim1$ (as also done in \fref{fig:AppendixA_Figure}). However, we must then also require the field fluctuations to be small enough such that the other two conditions are satisfied for any $t_S\lesssim1$. In particular, this is achieved by considering the decay parameter and the fluctuation strength of the OU process \eqref{eq:OU_process} to fulfil for any $t\lesssim(M+\gamma_y)^{-1}$:
\begin{align}
\chi \ll \frac{4}{3t} &\qquad\implies\qquad \chi \ll\frac{4(M+\gamma_y)}{3}
\label{eq:chi_constraint}\\
q_B \lesssim \frac{3}{4\gmr^2 t^3} &\qquad\implies\qquad q_B \lesssim \frac{3(M+\gamma_y)^3}{4\gmr^2},
\label{eq:qB_constraint}	
\end{align}
which we more generally state for $r=M+\gamma_y+\gamma_z$ in equation \eqref{eq:OU_process_constraints} of the main text.

The importance of the upper constraint on the strength of field fluctuations $q_B$ we demonstrate in \fref{fig:AppendixA_Figure}, where we present two exemplary Wiener ($\chi=0$) trajectories of $B_t$ with $q_B=10^2$ and $q_B=10^4$ for $3(M+\gamma_y)^3/(4\gmr^2)\approx10^3$ (all in $[G^2/s]$), such that in the latter case the condition \eqref{eq:qB_constraint} is  clearly invalidated. Correspondingly, as directly seen from \fref{fig:AppendixA_Figure}b, the range of valid timescales \eqref{eq:t_UB_AvB_t} is surpassed around $t_S=0.4$, at which the linear-Gaussian approximation \eqref{eq:ap_UncondJx_appr} also ceases to hold, as shown in \fref{fig:AppendixA_Figure}d. In contrast, for $q_B=10^2$ the approximation \eqref{eq:ap_UncondJx_appr} is consistently maintained with accuracy within $0.25\%$ for any $t_S\lesssim1$---see \fref{fig:AppendixA_Figure}e.

\section{Conditional dynamics of the variance $\Delta^2 \hat{J}_z$}
\label{ap:VarSol}
The differential equation for the conditional variance of $\hat{J}_z$ introduced in \eqref{eq:var_eq_Jz_before} reads
\begin{align} \label{eq:ap_Vardif}
     d\braket{\Delta^2 \hat{J}_z (t)}_\cc & = -4 M \eta \braket{\Delta^2 J_Z(t)}^2_\cc dt + \gamma_y J^2 e^{-(M + \gamma_y) t} dt,
\end{align}
whose solution can be expressed in terms of modified Bessel functions of first and second kind, $\mathcal{I}_{\beta}[\tinyspace \cdot \tinyspace]$ and $\mathcal{K}_{\beta}[\tinyspace \cdot \tinyspace]$, and regularized confluent hypergeometric functions $\prescript{}{0}{F}_1[\tinyspace \cdot \tinyspace]$, i.e.:
\begin{align} \label{eq:exactsol}
      \braket{\Delta^2 \hat{J}_z (t)}_\cc = V_e(t) & = J e^{-(M+\gamma_y)t/2}  \Bigg(\mathcal{I}_{1}\Big[2\beta\Big] \Big(\sqrt{\eta \, \gamma_y M} \tinyspace \mathcal{K}_0[2\alpha] - \gamma_y \tinyspace \mathcal{K}_1[2\alpha]\Big) + \nonumber  \\
    &  + \mathcal{K}_1\Big[2\beta\Big] \Big(\gamma_y \, \mathcal{I}_1[2\alpha] + \sqrt{\eta \, \gamma_y M}  \, \prescript{}{0}{F}_1[1,\alpha^2] \Big) \Bigg)  \nonumber  \\
    &  \Bigg/ \Bigg( 2\prescript{}{0}{F}_1[1,\beta^2] \Big(\sqrt{\eta \, \gamma_y M} \mathcal{K}_1[2\alpha] - M\eta \, \mathcal{K}_0[2\alpha]\Big) \nonumber  \\
    &  +\frac{2 \eta M}{M + \gamma_y} \mathcal{K}_0[2\beta] \Big((M+\gamma_y) \, \prescript{}{0}{F}_1[1,\alpha^2] + 2\gamma_y J  \, \prescript{}{0}{F}_1[2,\alpha^2]\Big) \Bigg),
\end{align}
where $\alpha = 2J\sqrt{\eta \, \gamma_y M}/(M + \gamma_y)$ and $\beta = \alpha \tinyspace e^{-(M + \gamma_y)t/2}$. The behaviour of the solution \eqref{eq:exactsol} can be better understood when broken down into different regimes.

In order to do so, the first step is to expand the modified Bessel functions and the regularized confluent hypergeometric functions around infinity up to leading order---such an approximation is assured due to $\alpha \gg 1$ and $\beta \gg 1$. The relevant expansions are shown in \tref{tab:Bessel_exp}.
\begin{table}[ht]
\centering
\begin{tabular}{|p{4cm}|p{4cm}|p{4cm}|}
\hline
\begin{equation*}
\mathcal{I}_1[2\beta] \approx \frac{e^{2\beta}}{2\sqrt{\pi \beta}}
\end{equation*} &
\begin{equation*}
\mathcal{K}_0[2\alpha] \approx \frac{1}{2}\sqrt{\frac{\pi}{\alpha}}e^{-2\alpha}
\end{equation*} &
\begin{equation*}
\mathcal{K}_1[2\alpha] \approx  \frac{1}{2}\sqrt{\frac{\pi}{\alpha}}e^{-2\alpha}
\end{equation*}\\
\hline
\begin{equation*}
\mathcal{K}_1[2\beta] \approx  \frac{1}{2}\sqrt{\frac{\pi}{\beta}}e^{-2\beta}
\end{equation*} &
\begin{equation*}
\mathcal{I}_1[2\alpha] \approx \frac{e^{2\alpha}}{2\sqrt{\pi \alpha}}
\end{equation*} &
\begin{equation*}
\prescript{}{0}{F}_1[1,\alpha^2] \approx \frac{e^{2\alpha}}{2\sqrt{\pi \alpha}}
\end{equation*}\\
\hline
\begin{equation*}
\prescript{}{0}{F}_1[1,\beta^2] \approx \frac{e^{2\beta}}{2\sqrt{\pi \beta}}
\end{equation*} &
\begin{equation*}
\mathcal{K}_0[2\beta] \approx \frac{1}{2} \sqrt{\frac{\pi}{\beta}} e^{-2\beta}
\end{equation*} &
\begin{equation*}
\prescript{}{0}{F}_1[2,\alpha^2] \approx \frac{e^{2\alpha}}{2 \alpha \sqrt{\pi \alpha}}
\end{equation*}\\
\hline
\end{tabular}
\caption{Series expansions of the Bessel functions for $1/\alpha$ and $1/\beta$ around $1/\alpha_0 = 0$ and $1/\beta_0 = 0$, stated to leading order.}
\label{tab:Bessel_exp}
\end{table}

Substituting the leading-order expansions of \tref{tab:Bessel_exp} into the solution \eqref{eq:exactsol} and approximating $e^{4\beta}\approx e^{-2\alpha(-2+t(M+\gamma_y))}$, the variance of $\hat{J}_z(t)$ simplifies to
\begin{align}\label{eq:approx_varJz}
    \braket{\Delta^2 \hat{J}_z (t)}_\cc \approx \frac{1}{2}Je^{-(M+\gamma_y)t/2} \frac{\sqrt{M\gamma_y\eta} \cosh{(2Jt\sqrt{M\gamma_y \eta})}+\gamma_y \sinh{(2Jt\sqrt{M\gamma_y \eta})}}{\sqrt{M\gamma_y \eta} \cosh{(2Jt \sqrt{M\gamma_y\eta})}+M\eta \sinh{(2Jt\sqrt{M\gamma_y\eta})}}.
\end{align}
Note that if $2Jt\sqrt{M\gamma_y \eta} \gg 1$ then $\cosh{(2Jt\sqrt{M\gamma_y \eta})}  \approx \frac{1}{2} e^{2Jt\sqrt{M\gamma_y \eta}}$ and $\sinh{(2Jt\sqrt{M\gamma_y \eta})} \approx \frac{1}{2} e^{2Jt\sqrt{M\gamma_y \eta}}$. As a result, expression \eqref{eq:approx_varJz} simplifies further to the form stated in equation \eqref{eq:varJz_t>>t*}, i.e.:
\begin{align}\label{eq:approx_varJz_1}
    \braket{\Delta^2 \hat{J}_z (t)}_\cc \approx V_{>t^*}(t) = \frac{1}{2}Je^{-(M+\gamma_y)t/2} \sqrt{\frac{\gamma_y}{\eta M}}.
\end{align}
Analogously, if $2Jt\sqrt{M\gamma_y \eta} \ll 1$, then $\cosh{(2Jt\sqrt{M\gamma_y \eta})} \approx 1$ and $\sinh{(2Jt\sqrt{M\gamma_y \eta})} \approx 2Jt\sqrt{M\gamma_y \eta}$, and the expression \eqref{eq:approx_varJz} simplifies to the form stated in equation \eqref{eq:varJz_t<<t*}:
\begin{align}\label{eq:approx_varJz_2}
    \braket{\Delta^2 \hat{J}_z (t)}_\cc \approx V_{<t^*}(t) = Je^{-(M+\gamma_y)t/2} \, \frac{(1+2 J t \gamma_y)}{2+4 J t M \eta}.
\end{align}
Moreover, since $2Jt\sqrt{M\gamma_y \eta} \ll 1$, we can then derive the condition
\begin{align}
  t \ll t^{*} = \frac{1}{2J\sqrt{M\gamma_y \eta}}
  \quad\implies\quad
  2 J t \gamma_y \ll \sqrt{\frac{\gamma_y}{M \eta}},
\end{align}
for which approximation \eqref{eq:approx_varJz_2} holds.

From equation \eqref{eq:approx_varJz_2} it also follows that \emph{if} $\gamma_y < M$, then $(1+2 J t \gamma_y) \approx 1$ and, hence, for sufficiently short times scales $t \ll t^{*} < (M + \gamma_y)^{-1}$ we can always write
\begin{equation}
  \braket{\Delta^2 \hat{J}_z (t)}_\cc \approx \frac{J}{2+4 J t M \eta} e^{-(M+\gamma_y)t/2} \approx \frac{J}{2+4 J t M \eta}, \label{eq:ap_Geremia_sol}
\end{equation}
which is the noiseless solution derived originally in~\cite{Geremia2003}. In the other direction, by showing that $\braket{\Delta^2 \hat{J}_z (t)}_\cc$ is a non-decreasing function at $t=0_+$ if $\gamma_y \ge \eta M$, we can prove that \eqref{eq:ap_Geremia_sol} holds \emph{only if} $\gamma_y < \eta M$ for $J\gg1$. Namely, that we can consider the global decoherence $\gamma_y$ to be insignificant at small times $t \ll t^*$ only when $\gamma_y < \eta M$. In order to do so, we differentiate the expression \eqref{eq:approx_varJz_2} w.r.t.~$t$ and then let $t\to0$, i.e.:
\begin{equation}
  \lim_{t \rightarrow 0}\frac{d}{dt} \left[V_{<t^*}(t)\right] = -\frac{1}{4} J \,[\gamma_y(1-4J)+M(1+4 J \eta)].
\end{equation}
Setting the above expression to zero, we find the value of $\gamma_y$ for which the derivative changes signs at $t=0$, i.e.:
\begin{equation}
  \gamma_y = \frac{M + 4 J M \eta}{4J - 1} = M \eta + \frac{M(\eta + 1)}{4J} + \frac{M(\eta + 1)}{16 J^2} + O\!\left(\frac{1}{J^3}\right),
\end{equation}
which can be correctly approximated as $\gamma_y = \eta M$ when $J \gg 1$. Hence,
\begin{equation}
  \lim_{t \rightarrow 0} \frac{d}{dt} \left[ V_{<t^*} (t) \right] \geq 0 \quad\text{if}\quad \gamma_y \geq \eta M,
  \qquad\text{and}\qquad
  \lim_{t \rightarrow 0} \frac{d}{dt} \left[ V_{<t^*} (t) \right] < 0 \quad\text{if}\quad \gamma_y < \eta M,
\end{equation}
what proves that $\braket{\Delta^2 \hat{J}_z (t)}_\cc$ can be approximated as in \eqref{eq:ap_Geremia_sol} at short enough timescales \emph{only} when $\gamma_y < \eta M$.

Finally, in \fref{fig:ap_Variance_plots} we compare the exact solution for the variance given in \eqref{eq:exactsol} with the short/long-time approximations $ V_{<t^*} (t)$ and $ V_{>t^*} (t)$ shown in \eqref{eq:approx_varJz_2} and \eqref{eq:approx_varJz_1}, respectively.

\begin{figure}[t]
  \centering
  \includegraphics[width=\linewidth]{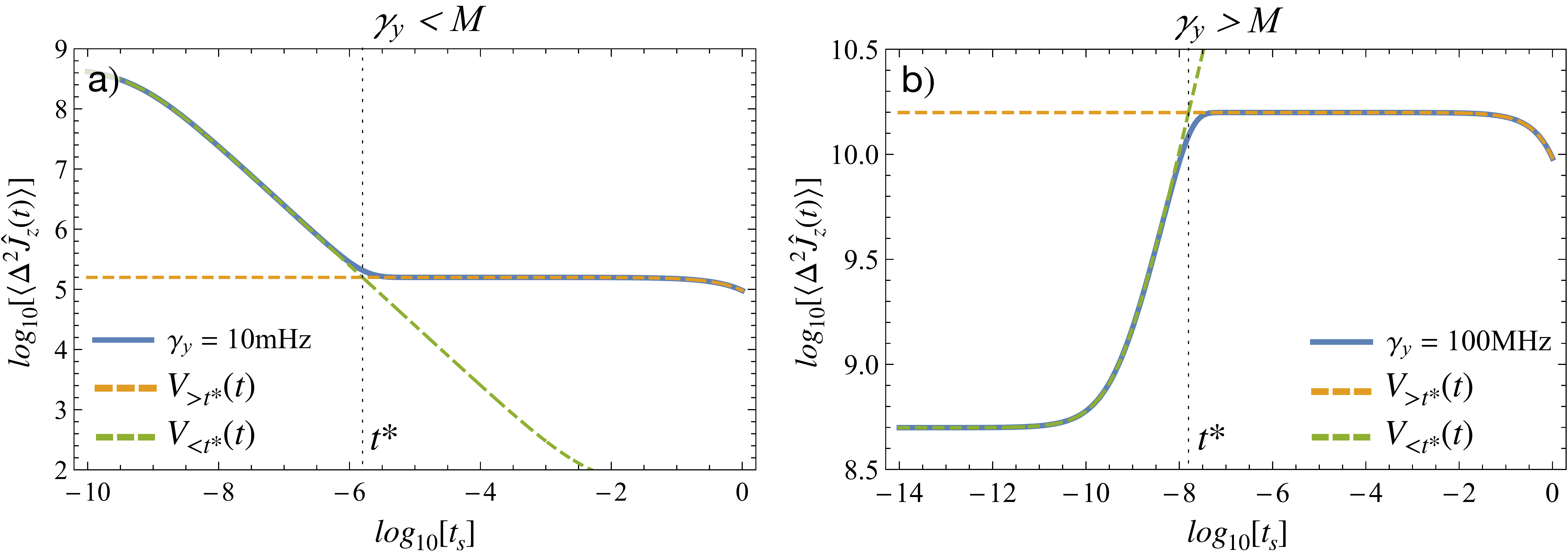}
  \caption{In subfigure (a) with $\gamma_y = 10mHz < M$, the exact variance solution $\braket{\Delta^2 \hat{J}_z (t)}_\cc$ is compared to the approximated functions $V_{<t^{*}}(t)$ and $V_{>t^{*}}(t)$ (dashed green and yellow, respectively). The transition time $t^*$ between these two regimes is marked with a dotted black vertical line. In subfigure (b) with $\gamma_y = 100 MHz > M$, the two different regimes $V_{<t^*}(t)$ (dashed green) and $V_{>t^*}(t)$ (dashed yellow) are superimposed with the exact solution of $\braket{\Delta^2 \hat{J}_z (t)}_\cc$ (in solid blue). Notation $t_S$ refers to a rescaled time, $t_S = t (M + \gamma_y)$. All plots have been generated with $M = 100$kHz, $\gmr = 1$kHz/mG, $\eta = 1$, and $J = 10^9$.}
  \label{fig:ap_Variance_plots}
\end{figure}

\section{Construction of the Kalman filter for correlated dynamics}
\label{ap:Corr_KF}
Consider the following system of stochastic differential equations within the It\^{o} calculus:
\begin{align}
    \label{eq:x} d\vecvar{x}_t & = \ \textbf{A}(\vecvar{x}_t,t) dt + \mat{B}(\vecvar{x}_t,t) d\vecvar{w}_t, \\
    \label{eq:y} d\vecvar{y}_t & = \ \textbf{C}(\vecvar{x}_t,t) dt + d\vecvar{v}_t,
\end{align}
where $d\vecvar{w}_t$ and $d\vecvar{v}_t$ are process and measurement Wiener noise-terms, respectively, that exhibit zero mean $\mean{d\vecvar{w}_t} = \mean{d\vecvar{v}_t} = 0$ and self-correlations:
\begin{align}
    \mean{d\vecvar{w}_t d\vecvar{w}_s^\Trans} & = \ \mat{Q}_t \tinyspace \delta(t-s) dt, \label{eq:dw}\\
    \mean{d\vecvar{v}_t d\vecvar{v}_s^\Trans} & = \ \mat{R}_t \tinyspace \delta(t-s) dt \label{eq:dv},
\end{align}
defined with help of (positive semi-definite) covariances $\mat{Q}_t,\mat{R}_t\ge0$.
Moreover, the cross-correlations between $d\vecvar{w}_t$ and $d\vecvar{v}_t$ are specified by the matrix $\mat{S}_t$:
\begin{align}
    \mean{d\vecvar{w}_t d\vecvar{v}_t^\Trans} = \mat{S}_t \tinyspace \delta(t-s) \tinyspace dt,
\end{align}
which is not necessarily symmetric. Note that the matrices $\mat{Q}_t$, $\mat{R}_t$ and $\mat{S}_t$ are not fully independent: since the covariance of the ``overall'' noise $d\textbf{T} = d\vecvar{w}_t \oplus d\vecvar{v}_t$, i.e.
\begin{align} \label{eq:ap_condition}
    \mean{d\mat{T}_t d\mat{T}_t^\Trans} = \left[\begin{array}{c|c} \mat{Q}_{t} & \mat{S}_{t}\\
    \hline \mat{S}_{t}^{T} & \mat{R}_{t} \end{array}\right]\ge0,
\end{align}
must be positive semi-definite by definition (constituting an outer product of a vector), the form that matrices $\mat{Q}_t$, $\mat{R}_t$ and $\mat{S}_t$ can take is then constrained.

Next, it is convenient to rewrite the system of differential equations \eqrefs{eq:x}{eq:y} as~\cite{crassidis2011optimal}:
\begin{align} \label{eq:dx_eq}
    d\vecvar{x}_t & = \ \mat{A}(\vecvar{x}_t,t) \ dt + \mat{B}(\vecvar{x}_t,t) \ d\vecvar{w}_t + \mat{D}(\vecvar{x}_t,t) \big(d\vecvar{y}_t - \mat{C}(\vecvar{x}_t,t) dt - d\vecvar{v}_t \big),
\end{align}
where $D(\vecvar{x}_t,t)$ can be set arbitrarily, since $d\vecvar{y}_t - \textbf{C}(\vecvar{x}_t,t) dt - d\vecvar{v}_t = 0$. By regrouping the terms in equation \eqref{eq:dx_eq} we obtain
\begin{align}
    d\vecvar{x}_t & = \big( \mat{A}(\vecvar{x}_t,t) - \mat{D}(\vecvar{x}_t,t) \mat{C}(\vecvar{x}_t,t) \big) dt + \mat{D}(\vecvar{x}_t,t) \ d\vecvar{y}_t + d\mat{Z}_t,
\end{align}
with $d\mat{Z}_t = \mat{B}(\vecvar{x}_t,t) d\vecvar{w}_t - \mat{D}(\vecvar{x}_t,t) d\vecvar{v}_t$ being now the effective process noise. This allows us to ensure that the correlations between the process and measurement noises, $d\mat{Z}_t$ and $d\vecvar{v}_t$, respectively, are vanishing---and after setting $\mat{D}(\vecvar{x}_t,t) = \mat{B}(\vecvar{x}_t,t)\mat{S}_t\mat{R}^{-1}_t$ without loss of generality we have
\begin{align}
    \mean{d\mat{Z}_t d\vecvar{v}_t^\Trans} = \big( \mat{B}(\vecvar{x}_t,t)\mat{S}_t - \mat{D}(\vecvar{x}_t,t)\mat{R}_t \big) \, dt = 0.
\end{align}

As a result, we obtain a set of stochastic differential equations that are equivalent to \eqrefs{eq:x}{eq:y}:
\begin{align}
    d\vecvar{x}_t & = \ \mat{A}(\vecvar{x}_t,t) dt + \mat{B}(\vecvar{x}_t,t)\mat{S}_t\mat{R}^{-1}_t\big(d\vecvar{y}_t-\textbf{C}(\vecvar{x}_t,t)dt\big) + \mat{B}(\vecvar{x}_t,t)d\mat{U}_t, \label{eq:x_uncorr}\\
    d\vecvar{y}_t & = \ \mat{C}(\vecvar{x}_t,t) dt + d\vecvar{v}_t \label{eq:y_uncorr},
\end{align}
where the process noise $d\mat{U}_t = \mat{B}(\vecvar{x}_t,t)^{-1} d\mat{Z}_t = d\vecvar{w}_t - \mat{S}_t\mat{R}^{-1}_t d\vecvar{v}_t$ is now uncorrelated from the measurement noise, $\mean{d\mat{U}_t d\vecvar{v}^\Trans_t} = 0$, but the dynamics of the state $\vecvar{x}_t$ explicitly depend on the observation $\vecvar{y}_t$ in \eqref{eq:y_uncorr}. Moreover, in contrast to \eqrefs{eq:dw}{eq:dv}, the self-correlations of process and measurement noises now read
\begin{align}
    &\mean{d\mat{U}_t d\mat{U}_t^\Trans} = \big(\mat{Q}_t - \mat{S}_t\mat{R}^{-1}_t\mat{S}^\Trans_t \big)dt, \\
    &\mean{d\vecvar{v}_t d\vecvar{v}^\Trans_t} = \mat{R}_t dt.
\end{align}
Now, in case of a linear model, the dynamical matrices in \eqrefs{eq:x}{eq:y} simplify to:
\begin{align}
    &\textbf{A}(\vecvar{x}_t,t) = \mat{F}_t \tinyspace \vecvar{x}_t, \\
    &\mat{B}(\vecvar{x}_t,t) = \mat{B}_t, \\
    &\textbf{C}(\vecvar{x}_t,t) = \mat{H}_t \tinyspace \vecvar{x}_t,
\end{align}
so that the system of equations \eqrefs{eq:x_uncorr}{eq:y_uncorr} reads
\begin{align}
    \label{eq:dif_eq_corr_1} d\vecvar{x}_t & = \ \mat{F}_t \vecvar{x}_t \tinyspace dt + \mat{B}_t \tinyspace \mat{S}_t \tinyspace \mat{R}^{-1}_t \big(d\vecvar{y}_t-\mat{H}_t \tinyspace \vecvar{x}_t dt\big) + \mat{B}_t d\textbf{U}_t, \\
    \label{eq:dif_eq_corr_2} d\vecvar{y}_t & = \ \mat{H}_t \tinyspace \vecvar{x}_t \tinyspace dt + d\vecvar{v}_t.
\end{align}
In such a special case, the optimal estimator $\Tilde{\vecvar{x}}_t$ minimising the overall aMSE, which is defined as the trace of the covariance matrix $\mat{\Sigma}_t = \mean{(\vecvar{x}_t - \Tilde{\vecvar{x}}_t)(\vecvar{x}_t - \Tilde{\vecvar{x}}_t)^\Trans}$, i.e.~$\tr{\mat{\Sigma}_t} = \mean{\left\Vert\vecvar{x}_t - \Tilde{\vecvar{x}}_t\right\Vert^2}$, is referred to as the \emph{Kalman filter} (KF) and given by the solution of the so-called Kalman-Bucy equation~\cite{crassidis2011optimal}:
\begin{align} \label{eq:ap_KBF}
    d\Tilde{\vecvar{x}}_t = \mat{F}_t \Tilde{\vecvar{x}}_t \tinyspace dt + \mat{\Gamma}_t \big(d\vecvar{y}_t - \mat{H}_t \tinyspace \Tilde{\vecvar{x}}_t \tinyspace dt\big)
\end{align}
where $\mat{\Gamma}_t$ is the Kalman gain
\begin{align}
    \mat{\Gamma}_t = \big(\mat{\Sigma}_t \tinyspace \mat{H}^\Trans_t + \mat{B}_t \tinyspace \mat{S}_t\big)\mat{R}^{-1}_t.
    \label{eq:ap_KG_general}
\end{align}

Nonetheless, let us note that in order to evaluate the Kalman gain \eqref{eq:ap_KG_general}, one must determine beforehand the (optimal) covariance matrix $\mat{\Sigma}_t$, which evolves according to the non-linear differential equation in the Riccati form~\cite{crassidis2011optimal}:
\begin{align}
	\frac{d\mat{\Sigma}_t}{dt} \;
	& = \; \mat{F}_t \mat{\Sigma}_t + \mat{\Sigma}_t \tinyspace \mat{F}^\Trans_t - \mat{\Gamma}_t \mat{R}_t \mat{\Gamma}^\Trans_t + \mat{B}_t \mat{Q}_t \mat{B}^\Trans_t \\
	& = \; \Big(\mat{F}_t - \ \mat{B}_t \mat{S}_t \mat{R}^{-1}_t \mat{H}_t \Big) \mat{\Sigma}_t + \mat{\Sigma}_t \Big(\mat{F}_t - \ \mat{B}_t \mat{S}_t \mat{R}^{-1}_t \mat{H}_t \Big)^\Trans \nonumber \\
    & \qquad - \mat{\Sigma}_t \mat{H}^\Trans_t \mat{R}^{-1}_t \mat{H}_t \mat{\Sigma}_t + \mat{B}_t \Big( \mat{Q}_t - \mat{S}_t \mat{R}^{-1}_t \mat{S}^\Trans_t \Big) \mat{B}^\Trans_t.
	\label{eq:ap_Riccati}
\end{align}
From the practical perspective, however, the solution to equation \eqref{eq:ap_Riccati} can be determined (often only numerically) and stored in advance, so that in real-life applications the construction of the KF, $\Tilde{\vecvar{x}}_t$, as the solution to equation \eqref{eq:ap_KBF} can still performed fast and ``on the fly'', while the observations $\vecvar{y}_t$ are constantly gathered.

\section{Steady-state solution of the Kalman filter for $\chi \neq 0$}
\label{ap:SSS}
Let us consider the Riccati differential equation \eqref{eq:ap_Riccati} (equivalent to \eqref{eq:Riccati_equation}), which specifies the evolution of the covariance matrix $\mat{\Sigma}_t$ for the KF with the dynamical matrices $\mat{F}_t$, $\mat{B}_t$, and $\mat{H}_t$, as well as the noise-correlation matrices $\mat{Q}_t$, $\mat{R}_t$, and $\mat{S}_t$ defined in equations \eqref{eq:matrices_F_K_B} and \eqref{eq:noise_correlations} of the main text, respectively.

For simplicity, we rename the elements of the covariance matrix as
\begin{equation}
	\mat{\Sigma}_t = \begin{pmatrix} x(t) & y(t) \\ y(t) & z(t) \end{pmatrix} \ge 0 ,
\end{equation}
so that by resorting to the evolution of $\mat{\Sigma}_t$ in \eqref{eq:ap_Riccati} and setting $\frac{d\mat{\Sigma}}{dt}=0$---with \eqref{eq:ap_Riccati} formally constituting then a continuous-time algebraic Riccati equation (CARE)~\cite{crassidis2011optimal}---we obtain a system of equations describing the steady state as
\begin{align}
    -8M\eta V_{>t^*}(t) \, x(t) - 4M\eta \, x^2(t) - 2 \gmr J e^{-(M + \gamma_y) t/2} \, y(t) = 0, \nonumber \\
    -\chi \, y(t) - 4 M \eta V_{>t^*}(t) \, y(t) - 4 M\eta x(t) y(t) - \gmr J e^{-(M+\gamma_y)t/2} z(t) = 0, \nonumber \\
    q_B - 4M\eta y^2(t) -2 \chi z(t) = 0,
    \label{eq:CARE}
\end{align}
where we have used the fact that the variance $\braket{\Delta^2 \hat{J}_z(t)}_\cc$ for $t \gg t^*$ (and, hence, in the steady state) equals $V_{>t^*}(t)$ as defined in equation \eqref{eq:varJz_t>>t*}.

Being interested only in the steady-state (SS) solution for the aMSE of the magnetic field, $\Delta^2\est{B}_t^\text{SS} \equiv z(t)$, one may explicitly solve for $z(t)$ in \eqref{eq:CARE} to obtain
\begin{align} \label{eq:VarB}
    \Delta^2\est{B}_t^\text{SS} & = -\frac{\gamma_y \chi}{\gmr^2} - \frac{\chi^3}{4 \gmr^2 J^2 M \eta} e^{(M + \gamma_y)t} - \frac{\chi}{\gmr^2 J \sqrt{M \eta}} \sqrt{q_B \gmr^2 + \gamma_y \chi^2} \, e^{(M + \gamma_y)t/2} \nonumber \\
    & \quad + \frac{1}{2\gmr^2 J M \eta} \sqrt{q_B \gmr^2 + \gamma_y \chi^2}  \left(\sqrt{M\eta} + \frac{\chi^2  e^{(M+\gamma_y)t/2}}{2 J \sqrt{q_B \gmr^2 + \gamma_y \chi^2}} \right) \times \\
    & \quad\qquad\times \sqrt{\chi^2 e^{(M+\gamma_y)t} + 4 J \left(\gamma_y J M \eta + e^{(M+\gamma_y)t/2}\sqrt{M \eta(q_B \gmr^2 + \gamma_y \chi^2)}\right)}, \nonumber
\end{align}
which for $J\gg1$ can be approximated by performing the Taylor expansion in $1/J$ to first order, as follows:
\begin{align} \label{eq:approx_VarB}
    \Delta^2\est{B}_t^\text{SS}
    \;\underset{J\gg1}{\approx}\;
    -\frac{\gamma_y \chi}{\gmr^2}+\frac{\gamma_y}{\gmr^2} \sqrt{\frac{q_B \gmr^2 + \gamma_y \chi^2}{\gamma_y}}.
\end{align}
In contrast, by letting $\chi \to 0$ in equation \eqref{eq:VarB}, we obtain the exact steady-state solution for the aMSE, when the magnetic field follows a Wiener (rather than the OU \eqref{eq:OU_process}) process:
\begin{align}
    \Delta^2\est{B}_t^\text{SS}|_{\chi = 0} = \left( \frac{q_B \, \gamma_y}{\gmr^2} + \frac{1}{\gmr J} \sqrt{\frac{q_B^3}{M \eta}} e^{(M+\gamma_y)t/2} \right)^{1/2},
\end{align}
which is the expression stated in equation \eqref{eq:ss_field_sol} of the main text.

\section{Derivation of the CS limit for a fluctuating field}
\label{ap:BCRB}
After discretising the dynamics of the magnetic field $B_t$ and the measurement outcomes $y_t$ into $k=t/\delta t$ steps, their particular trajectories follow discrete-time stochastic processes being described by (time-ordered) sets:
\begin{align}
    \vec{B}_{k} = \{B_0,B_1,\dots,B_k\}
    \qquad\text{and}\qquad
    \vec{y}_{k} = \{y_0,y_1,\dots,y_k\},
\end{align}
respectively, while the marginal conditional BCRB \eqref{eq:BCRB} after the $k$th step then reads
\begin{equation}\label{eq:ap_BCRB}
  \Delta^2\est{B}_k \geq J_B^{-1} = (J_P + J_M)^{-1} = \frac{1}{\F[p(B_k)] + \int dB_k \; p(B_k) \, \F[p(\vec{y}_{k}|B_k)]}.
\end{equation}

In the first subsection of this appendix, we compute the prior contribution to the BI, i.e.~$J_P$ above in \eqref{eq:ap_BCRB}. In the second subsection, we show how the quantum channel describing the dynamics of the atomic ensemble in the absence of continuous measurement can be decomposed into a probabilistic mixture of unitary channels. Finally, in the last subsection, we calculate the FI of $\mathbb{P}_{B_k}(\vec{\omega}_{k})$ defined in \eqref{eq:bigP_def}, which allows us to upper-bound the contribution of the measurement records to the BI, i.e.~$J_M$ above in \eqref{eq:ap_BCRB}.

\subsection{Prior contribution to the Bayesian information}
\label{ap:prior_distribution}
The marginal probability density function of the magnetic field at time $t$---or equivalently after the time-step $k=t/\delta t$---can be written as:
\begin{align} \label{eq:prior1}
    p(B_k) = \int\! d\vec{B}_{k-1}\;p(\vec{B}_{k}) = \int\! d\vec{B}_{k-1}\;\prod_{j = 1}^k p(B_j|B_{j-1})\;p(B_0),
\end{align}
where the initial field $B_0$ is drawn from a Gaussian distribution which effectively specifies our \textit{a priori} knowledge about the field at $t=0$, i.e.:
\begin{align} \label{eq:B0_dist}
    B_0 \sim p(B_0) = \frac{1}{\sqrt{2\pi \sigma_0^2}}\, \exp\!\left(-\frac{B_0^2}{2\sigma_0^2}\right),
\end{align}
while each transition probability $p(B_j|B_{j-1})$ for all $j=1,2,\dots,k$ is given by the OU process \eqref{eq:OU_process} as a Gaussian distribution~\cite{Gardiner1985}:
\begin{align} \label{eq:OUP_dist}
    p(B_j|B_{j-1}) = \sqrt{\frac{1}{2 \pi V_P }}\, \exp\!\left(-\frac{(B_j - B_{j-1}e^{-\chi \delta t})^2}{2V_P}\right)
\end{align}
with variance
\begin{equation}\label{eq:ap_OUPvar}
  V_{P} = \frac{q_B}{2\chi} (1 - e^{-2 \chi \delta t}).
\end{equation}

Hence, after explicitly evaluating the multiple integrals in \eqref{eq:prior1}, we arrive at the marginal Gaussian distribution for $B_k$:
\begin{align} \label{eq:ap_totalpBk_OUP}
    p(B_k) = \sqrt{\frac{1}{2\pi V_P^{(k)}}}\, \exp\!\left(-\frac{B_k^2}{2V_P^{(k)}}\right),
\end{align}
whose mean is zero and its variance
\begin{equation}\label{eq:ap_totVarP}
  V_P^{(k)} = \sigma_0^2 e^{-2k\chi\delta t} + \frac{q_B}{2\chi}(1 - e^{-2k\chi\delta t}),
\end{equation}
which in the case of a magnetic field following a Wiener process---a special case of the OU process \eqref{eq:OU_process} with $\chi=0$---simplifies to $\lim_{\chi\to0}V_P^{(k)}=k \delta t q_B + \sigma_0^2$.

In general, the FI evaluated with respect to a Gaussian distribution, e.g.~the marginal distribution \eqref{eq:ap_totalpBk_OUP}, corresponds to the inverse of its variance. Hence, the prior contribution to the BI defined in equation \eqref{eq:J_P} of the main text simply reads
\begin{align} \label{eq:J_P_app}
    J_P = \F[p(B_k)] = \frac{1}{V_P^{(k)}} = \frac{1}{ \sigma_0^2 e^{-2k\chi\delta t} + \frac{q_B}{2\chi} (1 - e^{-2k\chi\delta t})},
\end{align}
and in the continuous-time limit of $\delta t \rightarrow 0$ with $k = t/\delta t$, it becomes
\begin{equation}\label{eq:JpCont}
  J_P = \frac{1}{ \sigma_0^2 e^{-2 \chi t} + \frac{q_B}{2\chi} (1 - e^{-2 \chi t})}.
\end{equation}
In the special case of a Wiener process ($\chi \rightarrow 0$), equation \eqref{eq:JpCont} reduces to $\lim_{\chi\to0} J_P = (q_B \, t + \sigma_0^2)^{-1}$. On the other hand, for any $\chi,q_B,t\ge0$, expression \eqref{eq:JpCont} vanishes if we do not possess any prior knowledge about the initial field $B_0$, i.e.~when we consider $\sigma_0 \rightarrow \infty$ in equation \eqref{eq:B0_dist}, for which $\lim_{\sigma_0\to\infty} J_P=0$.

\subsection{Noisy dynamics as a convex mixture of unitary channels}
\label{sec:CS_decomposition}
Consider a unitary evolution, which is generated by $\xi \hat{H}$ with $\hat{H}$ being a time-independent Hamiltonian and $\xi\in\mathbb{R}$, being applied to a state $\rho_{0}$ for a time interval $\tau$, i.e.:
\begin{equation}
    \Unitary_{\xi, \delta t}[\rho_{0}] = e^{-i \xi \hat{H} \tau} \rho_{0} e^{i \xi \hat{H} \tau}
\end{equation}
where the frequency-like parameter $\xi$ is randomly distributed according to a Gaussian probability density:
\begin{equation}
    \xi \sim p_{\mu_\tau,\sigma_\tau}(\xi) = \frac{1}{\sqrt{2 \pi \sigma_\tau^2}} e^{-\frac{(\xi - \mu_\tau)^2}{2 \sigma_\tau^2}},
\label{eq:Gaussian_mixing}
\end{equation}
whose mean $\mu_\tau$ and standard deviation $\sigma_\tau$ are some smooth time-dependent functions.

\begin{theorem}
\label{sec:A_Theorem}
The quantum map $\Lambda_{\tau}$ obtained by averaging over $\xi$ after a time $\tau$, i.e.:
\begin{equation}
    \rho_\tau = \Lambda_\tau[\rho_{0}] = \mean{p(\xi)}{\Unitary_{\xi \tau}[\rho_{0}]} = \int \!d\xi\, p_{\mu_\tau,\sigma_\tau}(\xi) \;e^{-i\xi \hat{H} \tau} \rho_{0} \ e^{i\xi \hat{H}\tau},
    \label{eq:effective_map}
\end{equation}
corresponds to the solution of the master equation:
\begin{align}
    \frac{d\rho_{\tau}}{d\tau}
    & = -i\omega(\tau) [\hat{H},\rho_\tau] + \Gamma(\tau) \left(\hat{H} \rho_\tau \hat{H} - \frac{1}{2} \{\hat{H}^2,\rho_\tau \} \right) \\
    & = -i\omega(\tau) [\hat{H},\rho_\tau] - \frac{1}{2} \Gamma(\tau) \left[\hat{H},[\hat{H},\rho_\tau]\right],
    \label{eq:VonNeumann1}
\end{align}
where the effective time-dependent frequency and decay parameters read, respectively:
\begin{align}
 	\omega(\tau) = \mu_\tau + \tau \, \Dot{\mu}_\tau
 	\qquad\text{and}\qquad
     \Gamma(\tau) = 2\sigma_\tau^2 \, \tau \left( 1 + \frac{\Dot{\sigma_\tau}}{\sigma_\tau} \, \tau \right).
     \label{eq:effective_freq_decay_pars}
\end{align}
\end{theorem}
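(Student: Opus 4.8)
The plan is to diagonalise the action of $\Lambda_\tau$ in the eigenbasis of $\hat H$, which turns the operator-valued Gaussian average into an ordinary scalar integral---the characteristic function of a normal distribution---and then to differentiate the resulting closed form in $\tau$. Let $\hat H\ket{n}=h_n\ket{n}$ and set $\Delta_{mn}=h_m-h_n$. Inserting this basis into \eqref{eq:effective_map}, every matrix element factorises as
\begin{equation}
    \bra{m}\rho_\tau\ket{n}=\bra{m}\rho_0\ket{n}\int\! d\xi\; p_{\mu_\tau,\sigma_\tau}(\xi)\,e^{-i\xi\Delta_{mn}\tau},
\end{equation}
and the remaining integral is exactly $\mean{e^{-i\xi\Delta_{mn}\tau}}$ for $\xi\sim\mathcal N(\mu_\tau,\sigma_\tau^2)$. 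Using the standard Gaussian identity $\mean{e^{is\xi}}=\exp(is\mu_\tau-\tfrac12 s^2\sigma_\tau^2)$ with $s=-\Delta_{mn}\tau$ gives the closed form
\begin{equation}
    \bra{m}\rho_\tau\ket{n}=\bra{m}\rho_0\ket{n}\,\exp\!\left(-i\Delta_{mn}\tau\mu_\tau-\tfrac12\Delta_{mn}^2\tau^2\sigma_\tau^2\right).
    \label{eq:plan_closed}
\end{equation}
At $\tau=0$ the exponent vanishes, so the initial condition $\rho_{\tau=0}=\rho_0$ holds automatically.

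The key step is then to differentiate \eqref{eq:plan_closed} with respect to $\tau$, being careful that $\tau$ enters \emph{twice}: once through the unitary phase and once through the time-dependent parameters $\mu_\tau,\sigma_\tau$ of the mixing distribution. By the chain rule,
\begin{align}
    \frac{d}{d\tau}\bra{m}\rho_\tau\ket{n}
    &=\bra{m}\rho_\tau\ket{n}\,\frac{d}{d\tau}\!\left(-i\Delta_{mn}\tau\mu_\tau-\tfrac12\Delta_{mn}^2\tau^2\sigma_\tau^2\right)\nonumber\\
    &=\bra{m}\rho_\tau\ket{n}\left(-i\Delta_{mn}(\mu_\tau+\tau\dot\mu_\tau)-\tfrac12\Delta_{mn}^2\cdot 2\sigma_\tau^2\tau\Big(1+\tfrac{\dot\sigma_\tau}{\sigma_\tau}\tau\Big)\right),
\end{align}
in which I recognise precisely the effective frequency $\omega(\tau)=\mu_\tau+\tau\dot\mu_\tau$ and decay rate $\Gamma(\tau)=2\sigma_\tau^2\tau(1+\dot\sigma_\tau\tau/\sigma_\tau)$ announced in \eqref{eq:effective_freq_decay_pars}. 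This identifies the two parameters and reduces the theorem to matching the scalar prefactor $-i\Delta_{mn}\omega(\tau)-\tfrac12\Delta_{mn}^2\Gamma(\tau)$ against the right-hand side of the claimed master equation.

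Finally I would check that this prefactor is exactly the $(m,n)$ matrix element of the generator in \eqref{eq:VonNeumann1}. Since $\hat H$ is diagonal, $[\hat H,\rho]$ has matrix element $\Delta_{mn}\bra{m}\rho\ket{n}$ and the double commutator $[\hat H,[\hat H,\rho]]$ contributes $\Delta_{mn}^2\bra{m}\rho\ket{n}$; hence $-i\omega[\hat H,\rho]-\tfrac12\Gamma[\hat H,[\hat H,\rho]]$ produces exactly $(-i\Delta_{mn}\omega-\tfrac12\Delta_{mn}^2\Gamma)\bra{m}\rho\ket{n}$, matching term by term, and the equivalence of the two written forms of the dissipator $\Gamma(\hat H\rho\hat H-\tfrac12\{\hat H^2,\rho\})=-\tfrac12\Gamma[\hat H,[\hat H,\rho]]$ is the elementary operator identity for a Hermitian jump operator. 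I expect the only genuine subtlety---the hard part---to be the bookkeeping of the \emph{dual} $\tau$-dependence in the Gaussian average, since it is precisely the derivatives $\dot\mu_\tau$ and $\dot\sigma_\tau$ of the mixing parameters (and not only the unitary phase) that generate the $\tau\dot\mu_\tau$ and $\dot\sigma_\tau\tau/\sigma_\tau$ corrections; overlooking either contribution would spoil the identification of $\omega(\tau)$ and $\Gamma(\tau)$. Convergence and differentiation under the integral sign are guaranteed by the Gaussian weight, so no further analytic care is needed.
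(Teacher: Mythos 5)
Your proof is correct, and it takes a genuinely different route from the paper's. The paper works at the operator level: it differentiates $\rho_\tau=\int d\xi\,p_{\mu_\tau,\sigma_\tau}(\xi)\,\Unitary_{\xi\tau}[\rho_0]$ under the integral sign, which produces averages of the form $\mean{p(\xi)}{\xi\,\Unitary_{\xi\tau}[\rho_0]}$ and $\mean{p(\xi)}{(\xi-\mu_\tau)^2\,\Unitary_{\xi\tau}[\rho_0]}$, and then invokes Gaussian moment identities to re-express these as $\rho_\tau$ plus single and double commutators with $\hat H$, from which $\omega(\tau)$ and $\Gamma(\tau)$ are read off. You instead diagonalise in the eigenbasis of $\hat H$, reduce the operator average to the scalar characteristic function of the normal distribution, obtain the explicit closed form $\exp(-i\Delta_{mn}\tau\mu_\tau-\tfrac12\Delta_{mn}^2\tau^2\sigma_\tau^2)$ for each matrix element, and differentiate that. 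Your route is more elementary and makes the result transparent --- the channel is manifestly pure Gaussian dephasing in the $\hat H$ eigenbasis, and the initial condition and complete positivity are visible by inspection; it also sidesteps the operator-valued moment identities entirely, since all the Gaussian calculus happens on scalars. What the paper's basis-free derivation buys is that it never needs a spectral decomposition, so it reads as a direct Lindblad-form computation; but for a Hermitian generator with discrete spectrum (here $\hat J_y$ on a finite ensemble) the two are fully equivalent, and you correctly identify and handle the one genuine subtlety common to both, namely that $\tau$ enters both through the unitary phase and through $\mu_\tau,\sigma_\tau$, which is what generates the $\tau\dot\mu_\tau$ and $\dot\sigma_\tau\tau/\sigma_\tau$ corrections.
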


\begin{proof}
By explicitly differentiating $\rho_{\tau}$ defined in \eqref{eq:effective_map} with respect to $\tau$, we obtain:
\begin{align}
    \frac{d\rho_{\tau}}{d\tau}
    & =
    \frac{d}{d\tau} \Lambda_{\tau}[\rho_{0}] = \frac{d}{d\tau} \mean{p(\xi)}{\Unitary_{\xi\tau}[\rho_{0}]} = \frac{d}{d\tau} \left( \int d\xi \, p_{\mu_{\tau},\sigma_{
    \tau}}(\xi) \ \Unitary_{\xi\tau}[\rho_{0}]  \right) \nonumber \\
     & = \int d\xi \left( \frac{d}{d\tau} p_{\mu_{\tau},\sigma_{\tau}}(\xi) \right) \Unitary_{\xi\tau}[\rho_{0}] + \int d\xi \ p_{\mu_{\tau},\sigma_{\tau}}(\xi) \frac{d}{d\tau} \Unitary_{\xi\tau}[\rho_{0}] \nonumber \\
     & = \frac{1}{\sigma^3_{\tau}} \;\mean{p(\xi)}{\left( (\xi - \mu_{\tau})\sigma_{\tau} \Dot{\mu}_{\tau} + ((\xi - \mu_{\tau})^2 - \sigma_{\tau}^2)\Dot{\sigma}_{\tau} \right)\Unitary_{\xi\tau}[\rho_{0}]}
     - i[\hat{H},\mean{p(\xi)}{\xi \Unitary_{\xi\tau}[\rho_{0}]}] \nonumber \\
     & = \frac{\Dot{\mu}_{\tau}}{\sigma_{\tau}^2} \;\mean{p(\xi)}{\xi \Unitary_{\xi\tau}[\rho_{0}]} + \frac{\Dot{\sigma}_{\tau}}{\sigma_{\tau}^3} \;\mean{p(\xi)}{(\xi - \mu_{\tau})^2 \Unitary_{\xi\tau}[\rho_{0}]} - \left( \frac{\mu_{\tau} \Dot{\mu}_{\tau}}{\sigma_{\tau}^2} + \frac{\Dot{\sigma_{\tau}}}{\sigma_{\tau}} \right) \rho_{\tau} \nonumber\\
     &\quad - i[\hat{H},\mean{p(\xi)}{\xi \Unitary_{\xi\tau}[\rho_{0}]}],
     \label{eq:master}
\end{align}
where by using relations for the moments of a Gaussian distribution we can further simplify the following expressions:
\begin{align}
    \mean{p(\xi)}{\xi\,\Unitary_{\xi\tau}\!\left[\rho_{0}\right]}
    & = -\mathrm{i}\sigma_{\tau}^{2}\tau\,[\hat{H},\rho_{\tau}]+\mu_{\tau}\rho_{\tau}, \\
    \mean{p(\xi)}{(\xi-\mu_{\tau})^{2}\Unitary_{\xi\tau}\!\left[\rho_{0}\right]}
    & = -\sigma_{\tau}^{4}\tau^{2}\left[\hat{H},[\hat{H},\rho_{\tau}]\right]+\sigma_{\tau}^{2}\rho_{\tau}.
\end{align}
As a result, we can write the dynamics \eqref{eq:master} as
\begin{equation}
    \frac{d\rho_{\tau}}{d\tau} = -i (\mu_{\tau} + \Dot{\mu}_{\tau} \tau)\,[\hat{H},\rho_{\tau}] - \sigma_{\tau}^2 \tau \left(1 + \frac{\Dot{\sigma}_{\tau}}{\sigma_{\tau}}\tau \right) \left[\hat{H},[\hat{H},\rho_{\tau}]\right],
\end{equation}
which is the desired form stated above in equation \eqref{eq:VonNeumann1}.
\end{proof}

\begin{corollary}
\label{sec:A_Corollary}
For the case of $N$ spin-1/2 particles evolving for a time $\tau$ according to the dynamics \eqref{eq:VonNeumann1} with $\hat{H} = \hat{J}_y$, $\omega(\tau) = \gmr B$ and
$\Gamma(\tau)=\gamma_y$, i.e.:
\begin{align}
    \frac{d\rho_{\tau}}{d\tau} = -i \gmr B [\hat{J}_y,\rho_{\tau}] - \frac{1}{2} \gamma_y [\hat{J}_y,[\hat{J}_y,\rho_{\tau}]]
\end{align}
with $B$ being constant over the time $\tau$, the effective quantum map describing the evolution $\Lambda_\tau$ in \eqref{eq:effective_map} can be interpreted as mixture of unitary channels with the  Gaussian mixing probability \eqref{eq:Gaussian_mixing} of mean $\mu_{\tau} = \gmr B$ and standard deviation $\sigma_{\tau} = \sqrt{\frac{\gamma_y}{\tau}}$.
\end{corollary}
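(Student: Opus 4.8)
The plan is to apply Theorem~\ref{sec:A_Theorem} with the Hamiltonian fixed as $\hat{H}=\hat{J}_y$ and to read off the required effective parameters directly from the target master equation. The dissipator $\tfrac12\gamma_y[\hat{J}_y,[\hat{J}_y,\,\cdot\,]]$ appearing in the Corollary is precisely the double-commutator form of \eqref{eq:VonNeumann1} (indeed $\mathcal{D}[\hat{J}_y]\rho=-\tfrac12[\hat{J}_y,[\hat{J}_y,\rho]]$, since $\hat{J}_y$ is Hermitian), so matching term by term identifies the effective frequency and decay rate as the constants $\omega(\tau)=\gmr B$ and $\Gamma(\tau)=\gamma_y$. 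What remains is to exhibit a Gaussian mixing distribution \eqref{eq:Gaussian_mixing}, i.e.\ to determine its mean $\mu_\tau$ and width $\sigma_\tau$, such that the expressions \eqref{eq:effective_freq_decay_pars} reproduce these two constants. Since \eqref{eq:effective_map} is by construction a convex mixture of unitary channels, producing such $\mu_\tau$ and $\sigma_\tau$ immediately establishes the interpretation claimed.

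The key simplification is to notice that the two relations in \eqref{eq:effective_freq_decay_pars} are exact total derivatives in $\tau$,
\begin{equation}
  \omega(\tau)=\mu_\tau+\tau\,\Dot{\mu}_\tau=\frac{d}{d\tau}\big(\tau\mu_\tau\big),
  \qquad
  \Gamma(\tau)=2\sigma_\tau^2\tau\Big(1+\tfrac{\Dot{\sigma}_\tau}{\sigma_\tau}\,\tau\Big)=\frac{d}{d\tau}\big(\tau^2\sigma_\tau^2\big),
\end{equation}
which turns the task of inverting them into an elementary integration once $\omega$ and $\Gamma$ are constant.

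Integrating with $\omega(\tau)=\gmr B$ gives $\tau\mu_\tau=\gmr B\,\tau+c_1$, while integrating with $\Gamma(\tau)=\gamma_y$ gives $\tau^2\sigma_\tau^2=\gamma_y\,\tau+c_2$. The integration constants are then fixed by the physical requirement that the channel \eqref{eq:effective_map} reduce to the identity as $\tau\to0$: both the accumulated phase $\tau\mu_\tau$ and the accumulated spread $\tau^2\sigma_\tau^2$ must vanish at $\tau=0$, forcing $c_1=c_2=0$. This yields $\mu_\tau=\gmr B$ and $\sigma_\tau=\sqrt{\gamma_y/\tau}$, and a direct back-substitution into \eqref{eq:effective_freq_decay_pars} confirms that indeed $\omega(\tau)=\gmr B$ and $\Gamma(\tau)=\gamma_y$, as required.

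I expect no genuine analytic obstacle here: the statement is a direct specialisation of Theorem~\ref{sec:A_Theorem}, and the computation reduces to a one-line integration after recognising the total-derivative structure. The only point demanding care is the justification of the integration constants---equivalently, the boundary behaviour at $\tau\to0$---which must be argued from the requirement that the mixing be sharply peaked (so that $\Lambda_\tau\to\mathrm{id}$) in the vanishing-time limit.
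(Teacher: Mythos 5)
Your proposal is correct and follows essentially the same route as the paper: both reduce the Corollary to a specialisation of Theorem~\ref{sec:A_Theorem} and check consistency of the effective parameters \eqref{eq:effective_freq_decay_pars}. The only (minor, and welcome) difference is that the paper simply substitutes $\mu_\tau=\gmr B$ and $\sigma_\tau=\sqrt{\gamma_y/\tau}$ and verifies that $\omega(\tau)=\gmr B$ and $\Gamma(\tau)=\gamma_y$ come out, whereas you additionally \emph{derive} these choices (and their uniqueness given the $\tau\to0$ boundary condition) by recognising $\omega=\frac{d}{d\tau}(\tau\mu_\tau)$ and $\Gamma=\frac{d}{d\tau}(\tau^2\sigma_\tau^2)$ and integrating.
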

The above statement can be straightforwardly verified by explicitly computing the effective time-dependent frequency and decay parameters in \eqref{eq:effective_freq_decay_pars} for the choice of $\mu_{\tau} = \gmr B$ and $\sigma_{\tau} = \sqrt{\frac{\gamma_y}{\tau}}$, which consistently simplify to:
\begin{align}
    & \omega(\tau) = \gmr B+\tau\,0=\gmr B, \\
    & \Gamma(\tau) = 2 \frac{\gamma_y}{\tau}\tau \left(1 + \frac{1}{2}\left( \frac{\gamma_y}{\tau}\right)^{-1/2} \left(- \frac{\gamma_y}{\tau^2} \right)
    \left(\frac{\gamma_y}{\tau} \right)^{-1/2}\tau \right) = 2 \gamma_y \left(1 - \frac{1}{2} \right) = \gamma_y.
\end{align}

\subsection{Fisher information of $\mathbb{P}_{B_t}(\vec{\omega}_{<t})$}
\label{sec:bayesian_info_M}
As discussed in the main text, upper-bounding the measurement-record contribution to the BI, i.e.~$J_M$ defined in equation \eqref{eq:J_M}, corresponds effectively---see inequality \eqref{eq:finalJM}---to computing the FI:
\begin{equation}\label{eq:ap_fisherdef}
  \F\!\left[\mathbb{P}_{B_k}(\vec{\omega}_{k}) \right] = \int\! d\vec{\omega}_{k} \, \mathbb{P}_{B_k}(\vec{\omega}_{k}) \left[ - \partial^2_{B_k} \log\left(\mathbb{P}_{B_k}(\vec{\omega}_{k})\right)\right]
\end{equation}
for the probability distribution defined in equation \eqref{eq:bigP_def}:
\begin{align}\label{eq:ap_expandlogbigP}
  \mathbb{P}_{B_k}(\vec{\omega}_{k}) = \frac{1}{p(B_k)}\int\! d\vec{B}_{k-1} \, p(\vec{B}_{k}) \, q(\vec{\omega}_{k}|\vec{B}_{k}).
\end{align}
Note that $\mathbb{P}_{B_k}(\vec{\omega}_{k})$ contains a Gaussian marginal distribution $p(B_k)$ specified in \eqref{eq:ap_totalpBk_OUP}, as well as products of Gaussian distributions:
\begin{equation}
  p(\vec{B}_{k}) = \prod_{j=1}^{k} p(B_j|B_{j-1}) p(B_0)
  \qquad\text{and}\qquad
  q(\vec{\omega}_{k}|\vec{B}_{k}) = \prod_{j=0}^k q(\omega_j|B_j), \label{eq:ap_def_pBk_q(wkBk)}
\end{equation}
where $p(B_j|B_{j-1})$ is the Gaussian transition probability of the OU process defined in \eqref{eq:OUP_dist} with variance $V_P$ given by \eqref{eq:ap_OUPvar}, $p(B_0)$ is the prior Gaussian distribution with zero mean and variance $\sigma_0^2$, while each $q(\omega_j|B_j)$ is the mixing probability introduced within the CS method in equation \eqref{eq:in_omega_out_B}, which is also a Gaussian with mean $\gmr B_j$ and variance
\begin{equation}\label{eq:ap_defVQ}
  V_Q = \frac{\gamma_y}{\delta t}.
\end{equation}
It follows from the definitions in \eqref{eq:ap_def_pBk_q(wkBk)} that the integral in \eqref{eq:ap_expandlogbigP} can be rewritten as a set of nested integrals, i.e.:
\begin{align}\label{eq:ap_expansion of_logbigP}
  & \int\! d\vec{B}_{k-1} \, p(\vec{B}_{k}) \, q(\vec{\omega}_{k}|\vec{B}_{k}) = \int\! d\vec{B}_{k-1} \prod_{j=1}^{k} p(B_j|B_{j-1}) q(\omega_j|B_j) p(B_0) q(\omega_0|B_0) \nonumber \\
  & \quad= q(\omega_k|B_k) \int\! d B_{k-1} \, p(B_k|B_{k-1}) q(\omega_{k-1}|B_{k-1}) \ldots \int\! d B_{1} \, p(B_{2}|B_{1}) q(\omega_{1}|B_{1}) \nonumber \\
  & \qquad\qquad\int\! d B_0 \, p(B_1|B_0) q(\omega_0|B_0)  p(B_0),
\end{align}
which we would like to simplify. To do so, we first need to prove the following lemma:
\begin{lem} \label{sec:lemma}
Let us consider a recurrence relation between $\mathcal{P}_j(B_j)$ and $\mathcal{P}_{j-1}(B_{j-1})$ for $j=0,1,2,\dots$ as a generalised convolution of Gaussian distributions:
\begin{align}\label{eq:ap_3Gaussians}
\mathcal{P}_j(B_j) & = \int\! dB_{j-1} \, \frac{1}{\sqrt{2 \pi V_P}} e^{-\frac{(B_j - B_{j-1})^2}{2 V_P}} \, \frac{1}{\sqrt{2 \pi V_Q}} e^{-\frac{(\omega_{j-1} - \gmr B_{j-1})^2}{2 V_Q}} \, \mathcal{P}_{j-1}(B_{j-1}),
\end{align}
where $V_P,V_Q\ge0$ and $\mathcal{P}_0(B_0) = C_0 \, e^{-\frac{(B_0 - \mu_0)^2}{2 V_0}}$ with some fixed $C_0,V_0\ge0$ and $\mu_0\in\mathbb{R}$.

Then, for all $j\ge1$:
\begin{equation}
	\mathcal{P}_j(B_j) = C_j \, e^{-\frac{(B_j - \mu_j)^2}{2 V_j}}
	\label{eq:sol_rec_rel}
\end{equation}
where the parameters $C_j$, $\mu_j$ and $V_j$ are given as the solution to the following (coupled) recurrence relations:
\begin{align}
	\label{eq:ap_recursiveform_C} C_j & = C_{j-1} \left(2 \pi \left(\gmr^2 V_P + V_Q + \frac{V_P V_Q}{V_{j-1}}\right)\right)^{-1/2} e^{-\frac{(\omega_{j-1}-\gmr \mu_{j-1})^2}{2(V_Q + \gmr^2 V_{j-1})}} \\
  	\label{eq:ap_recursiveform_mu} \mu_j & = \frac{V_Q \mu_{j-1} + V_{j-1} \gmr \, \omega_{j-1}}{V_Q + \gmr^2 V_{j-1}} \\
  	\label{eq:ap_recursiveform_V} V_j & = V_P + \frac{V_Q V_{j-1}}{V_Q + \gmr^2 V_{j-1}}.
\end{align}
\end{lem}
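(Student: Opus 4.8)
The plan is to prove the lemma by induction on $j$, the base case $j=0$ being exactly the assumed Gaussian form of $\mathcal{P}_0$. Assuming \eqref{eq:sol_rec_rel} holds at step $j-1$ with parameters $C_{j-1},\mu_{j-1},V_{j-1}$, I would substitute this Gaussian into the recurrence \eqref{eq:ap_3Gaussians}, so that the integrand becomes a product of three Gaussian factors in the integration variable $B_{j-1}$: the transition kernel centred at $B_j$ with variance $V_P$, the measurement factor $\frac{1}{\sqrt{2\pi V_Q}}\,e^{-(\omega_{j-1}-\gmr B_{j-1})^2/(2V_Q)}$, and $\mathcal{P}_{j-1}(B_{j-1})$ itself. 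The whole computation then reduces to the elementary Gaussian product and convolution identities, applied in two steps.

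First I would combine the two factors that do not involve $B_j$ — the measurement factor and $\mathcal{P}_{j-1}$. Rewriting the measurement factor as a Gaussian in $B_{j-1}$ centred at $\omega_{j-1}/\gmr$ with variance $V_Q/\gmr^2$, the standard ``product of two Gaussians'' identity collapses their product into a single Gaussian in $B_{j-1}$, centred at $c=(V_Q\mu_{j-1}+V_{j-1}\gmr\,\omega_{j-1})/(V_Q+\gmr^2 V_{j-1})$ with variance $u^2 = V_{j-1}V_Q/(V_Q+\gmr^2 V_{j-1})$, multiplied by a $B_{j-1}$-independent overlap constant. A short computation shows this overlap constant is precisely $e^{-(\omega_{j-1}-\gmr\mu_{j-1})^2/(2(V_Q+\gmr^2 V_{j-1}))}$, i.e.~the exponential appearing in \eqref{eq:ap_recursiveform_C}.

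The second step is the remaining $B_{j-1}$-integral, which is now the convolution of the Gaussian centred at $c$ with variance $u^2$ against the transition kernel of variance $V_P$. Since the convolution of Gaussians simply adds means and variances, it produces a Gaussian in $B_j$ centred at $c$ with variance $V_P+u^2$. Reading off the centre and the width immediately yields $\mu_j=c$ and $V_j=V_P+u^2$, which are exactly \eqref{eq:ap_recursiveform_mu} and \eqref{eq:ap_recursiveform_V}. Collecting the leftover normalisation — the carried factor $C_{j-1}$, the $1/\sqrt{2\pi V_Q}$ prefactor of the measurement factor, and the Gaussian normalisation constants generated by the product and convolution steps — then gives $C_j$, and simplifying the resulting ratio $u^2/[V_Q(V_P+u^2)]$ reproduces the combination $\gmr^2 V_P + V_Q + V_P V_Q/V_{j-1}$ inside \eqref{eq:ap_recursiveform_C}.

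The part requiring the most care is the bookkeeping of normalisation constants, which is the entire content of \eqref{eq:ap_recursiveform_C}. The subtlety is that the measurement factor $q(\omega_{j-1}|B_{j-1})$ is normalised as a distribution of $\omega_{j-1}$, not of $B_{j-1}$; when it is reinterpreted as a Gaussian in $B_{j-1}$ one picks up $\gmr$-dependent factors that must be tracked consistently through both the product and the convolution steps. Verifying that these factors assemble into exactly $\left(2\pi(\gmr^2 V_P+V_Q+V_PV_Q/V_{j-1})\right)^{-1/2}$ is the only genuinely fiddly algebra; the means and variances, by contrast, follow directly from the two Gaussian identities with no further input.
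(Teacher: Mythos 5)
Your proof is correct and follows essentially the same route as the paper's: induction on $j$, substituting the Gaussian ansatz into the recurrence and evaluating the resulting Gaussian integral over $B_{j-1}$ (the paper does this in one step by completing the square with auxiliary constants $\alpha,\beta$, whereas you organise the identical algebra into a product-of-Gaussians step followed by a convolution step). Your bookkeeping checks out --- in particular the $1/\gmr$ factor from reinterpreting $q(\omega_{j-1}|B_{j-1})$ as a Gaussian in $B_{j-1}$ combines with the two normalisations to give exactly $\bigl(2\pi(\gmr^2 V_P+V_Q+V_PV_Q/V_{j-1})\bigr)^{-1/2}$.
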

\begin{proof}
As for any recurrence problem, it is sufficient to prove that the solution \eqref{eq:sol_rec_rel} holds for $j=0$, and that the recurrence relation \eqref{eq:ap_3Gaussians} is fulfilled for any $j\ge1$. The first part is trivially satisfied by definition, while to prove the latter we substitute $P_{j-1}(B_{j-1})$, defined according to \eqref{eq:sol_rec_rel}, into \eqref{eq:ap_3Gaussians} and explicitly perform the integration, i.e:
\begin{align}
    \mathcal{P}_j(B_j) & = \int\!dB_{j-1}\, \frac{1}{\sqrt{2 \pi V_P}} e^{-\frac{(B_j - B_{j-1})^2}{2 V_P}}  \frac{1}{\sqrt{2 \pi V_Q}} e^{-\frac{(\omega_{j-1} - \gmr B_{j-1})^2}{2 V_Q}} C_{j-1} e^{-\frac{(B_{j-1}-\mu_{j-1})^2}{2 V_{j-1}}} \nonumber \\
     \label{eq:ap_jthcase} & = \frac{C_{j-1}}{\sqrt{2 \pi \left(V_Q + \gmr^2 V_P + \frac{V_P V_Q}{V_{j-1}}\right)}} \exp{\left\{ -\frac{B_j^2 -2\alpha B_j + \beta}{2\left(V_P + \frac{V_Q V_{j-1}}{V_Q + \gmr^2 V_{j-1}} \right)} \right\}},
\end{align}
where $\alpha$ and $\beta$ are constants independent of $B_j$ and $B_{j-1}$, and equal to
\begin{align}
	\alpha & = \frac{V_Q \, \mu_{j-1} + V_{j-1} \, \gmr  \, \omega_{j-1}}{V_Q + \gmr^2 V_{j-1}}, \label{eq:alpha}\\
    \beta & = \frac{V_Q \, \mu_{j-1}^2 + V_{j-1} \, \omega^2_{j-1}+V_P(\omega_{j-1}-\gmr \, \mu_{j-1})^2}{V_Q + \gmr^2 V_{j-1}}. \label{eq:beta}
\end{align}
Hence, by `completing the square' we rewrite \eqref{eq:ap_jthcase} as
\begin{align}\label{eq:ap_pj}
    \mathcal{P}_j(B_j) & = \frac{C_{j-1} e^{ -\frac{-\alpha^2 + \beta}{2\left(V_P + \frac{V_Q V_{j-1}}{V_Q + \gmr^2 V_{j-1}} \right)} }}{\sqrt{2 \pi \left(V_Q + \gmr^2 V_P + \frac{V_P V_Q}{V_{j-1}}\right)}}  \exp{\left\{ -\frac{(B_j -\alpha)^2}{2\left(V_P + \frac{V_Q V_{j-1}}{V_Q + \gmr^2 V_{j-1}} \right)} \right\}},
\end{align}
and, after substituting for $\alpha$ and $\beta$ from (\ref{eq:alpha}-\ref{eq:beta}), we arrive at the expression \eqref{eq:sol_rec_rel} for $P_{j}(B_{j})$ with $C_j$, $\mu_j$ and $V_j$ specified by the recurrence relations (\ref{eq:ap_recursiveform_C}-\ref{eq:ap_recursiveform_V}).
\end{proof}

Now, using the above lemma we may rewrite equation \eqref{eq:ap_expansion of_logbigP} as
\begin{align}
   & \int d \vec{B}_{k-1} \, p(\vec{B}_{k}) \, q(\vec{\omega}_{k}|\vec{B}_{k}) = q(\omega_k|B_k) \mathcal{P}_k(B_k),
\end{align}
with $\mathcal{P}_k(B_k)$ being now defined according to equation \eqref{eq:sol_rec_rel} with variances $V_P$ and $V_Q$ in (\ref{eq:ap_recursiveform_C}-\ref{eq:ap_recursiveform_V}) specified in our case by equations \eqref{eq:ap_OUPvar} and \eqref{eq:ap_defVQ}, respectively. Consequently, the FI of $\mathbb{P}_{B_k}(\vec{\omega}_{k})$ in equation \eqref{eq:ap_fisherdef} reads
\begin{align}
  \F\!\left[\mathbb{P}_{B_k}(\vec{\omega}_{k}) \right]
  & =
  \int\! d\vec{\omega}_{k} \, \mathbb{P}_{B_k}(\vec{\omega}_{k}) \left[ - \partial^2_{B_k}\log\left( \frac{1}{p(B_k)}\int\!d \vec{B}_{k-1} \, p(\vec{B}_{k}) \, q(\vec{\omega}_{k}|\vec{B}_{k}) \right) \right] \nonumber  \\
  & =\int\! d\vec{\omega}_{k} \, \mathbb{P}_{B_k}(\vec{\omega}_{k}) \left[ - \partial^2_{B_k}\log\left(  \frac{q(\omega_k|B_k)}{p(B_k)} \mathcal{P}_k(B_k) \right) \right] \label{eq:log_gaussians} \\
  & = \frac{\gmr^2}{V_Q} - \frac{1}{V_P^{(k)}} + \frac{1}{V_k}, \label{eq:ap_fishersimp}
\end{align}
where the expression \eqref{eq:ap_fishersimp} follows from the fact that we are dealing with a product (and quotient) of Gaussian distributions within $\log(\dots)$ in \eqref{eq:log_gaussians}, so that the FI becomes just the sum (and difference) of the inverses of their respective variances. In particular, $V_Q/\gmr^2$ is the variance of $q(\omega_k|B_k)$ when treating $B_k$ as the random variable, $V_P^{(k)}$ is the variance of $p(B_k)$ specified in \eqref{eq:ap_totVarP}; while $V_k$ is the variance of $\mathcal{P}_k(B_k)$ given by the recurrence relation \eqref{eq:ap_recursiveform_V} that must still be solved.

Although the recursive relation \eqref{eq:ap_recursiveform_V} (with $V_0 = \sigma_0^2$) admits a general solution, for simplicity we present only its form for the relevant setting, in which we do not possess any prior knowledge about the initial field $B_0$, i.e.~when $\sigma_0 \to \infty$ in \eqref{eq:B0_dist}. Then,
\begin{align}\label{eq:ap_solrec}
  \left.V_k\right|_{\sigma_0\to\infty} = \frac{1}{2 V_P}  + \frac{1}{2\gmr} \sqrt{V_P (4 V_Q + V_P \gmr^2)} \left(1 + \frac{2}{-1 + \left( \frac{2 V_Q + \gmr\left( V_P \gmr + \sqrt{V_P (4 V_Q + V_P \gmr^2)} \right)}{2 V_Q + \gmr \left(V_P \gmr - \sqrt{V_P (4 V_Q + V_P \gmr^2)} \right)}\right)^k} \right),
\end{align}
which---after substituting for $V_P=\frac{q_B}{2\chi} (1 - e^{-2 \chi \delta t})$ and $V_Q=\gamma_y/\delta t$ according to expressions \eqref{eq:ap_OUPvar} and \eqref{eq:ap_defVQ}, respectively, and taking the continuous-time limit $\delta t \to 0$ with $k=t/\delta t$---takes the form:
\begin{equation}\label{eq:limit_Vk}
	\lim_{\delta t\to0}\left\{\left.V_{k=t/\delta t}\right|_{\sigma_0\to\infty}\right\}
	 = \sqrt{\frac{\gamma_y \, q_B}{\gmr^2}}\, \coth\!\left( t \sqrt{\frac{q_B \gmr^2}{\gamma_y}}\right).
\end{equation}

Finally, by noting that the term $\gmr^2/V_Q = \gmr^2 \delta t/\gamma_y$ in \eqref{eq:ap_fishersimp} vanishes when letting $\delta t\to0$, and so does the term $1/V_P^{(k)}$ when $\sigma_0 \to \infty$ (see equation \eqref{eq:ap_totVarP}), we arrive at the FI of $\mathbb{P}_{B_t}(\vec{\omega}_{<t})$ defined within the continuous-time $\delta t \rightarrow 0$ limit as
\begin{equation}\label{eq:ap_ultimateFisher}
  \F\!\left[\mathbb{P}_{B_t}(\vec{\omega}_{<t}) \right]
  \;\underset{\sigma_0 \to \infty}{=}\;
  \frac{1}{\lim_{\delta t\to0}\left\{\left.V_{k=t/\delta t}\right|_{\sigma_0\to\infty}\right\}}
  = \sqrt{\frac{\gmr^2}{\gamma_y q_B}} \, \tanh\!\left( t \sqrt{\frac{q_B \gmr^2}{\gamma_y}}\right),
\end{equation}
which is the expression stated in equation \eqref{eq:CS_bound_tanh} of the main text.

\section{Effective dynamics averaged over the field fluctuations}
\label{sec:field_avg}
We use the results obtained above in \ref{sec:CS_decomposition}, in order to answer the question of what would the effective ensemble dynamics be, if one was \emph{not} to use inference techniques such as the Kalman filter in order to track the magnetic field in real time, but rather ignore and, hence, average over the field fluctuations.

For that purpose, let us ignore the impact of continuous measurement on the atomic ensemble, and focus on the ensemble dynamics dictated only by the unitary evolution:
\begin{equation}
\label{eq:unitary_ev}
  \frac{d \rho_t}{dt} = - i \gmr B_t \,[\hat{J}_y,\rho_t],
\end{equation}
where $B_t$ follows a Wiener process $dB_t = \sqrt{q_B} \, dW_t$, such that $\E[dW_t^2] = dt$. Given that the ensemble is initially prepared in a pure state $|\psi_0\rangle$, its state at time $t$ then reads
\begin{equation}
\label{eq:state@t}
  |\psi_t \rangle = e^{- i \gmr \hat{J}_y \int_{0}^{t} B_\tau \, d\tau  } |\psi_0\rangle.
\end{equation}
The integral of a Wiener process, $\int_{0}^{t}W(\tau)\,d\tau$ with $W(t):=\int_{0}^{t}\,dW_t$, constitutes a random variable distributed normally with zero mean and variance $t^3/3$~\cite{Gardiner1985}. Hence, upon defining
\begin{equation}
\label{eq:redefBt}
  Z_t : = \frac{\gmr}{t} \int_{0}^{t} B_\tau \, d\tau
\end{equation}
that satisfies then $Z_t\sim\mathcal{N}(0,\gmr^2 q_B t/3)$, we may rewrite \eref{eq:state@t} as
\begin{equation}\label{eq:rho_redefined}
   |\psi_t\rangle  = e^{- i \, Z_t \, t \, \hat{J}_y} \, |\psi_0\rangle,
\end{equation}
and the quantum state describing the atomic ensemble after averaging over the field fluctuations reads
\begin{align}\label{eq:average_overZt}
  \rho_t = \E_{Z_t}[|\psi_t\rangle \langle \psi_t|] = \E_{Z_t}\left[e^{- i \, Z_t \, t \, \hat{J}_y} \, |\psi_0\rangle \langle \psi_0| \, e^{i \, Z_t \, t \, \hat{J}_y}\right],
\end{align}
being averaged over all potential stochastic trajectories of the variable $Z_t$.

Now, inspecting Theorem \ref{sec:A_Theorem} and noticing that \Eref{eq:average_overZt} is just a special case of \Eref{eq:effective_map}, we may directly conclude from \eref{eq:VonNeumann1} that the average dynamics is described by the following master equation
\begin{equation}
\label{eq:mastereq_avgsigma}
  \frac{d\rho_t}{dt} = - \frac{1}{2} \Gamma(t) [\hat{J}_y,[\hat{J}_y, \rho_t]] = \Gamma(t) \mathcal{D}[\hat{J}_y]\rho_t,
\end{equation}
with the resulting decoherence rate given by
\begin{equation}
\label{eq:gamma}
  \Gamma(t) =\gmr^2 q_B \, t^2,
\end{equation}
which is obtained by substituting into \Eref{eq:effective_freq_decay_pars} $\sigma_t^2 = \gmr^2 q_B t / 3$, i.e.~the variance of the random variable $Z_t$.

The above short derivation demonstrates that averaging over field fluctuations---instead of following a single trajectory, as done by resorting to, e.g., the Kalman filter---effectively leads to a collective noise in the direction of magnetic field (here, in the eigenbasis of $\hat{J}_y$), whose decoherence rate \eref{eq:gamma}, however, increases quadratically with time, $\Gamma(t)\sim t^2$. This contrasts the collective noise model considered throughout the manuscript, whose decoherence rate is constant, $\gamma_y=1/T_2^*$ in \Eref{eq:cond_dyn}, being determined by the phenomenological (ensemble) spin-decoherence time $T_2^*$.



\newcommand{\newblock}{}
\bibliographystyle{myapsrev4-1}

\bibliography{noisy_mag}

\end{document}